\newcommand{\zerob} {{\bf 0}}
\newcommand{\nub} {{\boldsymbol{\nub}}}
\newcommand{\Gammamat} {{\boldsymbol{\it \Gamma}}}
\renewcommand{\Gammamat} {{\boldsymbol{\Gamma}}}
\renewcommand{\zerob}{\mathbf{0}}
\renewcommand{\d}{\mathrm{d}}
\newcommand{\epsilonb}{\boldsymbol{\epsilon}}
\newcommand{\BayesRisk}{r}
\newcommand{\priordensity}{\pi}
\newcommand{\priordensitypower}{\pi_m}
\newcommand{\EMestimate}{\hat{\vec{\theta}}}{}
{}
\newcommand{\GHscaleparam}{\phi}
\newcommand{\GHsqrtterm}{\sqrt{\{\omega\eta + (\vec{z} - \vec{\mu})' \vec{\Sigma}^{-1}(\vec{z} - \vec{\mu})\}(\omega/\eta + \vec{\gamma}'\vec{\Sigma}^{-1}\vec{\gamma})}}
\newcommand{\convergencecaption}[1]{EM NBE sequences (black dot-dash lines) and post-burn-in averaged sequences (red solid lines) for different initial values and different Monte Carlo sample sizes $m$, for the #1.}
\newcommand{\supptitle}{\LARGE Supplementary Material for ``\Paste{title}''}
\newcommand{\supplement}{%
	\setcounter{section}{0}%
	\renewcommand{\thesection}{S\arabic{section}}%
	\renewcommand{\theHsection}{supp.S\arabic{section}}%
	\setcounter{subsection}{0}%
	\setcounter{table}{0}%
	\renewcommand{\thetable}{S\arabic{table}}%
	\setcounter{figure}{0}%
	\renewcommand{\thefigure}{S\arabic{figure}}%
	\setcounter{equation}{0}%
	\renewcommand{\theequation}{S\arabic{equation}}%
	\setcounter{algorithm}{0}%
	\renewcommand{\thealgorithm}{S\arabic{algorithm}}%
}
\newcommand{\reff}[1]{\ifbool{arxiv}{\ref{#1}}{\ref*{#1}}}  
\newcommand{\eqreff}[1]{\ifbool{arxiv}{\eqref{#1}}{(\ref*{#1})}}  
\newcommand{\reffsupp}[1]{\reff{#1} of the Supplementary Material} 
\newcommand{\reffmain}[1]{\reff{#1} of the main text}
\newcommand{\eqreffmain}[1]{Equation~\eqreff{#1} of the main text} 
\setlist[enumerate,1]{label={(\roman*)}}
\definecolor{Red}{rgb}{0.5,0,0}
\definecolor{Blue}{rgb}{0,0,0.5}
\newcommand{\multiline}[1]{%
	\begin{tabularx}{\dimexpr\linewidth-\ALG@thistlm}[t]{@{}X@{}}
		#1
	\end{tabularx}
}
\newcommand{\proglang}[1]{\texttt{#1}}
\newcommand{\pkg}[1]{{\fontseries{b}\selectfont #1}}
\newcommand{\red}[1]{\textcolor{red}{#1}}
\def\mbf#1{{
		\mathchoice
		{\hbox{\boldmath$\displaystyle{#1}$}}
		{\hbox{\boldmath$\textstyle{#1}$}}
		{\hbox{\boldmath$\scriptstyle{#1}$}}
		{\hbox{\boldmath$\scriptscriptstyle{#1}$}}
}}
\def\vec{\mbf}
\def\d{\textrm{d}} 
\DeclareMathOperator*{\argmax}{arg\,max} 
\DeclareMathOperator*{\argmin}{arg\,min} 
\newcommand{\Gau}{{\text{Gau}}}
\newcommand{\Unif}[2]{{\text{Unif}}(#1, #2)}
\newcommand{\E}{\mathbb{E}} 
\newcommand{\var}{{\rm{var}}} 
\newcommand{\V}{\mathbb{V}} 
\newcommand{\Var}{\mathbb{V}} 
\newcommand{\cov}[2]{{\rm cov\!}\left(#1,\, #2\right)} 
\newcommand{\tp}{{\!\scriptscriptstyle \top}}
\newcommand{\ifootnote}[1]{{\ifthenelse{\isodd{2}}{\footnote{#1}}{}}}
\newcommand{\comment}[1]{{\ifthenelse{\isodd{1}}{\footnote{\red{#1}}}{}}}
\newenvironment{proof*}[1][\proofname]{\par
	\pushQED{\qed}%
	\normalfont \partopsep=\z@skip \topsep=\z@skip
	\trivlist
	\item[\hskip\labelsep
	\itshape
	#1\@addpunct{.}]\ignorespaces
}{%
	\popQED\endtrivlist\@endpefalse
}
\title{\LARGE \Copy{title}{Neural Parameter Estimation with Incomplete Data}} 
        \author{}}%
	\author[1]{Matthew Sainsbury-Dale}%
	\author[2,3]{Andrew Zammit-Mangion}%
	\author[3]{Noel Cressie}%
	\author[1]{Rapha\"el Huser}%
	\affil[1]{Statistics Program, Computer, Electrical and Mathematical Sciences and Engineering Division, King Abdullah University of Science and Technology (KAUST), Saudi Arabia}%
  \affil[2]{School of Mathematics and Statistics, University of New South Wales, Australia}%
	\affil[3]{School of Mathematics and Applied Statistics, University of Wollongong, Australia}%
\date{}
\begin{document}

\newtheorem{theorem}{Theorem}
\newtheorem{proposition}{Proposition}
\newtheorem{conjecture}{Conjecture}
\newtheorem{corollary}{Corollary}

\begin{singlespace}
\maketitle
\begin{abstract}

\noindent Advances in artificial intelligence (AI) and deep learning have led to neural networks being used to generate lightning-speed answers to complex science questions, paintings in the style of Monet, or stories like those of Twain. Leveraging their computational speed and flexibility, neural networks are also being used to facilitate fast, likelihood-free statistical inference. However, it is not straightforward to use neural networks with data that for various reasons are incomplete, which precludes their use in many applications. A recently proposed approach to remedy this issue uses an appropriately padded data vector and a vector that encodes the missingness pattern as input to a neural network. While computationally efficient, this ``masking'' approach is not robust to the missingness mechanism and can result in statistically inefficient inferences. Here, we propose an alternative approach that is based on the Monte Carlo expectation-maximization (EM) algorithm. Our EM approach is likelihood-free, substantially faster than the conventional EM algorithm as it does not require numerical optimization at each iteration, and more statistically efficient than the masking approach. This research addresses a prototypical problem that asks how improvements could be made in AI by introducing Bayesian statistical thinking. We compare the two approaches to missingness using simulated incomplete data from a variety of spatial models. The utility of the methodology is shown on Arctic sea-ice data, analyzed using a novel hidden Potts model with an intractable likelihood. 
  \\ 

\noindent \textbf{Keywords:} amortized inference, 
 likelihood-free inference, Monte Carlo EM algorithm, neural Bayes estimator, simulation-based inference 
\end{abstract}
\end{singlespace}

\begin{bibunit}[apalike] 

\section{Introduction}\label{sec:introduction}

Artificial intelligence (AI) and deep learning have spurred significant advancements in recent years, revolutionizing fields ranging from image recognition to natural language processing, and enabling transformative technologies like ChatGPT \citep{openai_2025_chatgpt}. These breakthroughs are profoundly influencing how we work, communicate, and create, leaving an indelible mark on society. In parallel, the last decade has seen growing interest in the adoption of neural networks for simulation-based inference, which is often used in statistical or physical models for which the likelihood function is unavailable or computationally intractable \citep{Diggle_1984_implicit_generative_models, Tavare_1997_ABC, Cranmer_2020_simulation-based_inference}. Neural networks are being used to approximate the likelihood function \citep[e.g.,][]{Papamakarios_2019}, the likelihood-to-evidence ratio \citep[e.g.,][]{Hermans_2020, Thomas_2022_ratio_estimation, Walchessen_2023_neural_likelihood_surfaces}, and the posterior distribution \citep[e.g.,][]{Greenberg_2019, Goncalves_2020, Radev_2022_BayesFlow}; see \citet{Zammit_2024_ARSIA} for a recent review. 
  In this work, we consider neural Bayes estimators \citep[NBEs;][]{Sainsbury-Dale_2022_neural_Bayes_estimators}, which are neural networks that map data to a point summary of the posterior distribution. These estimators are amortized, in the sense that, after an initial set-up cost, inference from observed data can be made in a fraction of the time required by conventional approaches.  
  NBEs have been used to make fast inference with models for population genetics \citep{Flagel_2018}, financial options \citep{Hernandez_2017, Horvath_2021, Coloma_Kleiber_2025}, cognitive processes \citep{Pan_2025}, point processes \citep{Lambe_2026}, spatial processes \citep{Gerber_Nychka_2021_NN_param_estimation, 
  Banesh_2021_neural_estimator_GP, 
  Lenzi_2021_NN_param_estimation, Richards_2023_censoring, Sainsbury-Dale_2022_neural_Bayes_estimators, Sainsbury-Dale_2023_GNNs, Tsyrulnikov_2024, 
  villazon_2025, Wang_2024_GSUN_NBEs}, and spatio-temporal processes \citep{delloro_gaetan_2025_spatiotemporal_NBEs}. 

NBEs represent a promising approach to inference, yet significant challenges remain that hinder their widespread adoption. One of the key challenges is their application to the often-encountered ``incomplete data'' setting, where the structure of the available data renders the use of conventional neural networks problematic. Consider, for instance, remote-sensing data collected over a regular grid. If all pixels are observed (i.e., no data are missing), one can readily construct an NBE of a geophysical parameter using a standard convolutional neural network (CNN). However, this parsimonious, efficient architecture cannot be used directly if data are missing (e.g., due to cloud cover). Similarly, in medical and health applications, incomplete data often arise in electronic health records, where patients may have missing clinical or demographic information, or in clinical trials, where participants may drop out or miss scheduled evaluations. In these scenarios, the application of NBEs requires novel methods to handle incomplete data in a statistically principled manner. 

A recently proposed method for handling missing data involves ``masking'' \citep{Wang_2022_neural_missing_data, gloeckler2024allinone}. In this approach, the neural network takes as input the data completed with missing elements replaced by zero (or some other fixed constant) and a binary vector encoding the missingness pattern. Although computationally efficient, the masking approach has some drawbacks. For instance, incorporating the missingness pattern as an additional input results in a more challenging learning task. Further, the approach necessitates a stochastic model for the missingness mechanism and, as we will show, misspecification of this model can lead to biased and suboptimal inference.

Statistical efficiency is as important as computational efficiency, and in this research we consider both in a prototype problem that indicates where improvements could be made in AI by introducing statistical inferential tools. Specifically, we propose to facilitate the use of NBEs in the presence of missing data by leveraging methods developed for incomplete-data problems, where the data are challenging to analyze directly but their analysis becomes tractable through appropriate data augmentation. Several algorithms have been developed to address such problems, among which the most prominent are the expectation-maximization (EM) algorithm for maximum-likelihood or maximum-a-posteriori (MAP) estimation \citep{Dempster_1977_EM_algorithm} and the data-augmentation algorithm for posterior sampling \citep{Tanner_Wong_1987}.  These approaches exploit likelihood functions or posterior distributions, respectively, that are intractable for the observed incomplete data but tractable under data augmentation. There is a strong parallel between the classical incomplete-data problem and that faced in neural inference: there are situations where neural-network architectures are complex or unavailable for observed incomplete data, but they are simpler, more parsimonious, and easier to train, under data augmentation. Note that in machine learning, ``data augmentation'' typically refers to methods that artificially increase the amount of data used to train neural networks; instead, we use it exclusively here to refer to the process of augmenting data with latent random variables. 

Building on this connection between classical and neural inference, we introduce an implementation of NBEs for incomplete data using data augmentation. Specifically, we develop a type of Monte Carlo EM (MCEM) algorithm \citep{Wei_Tanner_1990_Monte_Carlo_EM} where, after conditional simulation of the missing data, the remaining components of the usual E- and M-steps are obtained almost instantaneously with an NBE trained to approximate the MAP estimator. MAP estimators are prominent in areas as diverse as image analysis \citep{Hardie_2004} and plant breeding \citep{Montesinos-Lopez_2020}. 
 Our EM approach to neural Bayes estimation is likelihood-free, in the sense that it does not require evaluation of the likelihood function, and it does not require numerical optimization at each iteration, making it much faster than the classical EM algorithm. Critically, the NBE is trained on, and applied to, complete data only, which we show alleviates the drawbacks of the masking approach discussed above. Although our EM approach applies to a narrower class of models and is computationally slower than the masking approach, it provides stronger statistical guarantees; this trade-off clearly emerges in our simulation studies. Both methods have been incorporated into the user-friendly open-source software package \ifbool{blind}{\texttt{<redacted for anonymity>}}{\pkg{NeuralEstimators} \citep{NeuralEstimators}}, which is available in \proglang{Julia} and \proglang{R}.  

The remainder of this article is organized as follows. In Section~\ref{sec:methodology}, we review NBEs, discuss and give new insights into the masking approach of \cite{Wang_2022_neural_missing_data}, and introduce our EM approach. In Section~\ref{sec:simulationstudies}, we conduct simulation studies to investigate the strengths and weaknesses of these two approaches for dealing with missing data. In Section~\ref{sec:application}, we apply our methodology to an analysis of Arctic sea-ice data. Finally, in Section~\ref{sec:conclusion} we give conclusions and outline avenues for future research. Supplementary material is also available that contains additional theoretical details, simulations, and figures. Code that reproduces all results in the manuscript is available from \ifbool{blind}{\texttt{<redacted for anonymity>}}{\url{https://github.com/msainsburydale/NeuralIncompleteData}}. 

\section{Methodology}\label{sec:methodology}

In Section~\ref{sec:neuralBayesEstimators}, we review NBEs. (For a more comprehensive introduction, see \citeauthor{Sainsbury-Dale_2022_neural_Bayes_estimators}, \citeyear{Sainsbury-Dale_2022_neural_Bayes_estimators}, and \citeauthor{Zammit_2024_ARSIA}, \citeyear{Zammit_2024_ARSIA}.) In Section~\ref{sec:neuralmasking}, we present the masking approach for missing data and, in Section~\ref{sec:EMalgorithm}, we describe our novel approach based on MCEM. 

\subsection{Neural Bayes estimators}\label{sec:neuralBayesEstimators}

The goal of parametric point estimation is to estimate a $d$-dimensional parameter $\vec{\theta} \in \Theta$ from data $\vec{Z} \in \mathcal{Z}$ using an estimator, $\hat{\vec{\theta}} : \mathcal{Z}\to\Theta$. For ease of exposition, we let $\Theta \subseteq \mathbb{R}^d$ and $\mathcal{Z} \subseteq \mathbb{R}^n$, although the approaches we describe generalize to other spaces. A ubiquitous decision-theoretic approach to the construction of estimators is based on average-risk optimality (e.g., \citeauthor{Lehmann_Casella_1998_Point_Estimation}, \citeyear{Lehmann_Casella_1998_Point_Estimation}, Ch.~4; \citeauthor{Robert_2007_The_Bayesian_Choice}, \citeyear{Robert_2007_The_Bayesian_Choice}, Ch.~4). Consider a 
 loss function $L: \Theta \times \Theta \to [0, \infty)$ and, for ease of exposition, assume that the prior measure for $\vec{\theta}$ admits a density $\priordensity(\cdot)$ with respect to Lebesgue measure on $\mathbb{R}^d$. Then the Bayes risk of the estimator $\hat{\vec{\theta}}(\cdot)$ is  
\begin{equation}\label{eqn:BayesRisk}
 \BayesRisk(\hat{\vec{\theta}}(\cdot)) 
 \equiv \int_\Theta \int_{\mathcal{Z}}  L(\vec{\theta}, \hat{\vec{\theta}}(\vec{z}))p_{\vec{Z} \mid \vec{\theta}}(\vec{z} \mid \vec{\theta}) \priordensity(\vec{\theta}) \d \vec{z} \d \vec{\theta},
 \end{equation}  
 where here and throughout, for generic random quantities $A$ and $B$, we use $p_{A\mid B}(\cdot \mid \cdot)$ to denote the conditional probability density or mass function of $A$ given $B$, and $p_{A}(\cdot)$ for the corresponding marginal. When clear from context, we omit subscripts for brevity. A minimizer of \eqref{eqn:BayesRisk} is said to be a Bayes estimator with respect to $L(\cdot,\cdot)$ and $\priordensity(\cdot)$. 

Bayes estimators are functionals of the posterior distribution (e.g., the posterior mean under quadratic loss) and are often unavailable in closed form. However, since estimators are mappings from the sample space $\mathcal{Z}$ to the parameter space $\Theta$, Bayes estimators could, in principle, be approximated well by a sufficiently flexible function. Recently, motivated by universal-function-approximation theorems \citep[e.g.,][]{Hornik_1989_FNN_universal_approximation_theorem, Zhou_2020_universal_approximation_CNNs} and the speed at which they can be evaluated, neural networks have been used to approximate Bayes estimators \citep[see, e.g.,][Sec.~3.1]{Zammit_2024_ARSIA}. Let $\vec{f} : \mathcal{Z}\to\Theta$ denote a neural network parameterized by $\vec{\gamma}$, that is, 
\begin{equation}\label{eqn:NN}
\vec{f}(\vec{Z}; \vec{\gamma}) 
= (\vec{f}_{J} \circ \vec{f}_{J - 1} \circ \dots \circ \vec{f}_1)(\vec{Z}; \vec{\gamma}), \quad \vec{Z} \in \mathcal{Z}, 
\end{equation} 
 where $\vec{f}_j(\cdot \,; \vec{\gamma}_j)$, $j = 1, \dots, J$, are nonlinear functions parameterized by $\vec{\gamma}_j$, and `$\circ$' denotes function composition. Throughout, we use $\vec{\gamma} = (\vec{\gamma}_1^\tp, \dots, \vec{\gamma}_{J}^\tp)^\tp \in \Gamma$ to denote generic neural-network parameters, although the number of parameters varies depending on context. Then, a Bayes estimator may be approximated by substituting 
\begin{equation}\label{eqn:optimization_task}
\vec{\gamma}^*
\equiv 
\underset{\vec{\gamma} \in \Gamma}{\mathrm{arg\,min}} \;
\frac{1}{K} \sum_{k=1}^K L\{\vec{\theta}^{(k)}, \vec{f}(\vec{Z}^{(k)}; \vec{\gamma})\}
\end{equation} 
into~\eqref{eqn:NN} where,
independently for each $k$, $\vec{\theta}^{(k)} \sim \priordensity(\vec{\theta})$  and $\vec{Z}^{(k)} \sim p(\vec{Z} \mid  \vec{\theta}^{(k)})$. 
The process of performing the optimization task \eqref{eqn:optimization_task} on neural-network parameters $\vec{\gamma}$ given in \eqref{eqn:NN} is referred to as ``training the network'', and this can be done efficiently using back-propagation and stochastic gradient descent \citep{Goodfellow_2016_Deep_learning}. 

The trained neural network $\vec{f}(\cdot; \vec{\gamma}^*)$ minimizes the approximate Bayes risk, and therefore it is called an NBE. Once trained, an NBE can be applied repeatedly to data sets that are realizations from the statistical model used for training, at a fraction of the computational cost of conventional inferential methods. It is therefore ideal to use an NBE in settings where inference needs to be made repeatedly; in this case, the initial training cost is said to be \textit{amortized} over time. 

Although \eqref{eqn:optimization_task} is written for a fixed training set of size $K$, in practice one often uses ``on-the-fly'' simulation, continually generating fresh parameter--data pairs $\vec{\theta}^{(k)}$ and $\vec{Z}^{(k)}$ during training. In that regime, the number $K$ of such pairs need not be specified at all. However, with a fixed training set, $K$ becomes an important hyperparameter that typically needs to be increased with the dimension $d$ of $\vec{\theta}$ and the tail-heaviness of the statistical model \citep{Rodder_2025}. By contrast, when using parsimonious architectures that exploit the structure of the data (see below), the dimension $n$ of $\vec{Z}$ does not usually drive the choice of $K$, since parameter sharing and invariances in neural networks tend to reduce the effective dimensionality of the learning problem. 

When constructing an NBE, a central consideration in designing the neural-network architecture (i.e., the functional form of \eqref{eqn:NN}) is the underlying structure of the data. For example, if the data are gridded, a CNN is typically used, while for unstructured data, one typically adopts a classical multilayer perceptron (MLP). When the data are exchangeable, frameworks such as DeepSets \citep{Zaheer_2017_Deep_Sets} adapt these standard architectures in a manner that parsimoniously exploits exchangeability. However, beyond these general guidelines, the specific design of the network (e.g., depth, width, activation functions) and the choice of training-related hyperparameters (e.g., learning rate) typically require experimentation, tuning and, in some cases, automated search methods \citep[see, e.g.,][]{Elsken_2019}. While \citet{Rodder_2025} take important first steps in developing theory for the case of a single-layer perceptron applied to replicated data, extending such results to the richer classes of architectures commonly used in practice remains a challenge. Fortunately, once a suitable architecture is found, it often generalizes well across models, as demonstrated by our simulation experiments in Section~\ref{sec:simulationstudies}.

As discussed in Section~\ref{sec:introduction}, the standard architectures outlined above do not naturally cater for missing data, and this has limited the applicability of neural Bayes estimation. Next, we discuss two approaches that address this challenge.  
 
\subsection{The masking approach for missing data}\label{sec:neuralmasking}

Data are often incomplete, and hence inference on $\vec{\theta}$ is made using only a subset of $\vec{Z}$. For a given $\vec{Z} \equiv (Z_1, \dots, Z_n)^\tp$, we denote the subvectors of observed and missing elements as $\vec{Z}_1$ and $\vec{Z}_2$, respectively. We use $\mathcal{I}_1 \equiv \{i: Z_i \text{ is observed}\}$ to denote the ordered set of indices corresponding to the observed component, so that $\vec{Z}_1 \equiv (Z_i : i \in \mathcal{I}_1)^\tp$. 

The masking approach we present in this section closely follows that of \cite{Wang_2022_neural_missing_data}, who applied it in the context of approximate posterior inference. Their approach consists of first constructing a masked version of $\vec{Z}$, denoted by $\vec{U} \in 
 \mathcal{U} \subseteq \mathbb{R}^n$, 
 with components
\begin{align}
\begin{split}
  \vec{U}_1 &\equiv (U_i : i \in \mathcal{I}_1)^\tp  = \vec{Z}_1,  \\ 
  \vec{U}_2 &\equiv (U_i : i \in \mathcal{I}_2)^\tp  = c\vec{1},  
\end{split}\label{eqn:paddeddata}
\end{align}
where 
$\mathcal{I}_2 \equiv \{1, \dots, n\} \setminus \mathcal{I}_1$, 
$c \in \mathbb{R}$ is fixed (we set $c=0$ throughout), and $\vec{1}$ denotes a vector of 1s of appropriate dimension. Now, define a vector of indicator variables, $\vec{W} \in \mathcal{W} = \{0, 1\}^n$, as follows:
\begin{align}
\begin{split}
\vec{W} \equiv (\mathbb{I}(i\in \mathcal{I}_1) : i = 1, \dots, n)^\tp,
\end{split}\label{eqn:binarymask}
\end{align}
where $\mathbb{I}(\cdot)$ denotes the indicator function. 
While $\vec{Z}_1$ and $\mathcal{I}_1$ have a dimension that might vary across different observed data sets, the quantities $\vec{U}$ and $\vec{W}$ are each of fixed dimension $n$, which enables the use of parsimonious, efficient neural-network architectures. 

An NBE based on the masking approach is constructed by first defining 
\begin{equation}\label{eqn:NNmasking}
\vec{g}(\vec{U}, \vec{W}; \vec{\gamma}), 
\text{ for }\vec{U} \in \mathcal{U}, \; \vec{W} \in \mathcal{W}, 
\end{equation} 
where $\vec{g}(\cdot, \cdot \,; \vec{\gamma})$ is a neural network parameterized by $\vec{\gamma} \in \Gamma$. Then, substitute 
 \begin{equation}\label{eqn:optimization_task_encoding}
\vec{\gamma}^*
\equiv 
\underset{\vec{\gamma} \in \Gamma}{\mathrm{arg\,min}} \;
\sum_{k=1}^K L\big\{\vec{\theta}^{(k)}, \vec{g}(\vec{U}^{(k)}, \vec{W}^{(k)}; \vec{\gamma})\big\}
\end{equation}  
into \eqref{eqn:NNmasking}, where $\vec{\theta}^{(k)} \sim \priordensity(\vec{\theta})$ and, independently for each $k$, $\vec{U}^{(k)}$ and $\vec{W}^{(k)}$ are constructed by first sampling indices 
 $\mathcal{I}_1^{(k)}$ from a model $p(\mathcal{I}_1 \mid \vec{\theta}^{(k)})$ for the missingness mechanism, simulating complete data $\vec{Z}^{(k)} \sim p(\vec{Z} \mid \vec{\theta}^{(k)})$, subsetting $\vec{Z}_1^{(k)} \equiv (Z_i^{(k)} : i \in \mathcal{I}_1^{(k)})^\tp$,
    and substituting these quantities into~\eqref{eqn:paddeddata} and \eqref{eqn:binarymask}. 
  We note that for some models it may be possible to directly simulate incomplete data.
 Further, as discussed in Section~\ref{sec:neuralBayesEstimators}, rather than fixing $K$, one may simulate parameter--data pairs and missingness patterns on-the-fly. Once trained, the neural network can be used repeatedly to estimate parameters from new incomplete data sets. Algorithm~\ref{alg:one-hot} summarizes the approach, and Figure~\ref{fig:masking} illustrates its Estimation stage. 

\begin{algorithm}[!t]
\caption{The masking approach to neural Bayes estimation with missing data.}\label{alg:one-hot}
\begin{algorithmic}[1]
\smallskip
\Statex \hspace{-2em} \textbf{Training stage} (slow, to be done only once offline)
\Require{Prior $\priordensity(\vec{\theta})$, number of training samples $K$, probability models $p(\mathcal{I}_1 \mid \vec{\theta})$ and $p(\vec{Z} \mid \vec{\theta})$,  $c \in \mathbb{R}$ for use in~\eqref{eqn:paddeddata}, loss function $L: \Theta \times \Theta \to [0, \infty)$, neural network $\vec{g}: \mathcal{U} \times \mathcal{W} \to \Theta$ parameterized by $\vec{\gamma}$.}
\smallskip
    \For{$k = 1, \dots, K$}
    \State  Sample parameters $\vec{\theta}^{(k)} \sim \priordensity(\vec{\theta})$.
    \State  Simulate data $\vec{Z}^{(k)} \sim p(\vec{Z} \mid \vec{\theta}^{(k)})$. 
    \State  Sample indices $\mathcal{I}_1^{(k)} \sim p(\mathcal{I}_1 \mid \vec{\theta}^{(k)})$. 
 	  \State Subset $\vec{Z}_1^{(k)} \equiv (Z_i^{(k)} : i \in \mathcal{I}_1^{(k)})^\tp$. 
     \State Compute $\vec{U}^{(k)}$ using \eqref{eqn:paddeddata}. 
     \State Compute $\vec{W}^{(k)}$ using \eqref{eqn:binarymask}. 
    \EndFor
     \State Solve $\vec{\gamma}^* \equiv \mathrm{argmin}_\vec{\gamma} \;
\sum_{k=1}^K L\big\{\vec{\theta}^{(k)}, \vec{g}(\vec{U}^{(k)}, \vec{W}^{(k)}; \vec{\gamma})\big\}$ to obtain the masking NBE, $\vec{g}(\cdot, \cdot; \vec{\gamma}^*)$.
\bigskip
    \Statex  \hspace{-2em} \textbf{Estimation stage} (fast, repeatable for arbitrarily many observed data sets) 
    \setcounter{ALG@line}{0}
\Require{Observed data $\vec{Z}_1$ and $\mathcal{I}_1$, $c \in \mathbb{R}$ for use in \eqref{eqn:paddeddata}.} 
\State Compute $\vec{U}$ using \eqref{eqn:paddeddata}.
\State Compute $\vec{W}$ using \eqref{eqn:binarymask}.
\State Return $\vec{g}(\vec{U}, \vec{W}; \vec{\gamma}^*)$. 
\end{algorithmic}
\end{algorithm}

\begin{figure}[t!]
	\centering
	\includegraphics[width = 0.6\linewidth]{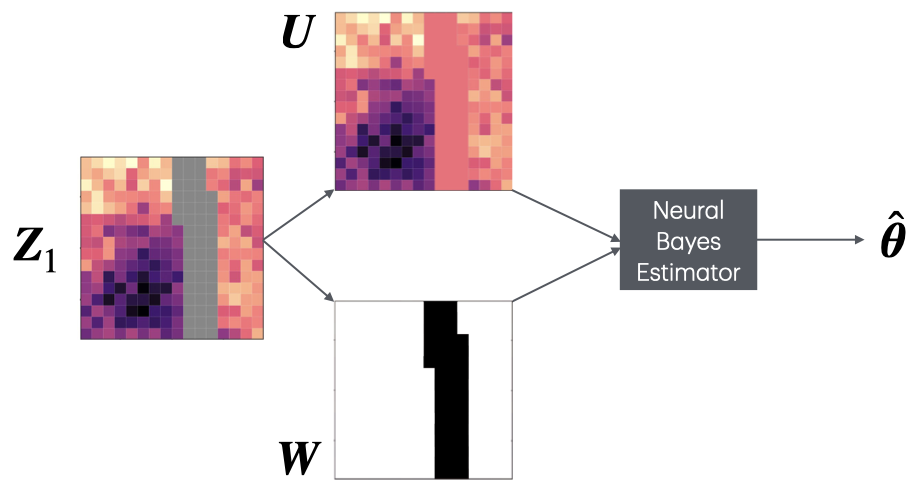}  
	\caption{The Estimation stage of Algorithm~\ref{alg:one-hot}. Observed data $\vec{Z}_1$, and the associated indices $\mathcal{I}_1$ (here, implicit) that identify which elements of $\vec{Z}$ are observed, are used to construct $\vec{U}$, a masked version of the complete data $\vec{Z}$ with missing entries replaced by a constant $c \in \mathbb{R}$, and $\vec{W}$, a vector of indicator variables that encode the missingness pattern. The encoded data $\vec{U}$ and $\vec{W}$ are then input to an NBE to obtain point estimates $\hat{\vec{\theta}}$ of a model parameter $\vec{\theta}$.}\label{fig:masking} 
\end{figure}

Algorithm~\ref{alg:one-hot} has a number of strengths. It only requires simulation from $p(\mathcal{I}_1 \mid \vec{\theta})$ and $p(\vec{Z} \mid \vec{\theta})$, and no information on $\vec{\theta}$ is lost by using $\vec{U}$ and $\vec{W}$ instead of $\vec{Z}_1$ and $\mathcal{I}_1$, since there is a one-to-one mapping between these quantities (see Theorem~\reff{thm:masking_likelihood} in Section~\reffsupp{app:likelihoodequivalence}). However, the approach has two drawbacks. First, the neural network must learn a mapping from $\mathcal{U} \times \mathcal{W}$ to $\Theta$. This learning task can be more challenging than learning a mapping from $\mathcal{Z}$ to $\Theta$, since an element in $\mathcal{U} \times \mathcal{W}$ is semi-discrete and of dimension $2n$. 
 Second, the quality of the fit is subject to a judicious choice of the missingness model. To see this, rewrite the Bayes risk~\eqref{eqn:BayesRisk} in terms of an estimator $\hat{\vec{\theta}}(\cdot, \cdot)$ treating $\mbf{U}$ and $\mbf{W}$ as the data:
\begin{equation}\label{eqn:risk_encoding}
 \BayesRisk(\hat{\vec{\theta}}(\cdot, \cdot)) \equiv \sum_{\vec{w} \in \mathcal{W}}\int_\Theta  \int_{\mathcal{U}} L(\vec{\theta}, \hat{\vec{\theta}}(\vec{u}, \vec{w}))p_{\vec{U}, \vec{W} \mid \vec{\theta}}(\vec{u}, \vec{w} \mid \vec{\theta}) \priordensity(\vec{\theta}) \d \vec{u}\d \vec{\theta}.
 \end{equation} 
It is a straightforward corollary of Theorem~\reff{thm:masking_likelihood} in Section~\reffsupp{app:likelihoodequivalence} and Bayesian sufficiency \citep[e.g., ][Ch.~2]{Cox_1974} that a Bayes estimator that minimizes \eqref{eqn:risk_encoding} also minimizes the Bayes risk defined in terms of $\vec{Z}_1$ and $\mathcal{I}_1$. Further, under the assumption that $\mathcal{I}_1$ is independent of $\vec{\theta}$, which is often reasonable in practice, a Bayes estimator that minimizes \eqref{eqn:risk_encoding} is invariant to distributions on $\mathcal{I}_1$ that are positive on the power set of $\{1, \dots, n\}$ (see Theorem~\reff{thm:invariance} in Section~\reffsupp{app:likelihoodequivalence}). However, in practice, the empirical analogue of~\eqref{eqn:risk_encoding}, given by the sum on the right-hand-side of \eqref{eqn:optimization_task_encoding}, is subject to Monte Carlo error that depends on the chosen distribution for $\mathcal{I}_1$. Therefore, the choice of distribution has practical implications on the approximation of the Bayes estimator obtained by substituting \eqref{eqn:optimization_task_encoding} into \eqref{eqn:NNmasking}, particularly in high-dimensional settings where the total number of possible missingness patterns ($2^n$) is large and the missingness mechanism is difficult to specify. In Section~\ref{sec:simulationstudies}, we show through simulation that selecting a distribution for $\mathcal{I}_1$ that assigns low probability to the observed missingness pattern can lead to statistically inefficient and biased estimators.   
  
To address these limitations, in this paper we propose an alternative statistical approach to neural Bayes estimation with incomplete data. Our approach does not require the missingness pattern to be an input to the neural network, or the specification of a missingness mechanism. We do this by embedding neural networks in a classical data-augmentation approach to solving incomplete-data problems, namely an MCEM algorithm. 

\subsection{The EM approach for missing data}\label{sec:EMalgorithm} 

In Section~\ref{sec:EMalgorithm:background}, we provide an overview of the classical EM algorithm and its Monte Carlo version. In Section~\ref{sec:EMalgorithm:general}, we outline the general structure of our proposed EM approach to neural Bayes estimation with incomplete data. In Section~\ref{sec:neuralMAPestimation}, we detail the construction of a neural approximation to the MAP estimator, which is needed for our EM approach. 

\subsubsection{The classical EM algorithm and its Monte Carlo version}\label{sec:EMalgorithm:background}

Recall that we use $\vec{Z}_1$ and $\vec{Z}_2$ to denote the subvectors of $\vec{Z}$ that are treated as observed and missing, respectively, and that we use \mbox{$\vec{Z}$} to denote the complete data. We also have available an ordered set of indices, $\mathcal{I}_1$, associated with $\vec{Z}_1$. However, since one does not need to construct a mask from these indices in our approach described below, we omit the explicit notation of these indices in this subsection. 

In the classical statistics literature, many algorithms have been developed based on the ``data augmentation principle'', which is applied when inference based on $\vec{Z}$ is easier than inference based only on $\vec{Z}_1$ 
\citep[see, e.g.,][]{Tanner_Wong_1987, 
vanDyk_2001}. 
 A popular approach to point estimation that follows from this principle is the EM algorithm \citep{Dempster_1977_EM_algorithm, Wu_1983_EM_algorithm, McLachlan_2008_EM_algorithm}, an iterative algorithm for estimating $\vec{\theta}$ with the $l$th iteration given by 
 \begin{equation}\label{eqn:EM}
\EMestimate^{(l)} = M(\EMestimate^{(l-1)}) \equiv 
\argmax_{\vec{\theta} \in \Theta} \, \Big\{ \log\priordensity(\vec{\theta}) + \E_{\vec{Z}_2 \mid \vec{Z}_1, \EMestimate^{(l-1)}}\ell(\vec{\theta}; \vec{Z}_1, \vec{Z}_2)\Big\};\quad  l = 1, 2, \dots, 
 \end{equation}
 where $\priordensity(\vec{\theta})$ denotes a prior density (sometimes referred to as a penalty function) and $\ell(\vec{\theta}; \vec{Z}_1, \vec{Z}_2) \equiv  \log p(\vec{Z} \mid \vec{\theta})$ denotes the complete-data log-likelihood.
 The EM algorithm satisfies the ascent property, which means that the incomplete-data posterior density $p(\vec{\theta} \mid \vec{Z}_1)$ is non-decreasing at each iteration. Thus, under mild conditions \citep[][Ch.~3]{Boyles_1983, Wu_1983_EM_algorithm, McLachlan_2008_EM_algorithm}, it yields a local maximizer of 
 $p(\vec{\theta} \mid \vec{Z}_1)$.
  For the special case of $\priordensity(\vec{\theta}) \propto 1$, 
  the EM algorithm increases the incomplete-data likelihood $p(\vec{Z}_1 \mid \vec{\theta})$ at each iteration and, therefore, yields a local maximizer of 
  $p(\vec{Z}_1 \mid \vec{\theta})$. 
 
When the conditional expectation in~\eqref{eqn:EM} is intractable but conditional simulation is feasible, one often adopts a Monte Carlo version of the EM algorithm, MCEM \citep{Wei_Tanner_1990_Monte_Carlo_EM, Ruth_2024}, 
which has as the $l$th iteration,  
\begin{equation}\label{eqn:MCEM}
     \EMestimate^{(l)} = M_m(\vec{\theta}^{(l-1)}) \equiv 
     \argmax_{\vec{\theta} \in \Theta} \bigg\{\log \priordensity(\vec{\theta}) + \frac{1}{m}\sum_{j = 1}^m \ell(\vec{\theta};  \vec{Z}_1,  \vec{Z}_2^{(l,j)}) \bigg\},
\end{equation}
where $\{\vec{Z}_2^{(l,j)}\}_{j=1}^m$ are simulated from the conditional distribution of $\vec{Z}_2 \mid \vec{Z}_1, \EMestimate^{(l-1)}$.
The frequentist MCEM algorithm is recovered as a special case of~\eqref{eqn:MCEM} by taking $\priordensity(\vec{\theta}) \propto 1$.

A practical challenge in implementing the MCEM algorithm is choosing the Monte Carlo sample size $m$. One way to tackle this challenge is through averaging \citep[][Sec.~11.1.2.2]{Fort_Moulines_2003, Cappe_2005}, where convergence assessment and final estimates are based on the averaged subsequence, 
\begin{equation}\label{eqn:MCEMaverage}
\bar{\vec{\theta}}^{(l)}_b \equiv \frac{1}{l - b}\sum_{t = b + 1}^{l} \EMestimate^{(t)},
\end{equation}
where $b < l$ denotes a burn-in chosen such that $\{\EMestimate^{(t)} : t = b + 1, \dots, l\}$ is approximately stationary. This averaging greatly reduces sensitivity to the choice of $m$, as Monte Carlo variability is effectively ``averaged out'' over multiple iterations. It is also formally motivated by the result of \citet{Chan_1995}, that for large $m$, an MCEM sequence starting in a small neighborhood of a local maximizer of $p(\vec{\theta} \mid \vec{Z}_1)$ can be approximated by a stationary first-order autoregressive (AR(1)) process centered at that maximizer. When employing averaging, the main consideration in choosing $m$ is ensuring that the iterates become stable, in the sense that they eventually remain centered around a single point.
 Since the Monte Carlo variability is driven by the conditional variance $\V_{\vec{Z}_2 \mid \vec{Z}_1, \vec{\theta}} \{\ell(\vec{\theta}; \vec{Z}_1, \vec{Z}_2)\}$, which typically increases with both the tail-heaviness of the model and the proportion of missing data, $m$ should be increased with both of these quantities whenever stability is a concern.

A drawback of the (MC)EM algorithm is that it can be slow since it is ``doubly iterative'' in the typical case where each maximization step requires numerical optimization. Although a substantial amount of work has been devoted to trying to speed up the algorithm, primarily by accelerating its rate of convergence \citep[e.g.,][]{Louis_1982, Meng_1993, Liu_1994, Jamshidian_1997, Liu_1998_PX-EM, Neal_1998, Varadhan_2008, Lewandowski_2010_PX-EM_review}, computational speed remains a practical limitation in many settings. Further, although the MCEM algorithm bypasses the conditional expectation in~\eqref{eqn:EM}, it still requires evaluation of the complete-data log-likelihood function, which is not always possible. Next, we describe how the MCEM algorithm can incorporate NBEs with incomplete data in a manner that is not subject to these limitations. 

\subsubsection{NBEs and approximate MCEM algorithms}\label{sec:EMalgorithm:general}
 
Our EM approach to using NBEs with incomplete data is predicated on the fact that~\eqref{eqn:MCEM} is equivalent to,
\begin{equation}\label{eqn:EM:MAP}
  	\EMestimate^{(l)} = M_m(\EMestimate^{(l-1)}) = 
    \argmax_{\vec{\theta} \in \Theta} \bigg\{\log\priordensitypower(\vec{\theta}) + \sum_{j = 1}^m \ell(\vec{\theta};  \vec{Z}_1,  \vec{Z}_2^{(l,j)}) \bigg\},
\end{equation}
where the probability density $\priordensitypower(\vec{\theta}) \propto \{\priordensity(\vec{\theta})\}^m$. The reformulation of~\eqref{eqn:MCEM} as~\eqref{eqn:EM:MAP} shows that the conventional MCEM update $M_m(\vec{\theta})$ is a MAP estimate under a modified prior distribution, which can be approximated by an NBE (denoted by $\vec{h}(\cdot)$ in \eqref{eqn:DeepSets} and fitted according to \eqref{eqn:optimization_task:adjustedprior_bayes} in Section~\ref{sec:neuralMAPestimation}). This leads to our proposed EM approach (Algorithm~\ref{alg:neuralEM}), a fast version of the MCEM algorithm that does not require the evaluation of any likelihood functions. Figure~\ref{fig:neuralEM} illustrates the Estimation stage of the algorithm. After conditional simulation, the update~\eqref{eqn:EM:MAP} is obtained in a fraction of a second. Importantly, since the incomplete data are completed by conditional simulation, the NBE is applied to complete-data vectors only. Therefore, our proposed algorithm does not require a model for the missingness mechanism.

\begin{algorithm}[!t]
  \caption{The EM approach to neural Bayes estimation with missing data.}\label{alg:neuralEM}
  \begin{algorithmic}[1]
\smallskip
    \Statex \hspace{-2em} \textbf{Training stage} (slow, to be done only once offline)
\Require{Prior $\priordensity(\vec{\theta})$, number of training samples $K$, number of Monte Carlo samples $m$ used in the Estimation stage, probability model $p(\vec{Z} \mid \vec{\theta})$, neural network $\vec{h}: \mathcal{Z}^m \to \Theta$ parameterized by $\vec{\gamma}$.} 
	\smallskip
	 \For{$k = 1, \dots, K$}
    \State Sample parameters $\vec{\theta}^{(k)} \sim \priordensity(\vec{\theta})$. 
    \State Simulate data $\vec{Z}^{(k,j)}  \sim p(\vec{Z} \mid \vec{\theta}^{(k)})$ for $j = 1, \dots, m$. 
    \EndFor
    \State 
    Solve $\vec{\gamma}^* \equiv \mathrm{argmin}_\vec{\gamma} \; \sum_{k=1}^K \{\priordensity(\vec{\theta}^{(k)})\}^{m-1}L\big\{\vec{\theta}^{(k)}, \vec{h}(\{\vec{Z}^{(k,j)}\}_{j=1}^m; \vec{\gamma})\big\}$ (see~\eqref{eqn:optimization_task:adjustedprior_bayes}) with $L(\cdot, \cdot)$ a continuous and almost-everywhere differentiable approximation of the 0--1 loss function (e.g., \eqref{eqn:surrogateloss}), to obtain the NBE, $\vec{h}(\cdot \,; \vec{\gamma}^*)$, that approximates the MAP estimator. 
\medskip
    \Statex \hspace{-2em} \textbf{Estimation stage} (fast, repeatable for arbitrarily many observed data sets) 
    \setcounter{ALG@line}{0}
\Require{Observed data $\vec{Z}_1$, number of Monte Carlo samples $m$, initial estimates $\EMestimate^{(0)}$, convergence criterion, burn-in $b$, maximum number of iterations, 
algorithm to simulate from the conditional distribution $p(\vec{Z}_2 \mid \vec{Z}_1, \vec{\theta})$.} 
\smallskip
    \State  Set $l = 0$.
    \Repeat
    \State Set $l = l + 1$.
    \State  \multiline{Simulate missing data $\vec{Z}_2^{(l,j)} \sim p(\vec{Z}_2 \mid \vec{Z}_1, \EMestimate^{(l-1)})$, $j = 1, \dots, m$, 
    resulting in $m$ conditionally independent replicates of the completed data, $\{\vec{Z}^{(l,j)}\}_{j=1}^m$.}
    \State Update $\EMestimate^{(l)} = \vec{h}(\{\vec{Z}^{(l,j)}\}_{j=1}^m; \vec{\gamma}^*)$. 
    \If{$l > b$}
      \State Compute the post--burn-in mean $\bar{\vec{\theta}}^{(l)}_b  = \frac{1}{\,l - b\,} \sum_{t = b+1}^{l} \EMestimate^{(t)}$.
      \State Check convergence of $\bar{\vec{\theta}}^{(l)}_b$ according to the specified criterion.  
    \EndIf
    \Until{converged or maximum number of iterations reached.}
    \State Return $\bar{\vec{\theta}}^{(l)}_b$. 
  \end{algorithmic}
\end{algorithm}

\begin{figure}[t!]
	\centering
	\includegraphics[width = \linewidth]{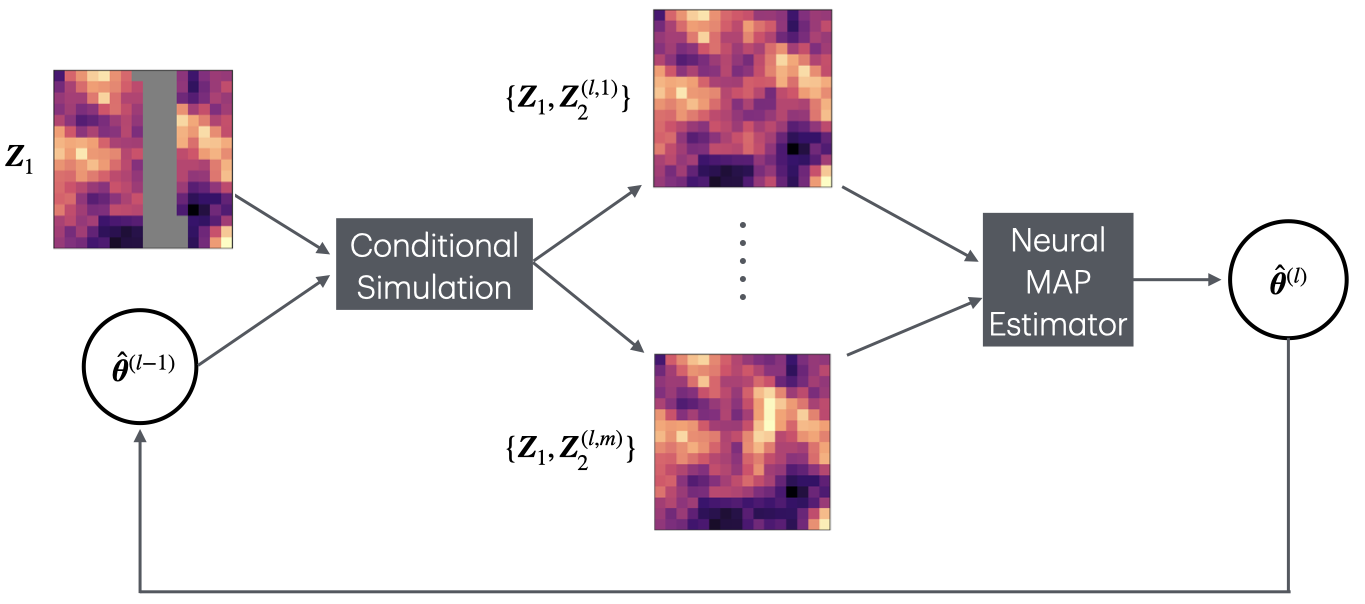}  
	\caption{The Estimation stage of Algorithm~\ref{alg:neuralEM}. Incomplete data $\vec{Z}_1$ with missing entries are completed by conditional simulation using the previous parameter estimate $\EMestimate^{(l-1)}$ of model parameter $\vec{\theta}$. The $m$ conditionally-independent replicates are then input to an NBE trained to approximate the MAP estimator. 
  The parameter estimate $\EMestimate^{(l)}$ is then used for conditional simulation in the next iteration of the algorithm.}\label{fig:neuralEM} 
\end{figure}

Algorithm~\ref{alg:neuralEM} is ideally suited when simulations can be carried out from the distributions of $\vec{Z} \mid \vec{\theta}$ and $\vec{Z}_2 \mid \vec{Z}_1, \vec{\theta}$, and when the incomplete-data and complete-data likelihoods are unavailable or computationally intractable. Statistical models that have these properties include (hidden) Markov random fields (e.g., \citeauthor{Besag_1974}, \citeyear{Besag_1974}; \citeauthor{Rue_Held_2005_GMRF:TaA}, \citeyear{Rue_Held_2005_GMRF:TaA}), such as the (hidden) \citet{Potts_1952} model considered in Section~\ref{sec:Potts}, the autologistic model \citep{Besag_1972}, and other auto-models proposed by \citet{Besag_1974}. These models have intractable likelihoods due to the computational complexity of the required normalizing constants. However, marginal and conditional simulation is feasible through Gibbs sampling. 

 Algorithm~\ref{alg:neuralEM} is equivalent to the conventional MCEM algorithm, up to the error introduced by using an NBE to approximate the conventional update $M_m(\cdot)$. Indeed, it is one member of a broader family of approximate MCEM algorithms that may be expressed in the following general form:
\begin{equation}\label{eqn:approxMCEMupdate}
  \EMestimate^{(l)} = M_m(\EMestimate^{(l-1)}) + \vec{\delta}_{m}(\EMestimate^{(l-1)}), \quad l = 1, 2, \dots,
\end{equation}
where $\vec{\delta}_{m}(\cdot)$ denotes (random) approximation error. In Section~\reffsupp{sec:supp:convergenceofapproximateMCEM}, we derive the asymptotic behavior of the sequence~\eqref{eqn:approxMCEMupdate}. For instance, for large $m$ and with mean-zero approximation error, the sequence~\eqref{eqn:approxMCEMupdate} behaves approximately as an inhomogoneous nonlinear AR(1) process centered on the EM update~\eqref{eqn:EM}. Further, if $\vec{\theta}^*$ is an isolated (local) maximizer of the incomplete data posterior density $p(\vec{\theta} \mid \vec{Z}_1)$, then a sequence~\eqref{eqn:approxMCEMupdate} starting in the vicinity of $\vec{\theta}^*$ behaves approximately as a stationary AR(1) process with mean $\vec{\theta}^*$. These results provide a justification for Algorithm~\ref{alg:neuralEM} and motivate its use with the averaging procedure in \eqref{eqn:MCEMaverage}.
 
 Finally, most of the considerations regarding the choice of Monte Carlo sample size $m$ and the convergence criterion carry over directly from the conventional MCEM algorithm to Algorithm~\ref{alg:neuralEM}. In particular, we employ the averaging prodedure~\eqref{eqn:MCEMaverage} to improve robustness to the choice of $m$; and we adopt the convergence criterion of~\cite{Booth_1999_Monte_Carlo_EM} that terminates the algorithm when the relative change in the averaged estimates falls below a specified tolerance 
 for a number of 
 successive iterations. 
 Details on our specific choices for $m$, burn-in $b$, and convergence threshold are provided in Section~\ref{sec:generalsetting}.

\subsubsection{Neural MAP estimation}\label{sec:neuralMAPestimation}

Algorithm~\ref{alg:neuralEM} hinges on the capacity to approximate the MAP estimator using an NBE. Here, we outline how this can be 
 done, 
and address additional practical considerations. 

\paragraph{Approximating the MAP estimator.} For continuous parameter spaces, the MAP estimator is, under mild conditions \citep{Bassett_2019_MAP_and_Bayes_estimators}, the limit as $\epsilon \to 0$ of the Bayes estimators associated with the 0--1 loss function 
\begin{equation}\label{eqn:0-1loss}
L(\vec{\theta}, \hat{\vec{\theta}}) = \mathbb{I}(\| \vec{\theta} - \hat{\vec{\theta}} \| > \epsilon), 
\end{equation}
where $\|\cdot\|$ denotes any norm in $\mathbb{R}^d$ \citep[][pg.~166]{Robert_2007_The_Bayesian_Choice}, which in our work we choose to be the Euclidean norm. 
  However, \eqref{eqn:0-1loss} is not amenable to gradient-based methods for solving~\eqref{eqn:optimization_task}, and therefore it cannot be used to construct an NBE that approximates the MAP estimator. 
 This challenge may be circumvented by noting that, under suitable regularity conditions, for large $m$ and uniform prior $\pi(\vec{\theta})$, the objective in~\eqref{eqn:EM:MAP} tends to the logarithm of an unnormalized Gaussian density (Bernstein--von Mises theorem; see, e.g., \citeauthor{vanDerVaart_1998}, \citeyear{vanDerVaart_1998}, pg.~140--141). 
 Since the mean of a Gaussian random variable is also its mode, one could therefore choose a quadratic loss function instead of \eqref{eqn:0-1loss} under a uniform prior. For robustness reasons, we prefer to adopt a mathematically convenient surrogate for \eqref{eqn:0-1loss} with derivatives that are continuous almost everywhere; for example, 
    \begin{equation}\label{eqn:surrogateloss}
 L(\vec{\theta}, \hat{\vec{\theta}}; \kappa) 
 \equiv \rm{tanh}(\|\hat{\vec{\theta}} - \vec{\theta}\|/\kappa),
   \quad \kappa > 0,
 \end{equation} 
where $\rm{tanh}(\cdot)$ denotes the hyperbolic tangent function, yields the 0--1 loss function in the limit as $\kappa \to 0$ (see 
Section~\reffsupp{sec:approximations01loss}, 
where we also discuss alternative loss functions). Therefore, instead of~\eqref{eqn:0-1loss} for some $\epsilon$ close to 0, we use~\eqref{eqn:surrogateloss} for some $\kappa$ close to 0. 

\paragraph{Neural-network architecture for handling $m$ conditionally independent replicates.}

In Algorithm~\ref{alg:neuralEM}, the update \eqref{eqn:EM:MAP} involves neural Bayes estimation based on a set of $m$ completed data vectors. 
 A suitable neural-network architecture for these data that ensures scalability with respect to $m$ is DeepSets \citep{Zaheer_2017_Deep_Sets}: 
\begin{equation}\label{eqn:DeepSets} 
\vec{h}(\{\vec{Z}^{(j)}\}_{j=1}^m\,; \vec{\gamma}) = \vec{\phi}\Big\{\frac{1}{m} \sum_{j=1}^m\vec{\psi}(\vec{Z}^{(j)}; \vec{\gamma}_{\psi}); \, \vec{\gamma}_{\phi}\Big\},
\end{equation}
where $\vec{Z}^{(j)} \in \mathcal{Z}$ for $j = 1, \dots, m$, 
$\vec{\psi}(\cdot; \vec{\gamma}_{\psi})$ is a neural network whose architecture depends on the structure of the data, 
 \mbox{$\vec{\phi}(\cdot; \vec{\gamma}_{\phi})$} is an MLP, and the neural-network parameters are $\vec{\gamma} = (\vec{\gamma}_{\psi}^\tp, \vec{\gamma}_{\phi}^\tp)^\tp$. The representation \eqref{eqn:DeepSets} has several motivations. 
First, when the data are exchangeable, the MAP estimator is invariant to permutations of the data, and estimators constructed from~\eqref{eqn:DeepSets} are guaranteed to exhibit this property. Second, under certain conditions, \eqref{eqn:DeepSets} is a universal approximator for continuously differentiable permutation-invariant functions; 
 therefore, any MAP estimator that is a continuously differentiable function of the data can be approximated arbitrarily well by an estimator of the form \eqref{eqn:DeepSets}. Third, \eqref{eqn:DeepSets} may be used with any value of $m$. \citet{Sainsbury-Dale_2022_neural_Bayes_estimators} give further details on the motivation and use of \eqref{eqn:DeepSets} in the general setting of neural Bayes estimation. 

 \paragraph{Accounting for the concentrated probability density $\priordensitypower(\vec{\theta}) \propto \{\priordensity(\vec{\theta})\}^m$.}
The update~\eqref{eqn:EM:MAP} involves $\priordensitypower(\vec{\theta}) \propto \{\priordensity(\vec{\theta})\}^m$, the prior raised to the power of $m$. \Copy{PriorRemainsTheSame}{Note that the prior remains $\priordensity(\vec{\theta})$, since~\eqref{eqn:MCEM} and~\eqref{eqn:EM:MAP} are equivalent.} There are at least two ways in which this distribution can be accounted for. In the first way, one may sample directly from $\priordensitypower(\vec{\theta})$ when constructing the set $\{\vec{\theta}^{(k)}\}_{k=1}^K$, and then train the network as usual. In the second way, one may sample from $\priordensity(\vec{\theta})$ and adjust the objective function through an importance sampling scheme in which samples drawn from $\priordensity(\vec{\theta})$ are reweighted to approximate expectations under $\priordensitypower(\vec{\theta})$. When constructing an NBE under the loss function~\eqref{eqn:surrogateloss}, the optimization task becomes
\begin{equation}\label{eqn:optimization_task:adjustedprior_bayes}
\vec{\gamma}^*
\equiv 
\underset{\vec{\gamma} \in \Gamma}{\mathrm{arg\,min}} \;
\frac{1}{K} \sum_{k=1}^K \{\priordensity(\vec{\theta}^{(k)})\}^{m-1} L\big\{\vec{\theta}^{(k)}, \vec{h}(\{\vec{Z}^{(k,j)}\}_{j=1}^m; \vec{\gamma}); \kappa\big\},
\end{equation} 
where, independently for each $k$, $\vec{\theta}^{(k)} \sim \priordensity(\vec{\theta})$ and $\{\vec{Z}^{(k,j)}\}_{j=1}^m$ are simulated from $p(\vec{Z} \mid \vec{\theta}^{(k)})$.
The choice of how to account for $\priordensitypower(\vec{\theta})$ depends on whether it is easier to sample from $\priordensitypower(\vec{\theta})$ or to evaluate $\priordensity(\vec{\theta})$. Since it is usually straightforward to evaluate $\priordensity(\vec{\theta})$, we present Algorithm~\ref{alg:neuralEM} in terms of~\eqref{eqn:optimization_task:adjustedprior_bayes}. Finally, when $\priordensity(\vec{\theta}) \propto 1$, the posterior is identical under $\priordensitypower(\vec{\theta})$ and $\priordensity(\vec{\theta})$, and no change to the workflow is required.

\section{Simulation studies}\label{sec:simulationstudies}

We now conduct simulation studies to investigate the strengths and weaknesses of the masking approach (Algorithm~\ref{alg:one-hot}) compared to our EM approach (Algorithm~\ref{alg:neuralEM}), both developed for estimation in the presence of missing data. We refer to an NBE employing the masking approach as a ``Masking NBE'', and an NBE employing our EM approach as an ``EM NBE''. In Section~\ref{sec:generalsetting}, we outline the general setting. In Section~\ref{sec:GP}, we consider a spatial Gaussian-process model and estimate its parameters. In Section~\ref{sec:Potts}, we examine the hidden \citet{Potts_1952} model.

\subsection{General setting}\label{sec:generalsetting}

We conduct our experiments using functionality we have added to the package \ifbool{blind}{\texttt{<redacted for anonymity>}}{\pkg{NeuralEstimators} \citep{NeuralEstimators}}, which is available in \proglang{Julia} and \proglang{R}. We use a workstation with an AMD EPYC 7402 3.00GHz CPU with 128 GB of CPU RAM, and a Nvidia Quadro RTX 6000 GPU with 24 GB of GPU RAM. All subsequent results can be generated using reproducible code at \ifbool{blind}{\texttt{<redacted for anonymity>}}{\url{https://github.com/msainsburydale/NeuralIncompleteData}}. 

To elucidate the differences between the masking approach (Algorithm~\ref{alg:one-hot}) and our proposed EM approach (Algorithm~\ref{alg:neuralEM}), which are greater with data that are high-dimensional, our simulation studies consider spatial models where the data are observed incompletely over a regular grid of size $n = 64^2 = 4096$, and we therefore use a CNN-based architecture for \eqref{eqn:NNmasking} and \eqref{eqn:DeepSets} detailed in Section~\reffsupp{sec:ensemble}. There, we also illustrate the benefits of using an ensemble 
\citep{Hansen_Salamon_1990} of neural networks in the context of neural Bayes estimation; 
 throughout our experiments we use an ensemble of five NBEs for both the masking approach and the EM approach. 

We train our NBEs under the pre-limiting 0--1 loss function \eqref{eqn:surrogateloss}, with $\kappa = 0.1$ and $K=25000$ in both~\eqref{eqn:optimization_task_encoding} (Masking) and~\eqref{eqn:optimization_task:adjustedprior_bayes} (EM). To avoid vanishing gradients early in training, we first pretrain each NBE using the mean-absolute-error loss function before switching to~\eqref{eqn:surrogateloss}. During training, we use the default values of \ifbool{blind}{\texttt{<redacted for anonymity>}}{\pkg{NeuralEstimators}}; specifically, we utilize the Adam optimizer \citep{Kingma_2014_ADAM} with an initial learning rate of 0.0005 and a cosine-annealing learning-rate schedule \citep{Loshchilov_2017_SGDR}. We cease training when the objective function in~\eqref{eqn:optimization_task_encoding} or~\eqref{eqn:optimization_task:adjustedprior_bayes} has not decreased in five consecutive epochs, where an epoch is defined to be one complete pass through the entire training data set when doing stochastic gradient descent to decrease the objective functions. When training the Masking NBEs, we use a missing-completely-at-random (MCAR) model for the missingness mechanism, with the percentage of missing data varying uniformly between 10\% and 50\% across data sets. 
Figure~\reffsupp{fig:risk_profile} shows 
 the value of the objective function in~\eqref{eqn:optimization_task_encoding} or~\eqref{eqn:optimization_task:adjustedprior_bayes} evaluated at the end of each epoch.

Post training, the estimators' statistical efficiencies are compared on 
 unseen test data under both MCAR missingness and a model for the missingness where data are missing in a contiguous block (MICB). 
  This is done to assess the Masking NBEs under the correct and an incorrect specification of the missingness mechanism. 
  We compute empirical root-mean-squared errors (RMSEs) based on simulated data using a new set of 1000 parameter vectors sampled from the prior. For the parameter vectors $\{\vec{\theta}^{(1)}, \dots, \vec{\theta}^{(1000)}\}$, 
\begin{equation}\label{eqn:RMSE}
\text{RMSE}(\hat{\vec{\theta}}(\cdot)) = \bigg\{\frac{1}{1000} \sum_{j=1}^{1000} \| \hat{\vec{\theta}}^{(j)} - \vec{\theta}^{(j)}\|^2\bigg\}^{1/2},
\end{equation}
where, for $j = 1, \dots, 1000$, $\hat{\vec{\theta}}^{(j)}$ is the corresponding estimate using the estimator $\hat{\vec{\theta}}(\cdot)$ from incomplete data $\vec{Z}_1^{(j)} \sim p(\vec{Z}_1 \mid \vec{\theta}^{(j)})$, and recall that $\|\cdot\|$ denotes the Euclidean norm. 
 We use RMSE$_{\text{MCAR}}$ and RMSE$_{\text{MICB}}$ to denote the RMSE of an estimator based on incomplete data simulated under the MCAR and MICB mechanisms, respectively. 
 For the EM NBEs, we set $m=30$ in~\eqref{eqn:EM:MAP}; we use the mean of the prior distribution for the initial estimates, $\EMestimate^{(0)}$; we use a fixed burn-in of $b=5$ in~\eqref{eqn:MCEMaverage}; and we stop the algorithm after 50 iterations or if the maximum elementwise relative change in $\bar{\vec{\theta}}^{(l)}_{b}$, as defined in \eqref{eqn:MCEMaverage}, is less than $0.001$ for three consecutive values of $l$. 

\subsection{Gaussian process model}\label{sec:GP}

In this simulation study, we consider a spatial Gaussian process model, where $\vec{Z} \equiv (Z_{1}, \dots, Z_{n})^\tp$ are data at locations \mbox{$\{\vec{s}_{1}, \dots, \vec{s}_{n}\}$} in a spatial domain $\mathcal{D} \subseteq \mathbb{R}^2$. The data are modeled as spatially-correlated mean-zero Gaussian random variables with Mat\'{e}rn covariance function given by
 \begin{equation}\label{eqn:Matern_covariance_function}
\cov{Z_i}{Z_j} = 
 \sigma^2 \frac{2^{1 - \nu}}{\Gamma(\nu)} \left(\frac{\|\vec{s}_i - \vec{s}_j\|}{\rho}\right)^\nu K_\nu\!\left(\frac{\|\vec{s}_i - \vec{s}_j\|}{\rho}\right) + \tau^2\mathbb{I}(i = j); \quad i, j = 1, \dots, n,
 \end{equation} 
where $\sigma^2$ is a variance parameter; $\Gamma(\cdot)$ is the gamma function; $K_\nu(\cdot)$ is the modified Bessel function of the second kind of order $\nu$; $\rho > 0$ and $\nu > 0$ are range and smoothness parameters, respectively; and $\tau^2$ is a fine-scale variance parameter. 

In this example, we take the spatial domain to be $\mathcal{D} \equiv [0, 1] \times [0, 1]$, and we simulate complete data on a regular square grid of size $n = 64^2 = 4096$. For computational tractability, we use the package \pkg{GpGp} to simulate training data via the \citet{Vecchia_1988} approximation. We implement the Vecchia approximation using a ``maxmin'' ordering of the locations, with a maximum of 30 neighbors assigned to each location; see \citet{Guinness_2018} for further details. The parameters to be estimated are $\vec{\theta} \equiv (\tau, \rho)^\tp$, and we fix $\nu = 1$ and $\sigma^2 = 1$. We assume that $\tau$ and $\rho$ are independent \textit{a priori}, and we use the priors \mbox{$\tau \sim \Unif{0.01}{1}$} and \mbox{$\rho \sim \Unif{0.03}{0.35}$}.  

We compare the estimators using the time taken for their training and for estimation for a single data set post-training  (computational efficiency); and their empirical RMSEs (statistical efficiency). 
 Since the likelihood function is available for this model, we compare the two competing NBEs to the MAP estimator that numerically maximizes the incomplete-data posterior density. 
Our results, summarized in  Table~\ref{tab:GP} and Figure~\ref{fig:GP:missingness}, highlight several important properties of the estimators. 

 \begin{table}[t!]
\centering
\caption{The 
training time, 
estimation time for a single test data set, and empirical RMSE under two missingness models for 
three 
estimators of the parameters of the Gaussian process model (Section~\ref{sec:GP}). 
}\label{tab:GP}
\begin{tabular}{lcccc}
  \hline
 Estimator         & Training time (mins) & Estimation time (s) & RMSE$_{\text{MCAR}}$ & RMSE$_{\text{MICB}}$  \\ 
  \hline
 MAP               &  --                    & 1.12                & \textbf{0.015}    & \textbf{0.014}           \\ 
 EM NBE            & \textbf{21.6}                   & 0.39                & \textbf{0.015}    & 0.015           \\ 
 Masking NBE       & 25.3                   & \textbf{0.01}       & 0.016             & 0.202                    \\ 
 \hline
\end{tabular}
\end{table} 

\begin{figure}[t!]
\centering
\includegraphics[width = \linewidth]{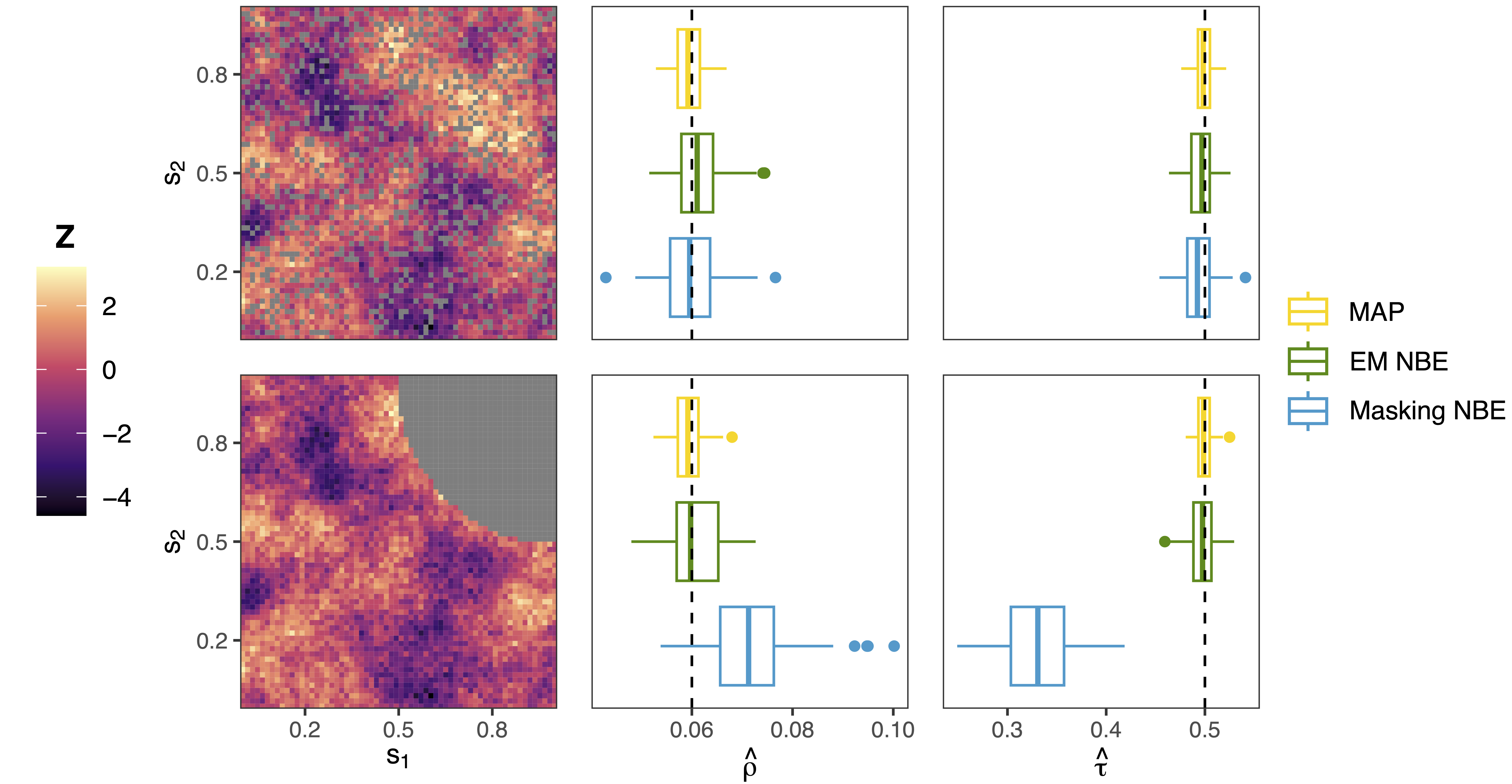}  
\caption{Spatial data (first column) where the missingness is of type MCAR (first row) or MICB (second row) with missingness shown in gray, and corresponding empirical  distributions (second and third columns) for three estimators of the parameters of the Gaussian process model (Section~\ref{sec:GP}). True parameter values are shown as a dashed vertical line.}\label{fig:GP:missingness} 
\end{figure}

First, the EM NBE is agnostic to the missingness pattern and it performs well under both MCAR and MICB data, as illustrated by the similarity of its sampling distribution (Figure~\ref{fig:GP:missingness}) and RMSE (Table~\ref{tab:GP}) to those of the MAP estimator. 
 By contrast, the Masking NBE has substantial bias and inflated RMSE on MICB test data (recall that the Masking NBE was trained with MCAR data). The almost-as-good RMSE performance of the EM NBE relative to the MAP estimator indicates that Algorithm~\ref{alg:neuralEM} in the Estimation stage converges to a suitable point estimate across most, if not all, data sets; convergence for a single data set is illustrated in Figure~\reffsupp{fig:convergence:GP}.

Second, the NBEs represent a trade-off between computational efficiency and statistical efficiency. 
 The speed of the Masking NBE is due to it not requiring likelihood computation or conditional simulation, and because it is not an iterative algorithm. For the spatial Gaussian process model, the MAP estimate is obtained straightforwardly by numerically maximizing the unnormalized posterior density, which is available in closed form. Hence, for this model, the EM NBE provides only a moderate reduction in estimation time compared to the MAP estimator (Table~\ref{tab:GP}). 


These results provide empirical evidence that NBEs can come close to the statistical efficiency of a gold-standard likelihood-based estimator. 
 It will be seen in the next section that NBEs are even more beneficial when the incomplete data likelihood function is unavailable in closed form or is computationally intractable. 

\subsection{Hidden Potts model}\label{sec:Potts}

We now consider a hidden Markov random field (e.g., \citeauthor{Besag_1974}, \citeyear{Besag_1974};  \citeauthor{Cressie_1993_stats_for_spatial_data}, \citeyear{Cressie_1993_stats_for_spatial_data}, Ch.~6; \citeauthor{Rue_Held_2005_GMRF:TaA}, \citeyear{Rue_Held_2005_GMRF:TaA}) based on the so-called \cite{Potts_1952} model.
  Consider a regular grid of pixels indexed by $i = 1, \dots, n$, where each pixel takes on a label $Y_i$ from a finite set of discrete states $\mathcal{Q} \equiv \{1, \dots, Q\}$. Then, the Potts model is specified through the conditional distributions,  
\begin{equation}\label{eqn:Potts}
 {\rm{Pr}}(Y_i = y \mid \vec{Y}_{\backslash i}, \beta) \propto \exp\Big\{\beta \sum_{j \in \mathcal{N}_i} \mathbb{I}(Y_j = y)\Big\}, \; \text{for } y \in \mathcal{Q}, 
\end{equation}
where $\vec{Y}_{\backslash i}$ denotes all pixel labels excluding the $i$th pixel; $\beta > 0$ is a parameter controlling the strength of spatial dependence; and $\mathcal{N}_i$ contains the indices of the ``neighbors'' of pixel $i$. Here, we consider a Potts model modified to account for edge effects. Specifically, we let the neighbors $\mathcal{N}_i$ be the adjacent pixels of pixel $i$, with four neighbors for interior pixels, three for edge pixels, and two for corner pixels. The Potts model exhibits a phase transition at the critical parameter value $\beta_c = \log(1 + \sqrt{Q})$ \citep{Moores_2021_bayesImageS}, transitioning from disorder when $\beta < \beta_c$ (where most neighboring pixels do not have the same label) to order when $\beta > \beta_c$ (most neighboring pixels have the same label). 
  Its likelihood function, obtained from \eqref{eqn:Potts}, involves a normalizing constant that is computationally intractable for moderate-to-large $n$. 
  In what follows in this section, we focus on a hierarchical extension of the Potts model, the hidden Potts model. 

The hidden Potts model extends the Potts model to settings where the labels are not directly observed. The observed data $\vec{Z} \equiv (Z_1, \dots, Z_n)^\tp$ are assumed 
 conditionally independent given the latent labels $\vec{Y} \equiv (Y_1, \dots, Y_n)^\tp$, with a label-specific observation distribution:  
\begin{equation}\label{eqn:HiddenPottsObservation}
p(Z_i \mid Y_i = y, \vec{\lambda}_y); \quad i =  1, \dots, n,
\end{equation}
where each label $y \in \mathcal{Q}$ is associated with observation-distribution parameters $\vec{\lambda}_y$ that we collect in $\vec{\lambda} \equiv (\vec{\lambda}_1^\tp, \dots, \vec{\lambda}_Q^\tp)^\tp$. The parameters of the hidden Potts model are $\vec{\theta} \equiv (\beta, \vec{\lambda}^\tp)^\tp$. 
 As with the standard Potts model, likelihood-based inference in the hidden Potts model is computationally prohibitive except for grids with a small number of grid cells.

The general hidden Potts model given by \eqref{eqn:Potts} and \eqref{eqn:HiddenPottsObservation} is often made concrete by specifying Gaussian data distributions in~\eqref{eqn:HiddenPottsObservation}. 
 We focus on this common case in the current simulation study, while in Section~\ref{sec:application}, we consider a more complex hidden Potts model in which the observations follow either point-mass or Beta distributions. Specifically, 
 in the current simulation study, 
 we assume a hidden Potts model with $Q=3$ states and observation distributions
\begin{equation}\label{eqn:HiddenPottsObservationGaussian}
p(Z_i \mid Y_i = y, \vec{\lambda}_y) = \text{Gau}(\mu_y, \sigma_y^2); \quad i =  1, \dots, n,
\end{equation}
where each label $y \in \mathcal{Q}$ is associated with its own mean $\mu_y \in \mathbb{R}$ and variance $\sigma^2_y > 0$. 
With three hidden states, the phase transition occurs at $\beta_c = 1.005$. 
We adopt the priors $\beta \sim \Unif{0}{1.5}$, $\mu_1 \sim \text{\Gau}(-1, 0.3^2)$, $\mu_2 \sim \text{\Gau}(0, 0.3^2)$, $\mu_3 \sim \text{\Gau}(1, 0.3^2)$, and $\sigma_y \sim \Unif{0}{1/3}$ for $y \in \{1, 2, 3\}$, and we impose the identifiability constraint $\mu_1 < \mu_2 < \mu_3$. 
 We estimate the seven unknown parameters $\vec{\theta} \equiv (\beta, \mu_1, \mu_2, \mu_3, \sigma_1, \sigma_2, \sigma_3)^\tp$ based on incomplete data on a square grid of size $n = 64^2 = 4096$. 
   To simulate realizations $\vec{Y}$, 
 we use the \citet{Swendsen_Wang_1987} algorithm implemented in 
  \pkg{bayesImageS} 
  \citep{Moores_2021_bayesImageS}. 
 During the Estimation stage of Algorithm~\ref{alg:neuralEM}, we simulate conditionally on a partially observed, noisy field using 
 Gibbs sampling (see Section~\reffsupp{sec:MCMCHiddenPotts}). 

  Table~\ref{tab:Potts} and Figure~\ref{fig:Potts} report our results 
  (in Figure~\ref{fig:Potts} we focus on the parameter $\beta$). As with the Gaussian process model, the EM NBE is agnostic to the missingness pattern, and performs well under both missingness models considered in this experiment. By contrast, the Masking NBE is biased for most values of $\beta$ when the missingness model is misspecified. 
   The performance of the EM NBE indicates that Algorithm~\ref{alg:neuralEM} in the Estimation stage converges to a suitable point estimate across most, if not all, data sets; convergence for a single data set is illustrated in Figure~\reffsupp{fig:convergence:Potts}. 
 
\begin{table}[t!]
\centering
\caption{
The 
training time, 
 estimation time for a single test data set, and empirical RMSE for 
 two
 estimators of the parameters of the hidden Potts model (Section~\ref{sec:Potts}). 
}\label{tab:Potts}
\begin{tabular}{lccccc}
 \hline
 Estimator  & 
 Training time (mins) & 
 Estimation time (s) & RMSE$_{\text{MCAR}}$ & RMSE$_{\text{MICB}}$  \\ 
 \hline
 EM NBE        &  \textbf{20.4}   & 0.23              &  \textbf{0.037}  & \textbf{0.042} \\  
 Masking NBE   &  28.1            & \textbf{0.01}     &  \textbf{0.037}  & 0.166 \\ 
 \hline
\end{tabular}
\end{table}

\begin{figure}[t!]
\vspace{1em}
\centering 
\includegraphics[width = \linewidth]{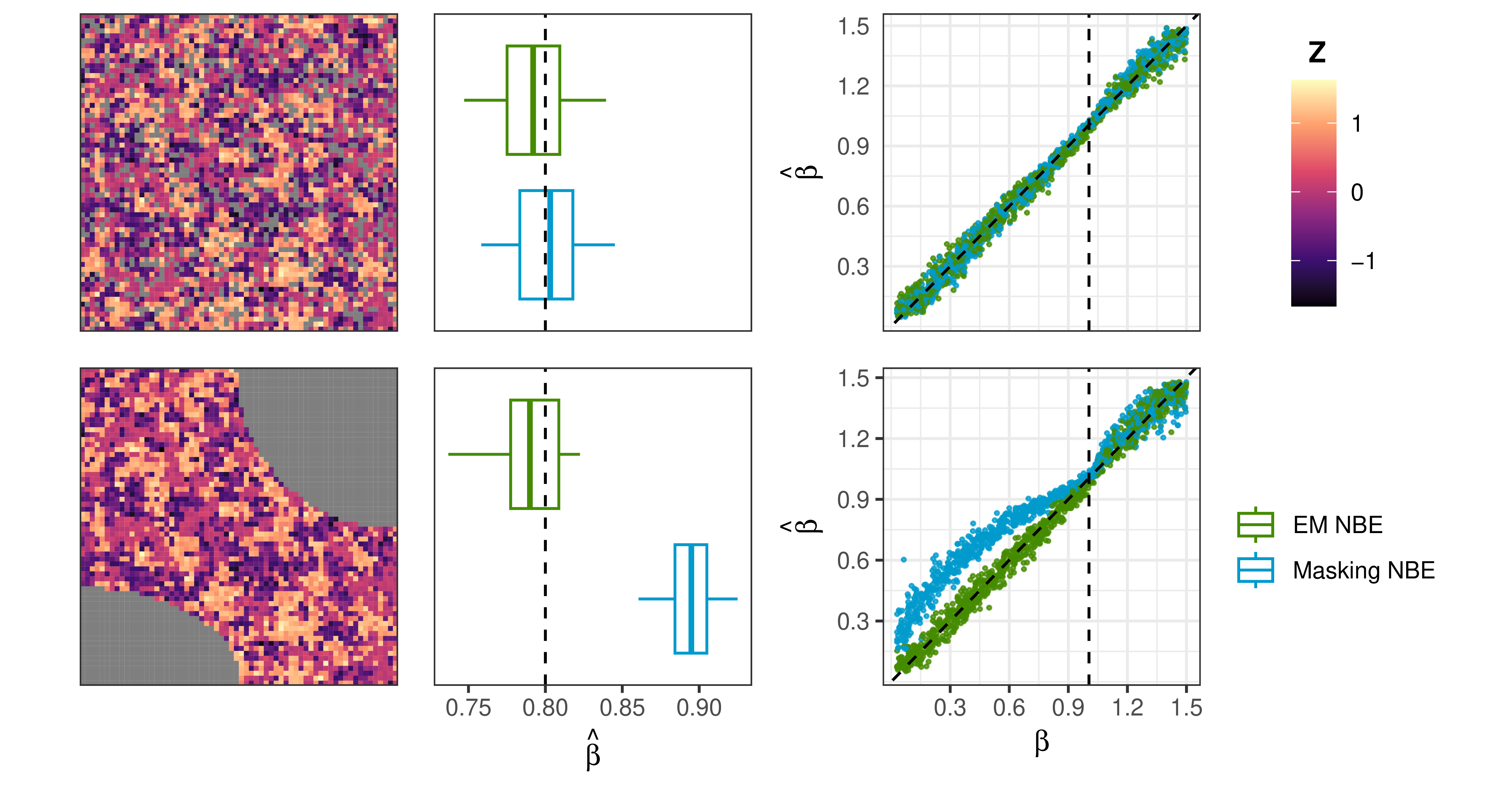}  
\caption{Spatial data (first column) simulated from the hidden Potts model of Section~\ref{sec:Potts}, where the missingness is of type MCAR (first row) or MICB (second row); empirical distributions (second column) for two estimators of the parameter $\beta$, with $\beta = 0.8$ fixed (dashed vertical line); and estimates versus true values (third column) for many different values of $\beta$, with the critical parameter value $\beta_c = 1.005$ demarcated by a dashed vertical line. 
}\label{fig:Potts} 
\end{figure}

Finally, this example highlights the advantages of NBEs (and related neural inferential methods) over another popular simulation-based method of inference, approximate Bayesian computation \citep[ABC;][]{Lintusaari_2017_ABC_review, Sisson_2018_ABC_handbook}. 
 ABC typically relies on user-defined summary statistics, which are difficult to construct for latent variable models such as the hidden Potts model. In contrast, NBEs do not require user-defined summary statistics and can be applied directly to high-dimensional data. To provide a comparison in a setting where informative summary statistics are available, in Section~\reffsupp{sec:additionalsimulationstudies} we present an additional simulation study using a spatial version of the generalized hyperbolic distribution. Even in this case where informative summary statistics are available, we find that ABC leads to less statistically efficient estimates than the EM NBE. 

\section{Application}\label{sec:application}


Here, we consider a remote-sensing application with Arctic sea-ice data. These data are high-dimensional with a missingness mechanism that is difficult to model, but these challenges can be overcome using the EM NBE (Algorithm~\ref{alg:neuralEM}). 

Arctic sea ice plays an important role in regulating our climate: it acts as a reflective surface that reduces the amount of solar energy absorbed by Earth. Melting sea ice exposes darker ocean water, thereby further accelerating the melting process due to an albedo-ice feedback effect. 
Changes in Arctic sea-ice area and thickness also affect atmospheric circulation and ocean currents, which can influence weather patterns worldwide \citep{Cvijanovic_2017}. 
Further, Arctic sea ice provides vital habitats for species such as polar bears and seals, and its loss can disrupt fragile ecosystems, thereby affecting biodiversity, food webs, and fisheries \citep{Meier_2014}. 
Understanding the temporal evolution of Arctic sea ice is therefore crucial for informing policies aimed at mitigating the impacts of climate change, managing resources sustainably, and protecting vulnerable ecosystems \citep{UN_SDGs}. 

In this application, we consider spatial data of Arctic sea-ice proportion, that is, the proportion of sea ice in grid cells at given spatial locations (also commonly referred to as sea-ice concentration), produced by the National Oceanic and Atmospheric Administration (NOAA) as part of their National Snow and Ice Data Center’s (NSIDC) Climate Data Record \citep[CRD;][]{arctic_sea_ice}. The data are derived from passive microwave remote sensing retrievals from the Nimbus 7 satellite and the F8, F11, F13, and F17 satellites of the Defence Meteorological Satellite Program, projected onto 25km $\times$ 25km grid cells within a region of the Northern Hemisphere spanning longitudes 180°E to 180°W and latitudes at or above 60$^\circ$N \citep{Zhang_2020_spatio-temporal_Arctic_sea_ice, arctic_sea_ice}. Arctic-sea-ice cover typically reaches its annual minimum in 
September \citep{Parkinson_2014}, and we therefore base our analysis on the ice cover on the first day of September in each year. 
 Our 
  data set comprises 45 spatial images (one image for each year between 1979 and 2023), with each image containing $199\times 219 = 43581$ grid cells, and we analyze each year separately. 
 
Figure~\ref{fig:sea_ice:missing} shows that the data are incomplete and that the missingness patterns are relatively complicated. Here, missingness occurs for several reasons, including cloud cover and unpredictable issues with the remote-sensing instrument \citep{arctic_sea_ice}. The data are also subject to a more consistent form of missingness around the North Pole: this area, called the Arctic Pole Hole, changes in size over time as it is a function of both the remote-sensing instrument and the prevailing atmospheric conditions \citep{arctic_sea_ice}. 

\begin{figure}[t!]
\centering
\includegraphics[width = \linewidth]{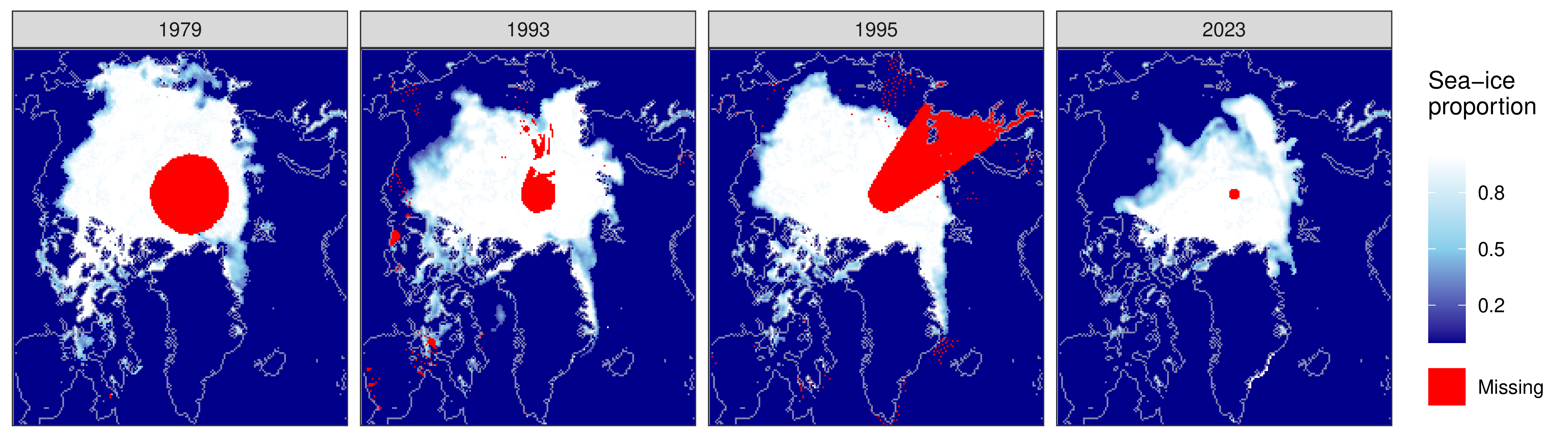}
\caption{Arctic sea-ice data from the first day of September for the years 1979, 1993, 1995, and 2023. Faint gray lines denote coastlines, with Greenland appearing at the bottom. The data are subject to both random sources of missingness (e.g., cloud cover) and more systematic sources of missingness due to remote-sensing limitations (e.g., the Arctic Pole Hole).}\label{fig:sea_ice:missing} 
\end{figure}

Previous studies of Artic sea-ice \citep[e.g.,][]{Parkinson_2014, Zhang_2020_spatio-temporal_Arctic_sea_ice} often apply a fixed threshold (e.g., 15\%) to classify each grid cell as either ``ice'' or ``not ice''. This binarization simplifies modeling but it has important limitations: the threshold is arbitrary, it discards valuable information from the underlying data (which gives proportions of sea-ice cover), and it can bias estimates of sea-ice area (the total ice-covered region). Instead, we model the sea-ice proportions directly using a hidden Potts model (see \eqref{eqn:Potts}) with $Q=4$ states, spatial-dependence parameter $\beta > 0$, and the following observation distributions:
\begin{alignat*}{2}
p(Z_i \mid Y_i = 1) &= \delta_0(Z_i),              \qquad &&p(Z_i \mid Y_i = 2) = \text{Beta}(Z_i; a_1, b_1), \\
p(Z_i \mid Y_i = 3) &= \text{Beta}(Z_i; a_2, b_2), \qquad &&p(Z_i \mid Y_i = 4) = \delta_1(Z_i),
\end{alignat*}
where $\delta_x(\cdot)$ denotes the Dirac delta function that models point mass at $x \in \mathbb{R}$, $\text{Beta}(z; a, b)$, $z \in (0, 1)$, denotes the Beta density function with shape parameters $a > 0$ and $b > 0$, and for identifiability we impose ordering of the means: $a_1(a_1 + b_1)^{-1} < a_2(a_2 + b_2)^{-1}$. This hidden Potts model captures the multimodal structure evident in the empirical histogram of sea-ice proportions (see Figure~\reffsupp{fig:sea_ice:histograms}). The inclusion of two Beta components allows flexible modeling of the continuous observations in $(0,1)$, which exhibit strong spatial dependence (see, e.g., Figure~\reffsupp{fig:sea_ice:predictions_and_Y}). 
 Conditional simulation, which is required by our EM NBE (Algorithm~\ref{alg:neuralEM}) and for prediction at grid cells with missing pixels, can be done using MCMC, 
 as described in Section~\reffsupp{sec:MCMCHiddenPotts}. 

In our hidden Potts model, the unknown parameters are $\vec{\theta} \equiv (\beta, a_1, a_2, b_1, b_2)^\tp$. 
 We use a relatively uninformative prior for the spatial-dependence parameter $\beta > 0$, namely, $\beta \sim \Unif{0}{1.5}$. For the Beta shape parameters, we adopt the informative priors 
$a_1 \sim \Unif{2}{5}$, 
$b_1 \sim \Unif{2}{5}$, 
$a_2 \sim \Unif{2}{5}$, 
and $b_2 \sim \Unif{0}{1}$. 
Given the large data size, the computationally intractable model, and the complicated missingness mechanisms, our EM NBE (Algorithm~\ref{alg:neuralEM}) is well suited for inference in this application. We trained the NBE using the same settings given in Sections~\ref{sec:generalsetting} and~\ref{sec:Potts}, with a total training time (including data simulation) of 32 minutes. We then applied the EM NBE to each of the 45 images. The total time for estimating the 45 parameter vectors $\{\vec{\theta}_t : t = 1, \dots, 45\}$, for each of the 45 years, was 44 seconds. 

Figure~\ref{fig:sea_ice:results}, left panel, shows estimates $\{\hat{\beta}_t : t = 1, \dots, 45\}$, as well as 95\% pointwise confidence intervals (Figure~\reffsupp{fig:sea_ice:all_parameters} presents the corresponding plots for all parameters). To obtain these intervals, we used a separate parametric bootstrap for each year. Specifically, for each year, we simulated 100 data sets from the fitted model, removed data from the same grid cells that were missing in the observed data set, and then applied the EM NBE again to each of the 100 simulated (incomplete) data sets. All estimates of $\beta$ are larger than the critical value $\beta_c = \log(1 + \sqrt{Q}) = 1.099$, confirming a strong tendency for neighboring grid cells to share the same label. 

\begin{figure}[t!]
\centering
\includegraphics[width = \linewidth]{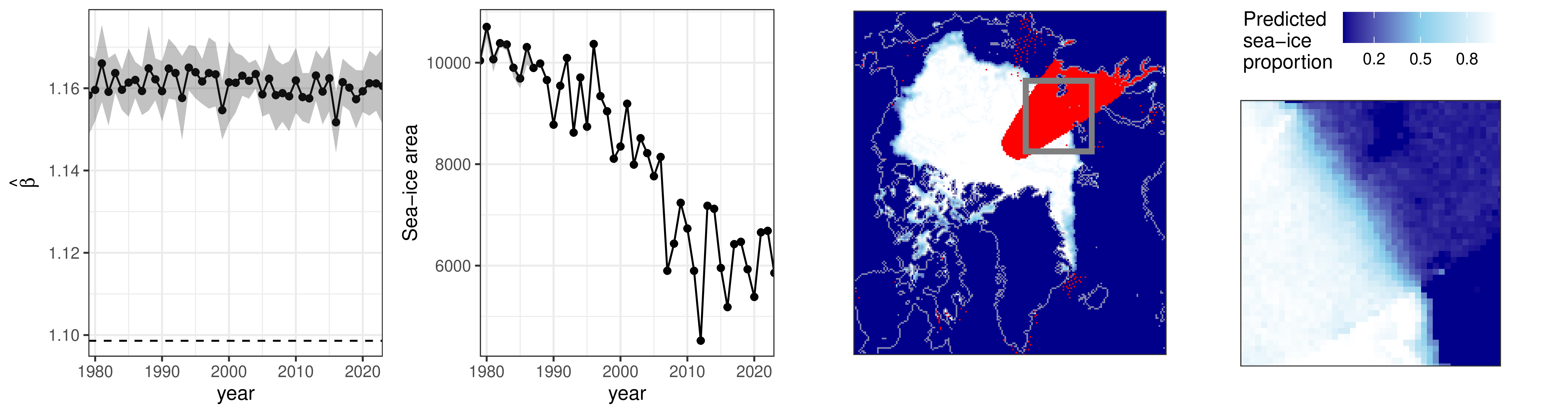}  
\caption{Analysis of Arctic sea-ice data. (Left) EM NBE estimates of the spatial dependence parameter $\beta$ in \eqref{eqn:Potts} versus year. Shaded areas display 95\% pointwise confidence intervals obtained using parametric bootstrap sampling, while the dashed horizontal line indicates the critical value $\beta_c = \log(1 + \sqrt{Q}) = 1.099$. (Center-left) Predictions of sea-ice area versus year. Note that 95\% prediction intervals are plotted, but these are not visible due to the low uncertainty in the predictions. (Center-right) Arctic sea-ice data for September 1, 1995. Faint gray lines denote coastlines, with Greenland appearing at the bottom. (Right) Predicted sea-ice proportion for all grid cells within the gray box of the center-right panel.}\label{fig:sea_ice:results} 
\end{figure}

 
Figure~\ref{fig:sea_ice:results}, center-left panel, shows predictions of sea-ice area as a function of year. To make these predictions, we imputed missing values by simulating 100 times from the fitted model (of the corresponding year) conditionally on the observed data, and we treated the mean of these conditional simulations as the prediction. In line with the general scientific consensus, our analysis indicates that sea-ice area is decreasing dramatically over time, nearly halving over the study period. The center-right and right panels of Figure~\ref{fig:sea_ice:results} show the incomplete data from September 1, 1995, and the resulting predictions of sea-ice proportion in a box containing a subset of the incomplete data (see Figures~\reff{fig:sea_ice:sims}~and~\reffsupp{fig:sea_ice:predictions_and_Y} for individual conditional simulations and predictions of both the data and the hidden labels). Due to the strong spatial dependence, the predicted proportions in the interior of the ice sheet and in regions far from ice are close to 1 and 0, respectively; however, there is greater variability within the region corresponding to the sea-ice boundary. Post-training, inference (including bootstrap) for all 45 years of data took a total of 1.7 hours clock time. This can be compared to \citeauthor{Zhang_2020_spatio-temporal_Arctic_sea_ice}'s (\citeyear{Zhang_2020_spatio-temporal_Arctic_sea_ice}) fully Bayesian inference who, with a 
 binary model and with a smaller data set, needed more than a day of clock time on a comparable high-performance computer.

Finally, we note evidence of spatial nonstationarity in these Arctic sea-ice data: see Figure~\reffsupp{fig:sea_ice_nonstationarity}, which summarizes our analysis over a subdomain of interest, namely, the Canadian Arctic Archipelago. Future work in this application might hence consider spatially-varying versions of $a_1$, $a_2$, $b_1$, and $b_2$. The advantage of the proposed EM NBE is that more complicated (nonstationary) models will not alter the workflow, and that more complicated models can be fitted and assessed with relative ease. 

\section{Conclusion}\label{sec:conclusion}

Incomplete data are ubiquitous in applications of AI, arising from various sources such as cloud cover, equipment malfunctions, and data corruption. In this article, we focus on the problem of missing data in the context of neural Bayes estimation, which uses neural networks to map data to point estimates of parameters. We first discuss and give new insights on the masking approach of \cite{Wang_2022_neural_missing_data} (Algorithm~\ref{alg:one-hot}), where inference is performed on an extended data set containing the observed data and auxiliary variables that encode the missingness pattern. The masking approach has advantages: It can be used to quickly generate Bayes estimates under general loss functions; it only requires marginal (i.e., unconditional) simulation from the data model; and it is theoretically well motivated (see Theorems~\reff{thm:masking_likelihood} and~\reff{thm:invariance} in Section~\reffsupp{app:likelihoodequivalence}). 
However, its major disadvantage is that it requires a stochastic model for the missingness mechanism, which can lead to bias and statistical inefficiency when the model is wrong and the data are high-dimensional. We therefore propose an alternative approach that is based on the MCEM algorithm, where the E- and M-steps are approximated using conditional simulation and an NBE that returns the (approximate) MAP estimate from the conditionally-simulated complete data. Our EM approach (Algorithm~\ref{alg:neuralEM}) is likelihood-free, in the sense that it does not require evaluation or knowledge of the likelihood function; it is fast, since it does not require numerical optimization at each iteration; and, in contrast to the masking approach, it is agnostic to the missingness pattern of the observed data. Moreover, our research can be viewed as a prototype problem that indicates how improvements could be made in AI by introducing statistical inferential tools. 

Our EM approach to neural Bayes estimation with incomplete data 
 relies 
on conditional simulation, which can be a computational bottleneck for certain models. Future research will explore the use of approximate conditional simulation \citep[e.g.,][]{Wu_2023, Simkus_2023_VGI, Simkus_Gutmann_2025} to extend the applicability of our EM approach to models for which conditional simulation is intractable or computationally prohibitive. While we focus on point estimation, our insights on the masking approach extend to methods that use it to approximate the full posterior distribution of the parameters 
  and more general inference frameworks \citep[e.g.,][]{gloeckler2024allinone}. In these contexts, the data-augmentation algorithm of \cite{Tanner_Wong_1987} offers advantages analogous to those of our EM approach in the point-estimation setting.  Finally, neural Bayes estimation requires specifying a potentially large set of hyperparameters related to the neural network's architecture and its training. Although automated search methods 
   \citep[see, e.g.,][]{Elsken_2019}  
  can be useful, 
  and \citet{Rodder_2025} provide important initial theoretical results, further developments are needed to establish practical guidelines for hyperparameter selection.

\ifbool{blind}{}{%
\section*{Acknowledgements}
\ifbool{blind}{\texttt{<redacted for anonymity>}}{The authors thank Matthew Moores for discussion on, and code relating to, the Potts model. 
}
\section*{Funding}
\ifbool{blind}{\texttt{<redacted for anonymity>}}{%
MSD’s research was supported by an Australian Government Research Training Program Scholarship. MSD, AZM, and RH were supported by the KAUST Opportunity Fund Program ORFS-2023-OFP-5550.2. AZM’s research was supported by Australian Research Council (ARC) Discovery Early Career Research Award, DE180100203.  AZM and NC were supported by ARC Discovery Project (DP), DP190100180. RH was also partially supported by  KAUST Office of Sponsored Research (OSR) under Award No. OSR-CRG2020-4394. This material is based upon work supported by the Air Force Office of Scientific Research under award number FA2386-23-1-4100 (AZM and NC).}}

\ifbool{doublespace}{%
\begin{spacing}{1.5}%
{\small \putbib[bibliography]}%
\end{spacing}
}{%
{\small \putbib[bibliography]}%
}
\end{bibunit}

\ifbool{arxiv}{

\ifbool{arxiv}{%
    \begin{center}
    \supptitle
    \vspace{1em}
    \end{center}}{%
    \maketitle
    \setcounter{page}{1}
}

\supplement

\begin{bibunit}[apalike] 

\noindent In Section~\ref{app:likelihoodequivalence}, we provide a theoretical rationale for the masking approach of \citet{Wang_2022_neural_missing_data}. In Section~\ref{sec:supp:convergenceofapproximateMCEM}, we discuss the asymptotic behavior of the Monte Carlo EM (MCEM) algorithm when the conventional update is approximate. In Section~\ref{sec:approximations01loss}, we consider several continuous approximations of the 0--1 loss function that may be used when training an NBE to approximate the MAP estimator. 
In Section~\ref{sec:additionalsimulationstudies}, we conduct an additional simulation experiment using a spatial version of the generalized hyperbolic (GH) distribution. 
In Section~\ref{sec:ensemble}, we propose and illustrate the benefits of using an ensemble of deep neural networks in the context of neural Bayes estimation. 
 In Section~\ref{sec:MCMCHiddenPotts}, we discuss Markov chain Monte Carlo (MCMC) sampling in hidden Potts models. Finally, in Section~\ref{sec:additionalfigures}, we provide additional figures supporting our methodology. 
 
\section{Rationale for the masking approach}\label{app:likelihoodequivalence}

Here, we use a sufficiency argument to show that the masking approach of \citet{Wang_2022_neural_missing_data} does not lead to any loss of information on $\vec{\theta}$. 

\begin{theorem}\label{thm:masking_likelihood} 
Let the complete data $\vec{Z} \in \mathbb{R}^n$ be distributed according to a family of probability distributions indexed by $\vec{\theta}$. Partition $\vec{Z}$ into the components $\vec{Z}_1$ and $\vec{Z}_2$ containing the observed and missing elements, respectively, where it is assumed that there is at least one observation. Define the ordered set $\mathcal{I}_1 \equiv \{i: Z_i \text{ is observed}\}$ such that $\vec{Z}_1 = (Z_i : i \in \mathcal{I}_1)^\tp$, and let $\vec{U}$ and $\vec{W}$ be defined as in Equations~\eqreff{eqn:paddeddata}~and~\eqreff{eqn:binarymask} of the main text. Then,
$$\vec{T}(\vec{Z}_1, \mathcal{I}_1) \equiv (\vec{U}, \vec{W}),$$ 
is a sufficient statistic for $\vec{\theta}$. 
\end{theorem}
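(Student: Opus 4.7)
The plan is to establish sufficiency by showing that the mapping $\vec{T} : (\vec{Z}_1, \mathcal{I}_1) \mapsto (\vec{U}, \vec{W})$ is a measurable bijection between its domain and range. Once that is in hand, sufficiency is immediate: the conditional distribution of $(\vec{Z}_1, \mathcal{I}_1)$ given $\vec{T}(\vec{Z}_1, \mathcal{I}_1)$ is a point mass, and therefore cannot depend on $\vec{\theta}$.

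First, I would recall from Equations~\eqreff{eqn:paddeddata} and \eqreff{eqn:binarymask} that $\vec{U} \in \mathbb{R}^n$ is obtained by placing each observed value $Z_i$ (for $i \in \mathcal{I}_1$) at its original index and filling the remaining entries with a fixed placeholder (e.g., $0$), while $\vec{W} \in \{0,1\}^n$ satisfies $W_i = 1$ if and only if $i \in \mathcal{I}_1$. These definitions make $\vec{T}$ well defined and deterministic given $(\vec{Z}_1, \mathcal{I}_1)$.

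Next, I would exhibit the inverse of $\vec{T}$ explicitly. Given $(\vec{U}, \vec{W})$, one recovers the index set by $\mathcal{I}_1 = \{i : W_i = 1\}$ (ordered), and then reconstructs the observed values via $\vec{Z}_1 = (U_i : i \in \mathcal{I}_1)'$. Crucially, because $\vec{W}$ carries the missingness pattern separately, the reconstruction does not rely on distinguishing the placeholder value from a genuine observation; thus the inverse is well defined regardless of whether the placeholder lies in the support of the $Z_i$.

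Finally, I would invoke the Fisher--Neyman factorization theorem, or argue directly: for any event $A$ in the $\sigma$-algebra generated by $(\vec{Z}_1, \mathcal{I}_1)$, we have $\Pr_{\vec{\theta}}(A \mid \vec{U}, \vec{W}) \in \{0,1\}$, and this probability is free of $\vec{\theta}$. Hence $(\vec{U}, \vec{W})$ is sufficient for $\vec{\theta}$. The main "obstacle" is not analytical but conceptual, namely clarifying that $\vec{W}$ is what prevents any collision between the placeholder and genuine data values; once this observation is made, the result reduces to the standard fact that an invertible statistic is sufficient (indeed minimal sufficient) for any parameter under any family of distributions.
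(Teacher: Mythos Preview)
Your proposal is correct and follows essentially the same route as the paper: both arguments establish that $\vec{T}$ is a bijection between $(\vec{Z}_1,\mathcal{I}_1)$ and $(\vec{U},\vec{W})$ and then invoke the standard fact that a one-to-one transformation of a sufficient statistic is sufficient; the paper verifies injectivity via $\vec{T}(a)=\vec{T}(b)\Rightarrow a=b$, whereas you exhibit the inverse explicitly, which is an equally valid (and arguably cleaner) way to show the same thing. One small correction to your closing parenthetical: an invertible function of the full data is always sufficient, but it is \emph{not} in general minimal sufficient (e.g., the raw sample from $N(\theta,1)$ is sufficient but not minimal), so drop the ``indeed minimal sufficient'' remark.
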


\begin{proof}
Given that there is at least one observation, $\mathcal{I}_1$ is non-empty and $\vec{Z}_1$ is well defined. Now, $\vec{Z}_1$ and $\mathcal{I}_1$ represent all of the available information from which to make inference on $\vec{\theta}$, that is, together they are sufficient for $\vec{\theta}$, and hence we need only show that $\vec{T}(\cdot, \cdot)$ defines a one-to-one mapping from the space of $(\vec{Z}_1, \mathcal{I}_1)$ to that of $(\vec{U}, \vec{W})$, since any one-to-one transformation of a sufficient statistic is itself sufficient \citep[][pg.~280]{Casella_Berger_2001_Statistical_Inference}. 
 First, note that the construction of $\vec{U}$ using \eqreffmain{eqn:paddeddata} can be equivalently written as $\vec{U} = \vec{Z} \odot \vec{W} + c(\vec{1} - \vec{W})$, where $\odot$ denotes elementwise multiplication, and $\vec{1}$ denotes the vector of 1s of appropriate dimension. Now, for any $\vec{Z}_1^a$ and $\mathcal{I}_1^a$, any $\vec{Z}_1^b$ and $\mathcal{I}_1^b$, and any $c \in \mathbb{R}$,
\begin{align*}
 \vec{T}(\vec{Z}_1^a, \mathcal{I}_1^a) = \vec{T}(\vec{Z}_1^b, \mathcal{I}_1^b) 
 &\implies (\vec{U}^a, \vec{W}^a) = (\vec{U}^b, \vec{W}^b)\\
 &\implies \vec{U}^a = \vec{U}^b \text{ and } \vec{W}^a = \vec{W}^b \neq \vec{0}\\
 &\implies \vec{Z}^a \odot \vec{W}^a + c(\vec{1} - \vec{W}^a) = \vec{Z}^b \odot \vec{W}^a + c(\vec{1} - \vec{W}^a) \text{ and } \mathcal{I}_1^a = \mathcal{I}_1^b \\
 &\implies \vec{Z}^a \odot \vec{W}^a = \vec{Z}^b \odot \vec{W}^a \text{ and } \mathcal{I}_1^a = \mathcal{I}_1^b\\
 &\implies \vec{Z}_1^a = \vec{Z}_1^b \text{ and } \mathcal{I}_1^a = \mathcal{I}_1^b,
\end{align*}
since $\vec{W}^a \neq \vec{0}$. Therefore, $\vec{T}(\cdot, \cdot)$ defines a one-to-one mapping from the space of $(\vec{Z}_1, \mathcal{I}_1)$ to that of $(\vec{U}, \vec{W})$ and, hence, $\vec{T}(\vec{Z}_1, \mathcal{I}_1)$ is sufficient for $\vec{\theta}$. 
\end{proof}

\noindent Theorem~\ref{thm:masking_likelihood} is important in establishing that the masking approach makes full use of the information contained in the data. However, using $\vec{U}$ and $\vec{W}$ in place of $\vec{Z}_1$ and $\mathcal{I}_1$ in a neural Bayes estimation setting requires assigning a distribution to $\mathcal{I}_1$ (or equivalently, to $\vec{W}$). Using an argument that closely follows that of \citet[][Theorem~1]{Richards_2023_censoring} and \citet[][Theorem~1]{Sainsbury-Dale_2023_GNNs}, we now show that, under mild conditions, the Bayes estimator is invariant to this distribution. 

\begin{theorem}\label{thm:invariance}
Let the complete data $\vec{Z} \in \mathbb{R}^n$ be distributed according to a family of probability distributions indexed by $\vec{\theta}$. Partition $\vec{Z}$ into the components $\vec{Z}_1$ and $\vec{Z}_2$ containing the observed and missing elements, respectively. 
 Let $\vec{U} \in \mathcal{U} \subseteq \mathbb{R}^n$ and $\vec{W} \in \mathcal{W} = \{0, 1\}^n$ be defined as in Equations~\eqreff{eqn:paddeddata}~and~\eqreff{eqn:binarymask} of the main text. Let $L:\Theta \times \Theta \to [0, \infty)$ denote a loss function, and assume that the Bayes estimate $\hat{\vec{\theta}}^\star$ under this loss function exists, is unique, and has finite posterior expected loss
 for all 
 $\vec{U} \in \mathcal{U}$ and $\vec{W} \in \mathcal{W}$. Then 
 the Bayes estimator $\hat{\vec{\theta}}^\star(\vec{U}, \vec{W})$ that returns $\hat{\vec{\theta}}^\star$ for any $\vec{U} \in \mathcal{U}$ and $\vec{W} \in \mathcal{W}$ is invariant to the distribution $p_{\vec{W}}(\cdot)$ of $\vec{W}$, provided  
 \begin{enumerate}[(i)]
    \item $p_{\vec{W}}(\vec{w}) > 0$ for all $\vec{w} \in \mathcal{W}$, 
    \item $\vec{W}$ and $\vec{\theta}$ are independent.
\end{enumerate}
\end{theorem}

\begin{proof}
For all fixed $\vec{U} \in \mathcal{U}$ and $\vec{W} \in \mathcal{W}$, the Bayes estimate $\hat{\vec{\theta}}^\star$ minimizes the posterior expected loss; that is, 
\begin{equation}
\hat{\vec{\theta}}^\star = \argmin_{\hat{\vec{\theta}}} \int_\Theta L(\vec{\theta}, \hat{\vec{\theta}})p_{\vec{\theta} \mid \vec{U}, \vec{W}}(\vec{\theta} \mid \vec{U}, \vec{W})\d\vec{\theta}.\label{eq:Bayesestimate}
\end{equation}
 Consider now the Bayes estimator  $\hat{\vec{\theta}}^\star(\vec{U}, \vec{W})$ that returns the Bayes estimate for any fixed $\vec{U} \in \mathcal{U}$ and $\vec{W} \in \mathcal{W}$ \citep[see][for a proof of the existence of a Borel measurable Bayes estimator under mild conditions]{Brown_1973}. Since the posterior expected loss is, by assumption, bounded and nonnegative for all $\vec{U} \in \mathcal{U}$ and $\vec{W} \in \mathcal{W}$, we have that 
\begin{equation}\label{eq:Bayesmin}
  \hat{\vec{\theta}}^\star(\cdot,\cdot) = 
 \argmin_{\hat{\vec{\theta}}(\cdot,\cdot)}
  \sum_{\vec{w} \in \mathcal{W}} \tilde{p}_{\vec{W}}(\vec{w})
  \int_\mathcal{U}
  \int_\Theta 
  L(\vec{\theta}, \hat{\vec{\theta}}(\vec{u},\vec{w}))p_{\vec{\theta} \mid \vec{U}, \vec{W}}(\vec{\theta} \mid \vec{u}, \vec{w})
  \d\vec{\theta} 
  \d F(\vec{u} \mid \vec{w}),
\end{equation}
for any strictly positive conditional (on $\vec{W}$) probability measure $F(\vec{u} \mid \vec{w})$ and any strictly positive probability mass function $\tilde{p}_{\vec{W}}(\cdot)$ on $\mathcal{W}$. Choosing $\d F(\vec{u} \mid \vec{w}) = p_{\vec{U} \mid \vec{W}}(\vec{u} \mid \vec{w})\d\vec{u}$ for the conditional measure in \eqref{eq:Bayesmin}, we have that
$$
  \hat{\vec{\theta}}^\star(\cdot,\cdot) = \argmin_{{\vec{\theta}}(\cdot,\cdot)}
  \sum_{\vec{w} \in \mathcal{W}} \tilde{p}_{\vec{W}}(\vec{w})
  \int_\mathcal{U}
  \int_\Theta 
  L(\vec{\theta}, \hat{\vec{\theta}}(\vec{u},\vec{w}))p_{\vec{\theta} \mid \vec{U}, \vec{W}}(\vec{\theta} \mid \vec{u}, \vec{w})\d\vec{\theta}  p_{\vec{U} \mid \vec{W}}(\vec{u} \mid \vec{w})  \d\vec{u}. 
$$
Applying Bayes rule to $p_{\vec{\theta} \mid \vec{U}, \vec{W}}(\vec{\theta} \mid \vec{u}, \vec{w})$ yields
\begin{equation}\label{eq:Bayesmin2}
  \hat{\vec{\theta}}^\star(\cdot,\cdot) = \argmin_{{\vec{\theta}}(\cdot,\cdot)}
  \sum_{\vec{w} \in \mathcal{W}} 
  \int_\mathcal{U}
  \int_\Theta 
  L(\vec{\theta}, \hat{\vec{\theta}}(\vec{u},\vec{w}))
  \frac{p_{\vec{U} \mid \vec{W}, \vec{\theta}}(\vec{u} \mid \vec{w}, \vec{\theta})p_{\vec{W} \mid \vec{\theta}}(\vec{w} \mid \vec{\theta})p_{\vec{\theta}}(\vec{\theta})}{p_{\vec{W}}(\vec{w})}
  \tilde{p}_{\vec{W}}(\vec{w}) \d\vec{\theta}  \d\vec{u}. 
\end{equation}
From \eqref{eq:Bayesmin2} we see that if $p_{\vec{W} \mid \vec{\theta}}(\vec{w} \mid \vec{\theta}) = p_{\vec{W}}(\vec{w})$, then 
\begin{equation*}
  \hat{\vec{\theta}}^\star(\cdot,\cdot) = \argmin_{{\vec{\theta}}(\cdot,\cdot)}
  \sum_{\vec{w} \in \mathcal{W}} 
  \int_\mathcal{U}
  \int_\Theta 
  L(\vec{\theta}, \hat{\vec{\theta}}(\vec{u},\vec{w}))
  p_{\vec{U} \mid \vec{W}, \vec{\theta}}(\vec{u} \mid \vec{w}, \vec{\theta})p_{\vec{\theta}}(\vec{\theta})
  \tilde{p}_{\vec{W}}(\vec{w}) \d\vec{\theta}  \d\vec{u},
\end{equation*}
for any positive $\tilde{p}_{\vec{W}}(\cdot)$ on $\mathcal{W}$, thus completing the proof. 
\end{proof}

\noindent Theorem~\ref{thm:invariance} establishes that, apart from the requirement of strict positivity, the choice of distribution for $\mathcal{I}_1$ (equivalently, $\vec{W}$) is immaterial for the Bayes estimator. However, in practice, NBEs are trained on a finite number of realizations from the statistical model, and the empirical Bayes risk is therefore affected by Monte Carlo error that does depend on this distribution, particularly in high-dimensional settings where the number of possible missingness patterns ($2^n$) is very large. This pitfall is demonstrated in Section~\reffmain{sec:simulationstudies}, where we show that choosing a distribution for $\mathcal{I}_1$ that assigns low probability to the observed missingness pattern can yield a 
 statistically inefficient
 estimator.  
 For further discussion, see Section~\reffmain{sec:neuralmasking}.

\section{Asymptotic behavior of approximate MCEM sequences}\label{sec:supp:convergenceofapproximateMCEM}

\citet{Chan_1995} showed formally that, for large Monte Carlo sample size $m$, a sequence generated by the MCEM algorithm behaves approximately as a first-order autoregressive (AR(1)) process centered on the conventional EM update. They also showed that, in the vicinity of an isolated local maximizer $\vec{\theta}^*$ of the incomplete-data posterior density, a sequence generated by the MCEM algorithm behaves approximately as a stationary AR(1) process with mean $\vec{\theta}^*$. Here, we extend these results to the more general setting where each MCEM update is only approximate, for example, when each update is performed using an NBE, like in our proposed Algorithm~\reffmain{alg:neuralEM}. 

We first recall the EM and MCEM algorithms that were reviewed in Section~\reffmain{sec:EMalgorithm:background}, and the family of approximate MCEM algorithms introduced in Section~\reffmain{sec:EMalgorithm:general}. There we use $\vec{Z}_1$ and $\vec{Z}_2$ to denote the subvectors of $\vec{Z}$ that are treated as observed and missing, respectively, and we use \mbox{$\vec{Z}$} to denote the complete data. In what follows, we take a Bayesian perspective, based on maximization of the posterior density; the frequentist version of the subsequent algorithms are recovered by taking the prior, which can be viewed as a penalty function, to be $\priordensity(\vec{\theta}) \propto 1$. The expected complete-data log-posterior, central to EM-based algorithms, is given by
$$
  Q(\vec{\theta}' \mid \vec{\theta}) = \log \priordensity(\vec{\theta}') + \E_{\vec{Z}_2 \mid \vec{Z}_1, \vec{\theta}}\log p(\vec{\theta}' \mid \vec{Z}_1, \vec{Z}_2)
$$
where the `prime' notation $\vec{\theta}'$ is used to denote an alternative parameter vector (and not the transpose of $\vec{\theta}$). 
 The EM algorithm \citep{Dempster_1977_EM_algorithm, Wu_1983_EM_algorithm, McLachlan_2008_EM_algorithm} iteratively finds $\vec{\theta}^{(l)}$ such that
\begin{equation}\label{eqnsupp:EM}
  \vec{\theta}^{(l)}  = M(\vec{\theta}^{(l-1)}) \equiv \argmax_{\vec{\theta}' \in \Theta} Q(\vec{\theta}' \mid \vec{\theta}^{(l-1)});  \quad l = 1, 2, \dots,
\end{equation}
where $M(\cdot)$ is deterministic and $\Theta$ denotes the parameter space. In the MCEM algorithm \citep{Wei_Tanner_1990_Monte_Carlo_EM}, one replaces $Q(\vec{\theta}' \mid \vec{\theta})$ with a Monte Carlo approximation,
\begin{equation}\label{eqn:MCEM_Q}
 Q_{m}(\vec{\theta}' \mid \vec{\theta})  = \log \priordensity(\vec{\theta}') + \frac{1}{m} \sum_{j=1}^m \log p(\vec{\theta}' \mid \vec{Z}_1, \vec{Z}_2^{(j)}),
\end{equation}
where $m$ is the Monte Carlo sample size and $\vec{Z}_2^{(j)} \sim p(\vec{Z}_2 \mid \vec{Z}_1, \vec{\theta})$, independently for $j = 1,\ldots,m$. 
Hence, the MCEM algorithm iteratively finds $\vec{\theta}_{m}^{(l)}$ such that
\begin{equation*}\label{eqn:MCEM_update}
  \vec{\theta}_{m}^{(l)} = M_m(\vec{\theta}_{m}^{(l-1)}) \equiv \argmax_{\vec{\theta}' \in \Theta} Q_{m}(\vec{\theta}' \mid \vec{\theta}_{m}^{(l-1)}); \quad l = 1, 2, \dots.
\end{equation*}
In this work, we consider an approximate version of the MCEM algorithm that we write as:
\begin{equation}\label{eqn_supp:approxMCEMupdate}
  \tilde{\vec{\theta}}_m^{(l)} = \tilde{M}_m(\tilde{\vec{\theta}}_m^{(l-1)}) \equiv M_m(\tilde{\vec{\theta}}_m^{(l-1)}) + \vec{\delta}_{m}(\tilde{\vec{\theta}}_m^{(l-1)}); \quad l = 1, 2, \dots,
\end{equation}
where $\vec{\delta}_{m}(\cdot)$ denotes (random) approximation error that may or may not have mean zero. In our setting, the additional approximation error $\vec{\delta}_{m}(\cdot)$ is due to our use of an NBE to approximate the random map $M_m(\cdot)$. 
  
We now extend the formal analysis of \citet{Chan_1995} to algorithms of the form~\eqref{eqn_supp:approxMCEMupdate},  
  under the assumptions that, for $\vec{\theta} \in \Theta$, $\E\{\vec{\delta}_{m}(\vec{\theta})\} \to \vec{\mu}(\vec{\theta})$ and \mbox{$\V\{\vec{\delta}_{m}(\vec{\theta})\} \to \vec{\Sigma}(\vec{\theta})$}, as $m \to \infty$. 
The starting point is the expansion of the gradient of~\eqref{eqn:MCEM_Q} 
around $M(\vec{\theta})$: 
\begin{equation}\label{eq:approx_Q}
\nabla_{\vec{\theta}'}Q_{m}(\vec{\theta}' \mid \vec{\theta}) \approx  \nabla_{\tilde{\vec{\theta}}} Q_{m}(\tilde{\vec{\theta}} \mid \vec{\theta})\big|_{\tilde{\vec{\theta}} = M(\vec{\theta})} + \nabla^2_{\tilde{\vec{\theta}}}Q_{m}(\tilde{\vec{\theta}} \mid \vec{\theta})\big|_{\tilde{\vec{\theta}} = M(\vec{\theta})} \left\{ \vec{\theta}' - M(\vec{\theta}) \right\},
\end{equation}
where we neglect higher-order terms. Evaluating~\eqref{eq:approx_Q} at $\vec{\theta}' = \tilde{M}_m(\vec{\theta})$, we obtain: 
\begin{equation}\label{eqn:approx_MCEM_M_expansion}
  \tilde{M}_m(\vec{\theta}) \approx  M(\vec{\theta}) - \left\{\nabla^2_{\tilde{\vec{\theta}}}Q_{m}(\tilde{\vec{\theta}} \mid \vec{\theta})\big|_{\tilde{\vec{\theta}} = M(\vec{\theta})}\right\}^{-1} \left\{\nabla_{\tilde{\vec{\theta}}} Q_{m}(\tilde{\vec{\theta}} \mid \vec{\theta})\big|_{\tilde{\vec{\theta}} = M(\vec{\theta})} - \vec{\xi}(\vec{\theta})\right\},
\end{equation}
where 
\begin{equation}\label{eqn:approx_MCEM_gradient}
  \vec{\xi}(\vec{\theta}) \equiv  \nabla_{\tilde{\vec{\theta}}}Q_{m}(\tilde{\vec{\theta}} \mid \vec{\theta})\big|_{\tilde{\vec{\theta}} = \tilde{M}_m(\vec{\theta})}.
\end{equation}
Evaluating~\eqref{eqn:approx_MCEM_M_expansion} at $\vec{\theta} = \tilde{\vec{\theta}}_{m}^{(l-1)}$ and substituting the resulting expression for $\tilde{M}_m(\tilde{\vec{\theta}}_{m}^{(l-1)})$ into~\eqref{eqn_supp:approxMCEMupdate}, we obtain the following first-order autoregressive model:
\begin{equation}\label{eq:AR1_MCEM}
\tilde{\vec{\theta}}_{m}^{(l)} \approx M(\tilde{\vec{\theta}}_{m}^{(l-1)}) + \epsilonb_{m}(\tilde{\vec{\theta}}_{m}^{(l-1)}), 
\end{equation}
where
\begin{equation}\label{eq:innovations}
\epsilonb_{m}(\vec{\theta}) \equiv -\left\{\nabla^2_{\tilde{\vec{\theta}}}Q_{m}(\tilde{\vec{\theta}} \mid \vec{\theta})\big|_{\tilde{\vec{\theta}} = M(\vec{\theta})}\right\}^{-1} \left\{\nabla_{\tilde{\vec{\theta}}} Q_{m}(\tilde{\vec{\theta}} \mid \vec{\theta})\big|_{\tilde{\vec{\theta}} = M(\vec{\theta})} - \vec{\xi}(\vec{\theta}) \right\}.
\end{equation}

To derive the asymptotic properties of the innovations defined in~\eqref{eq:innovations}, we first derive expressions for the asymptotic expectation and asymptotic variance of $\vec{\xi}(\vec{\theta})$, defined in~\eqref{eqn:approx_MCEM_gradient}. 
 To proceed, we consider a second expansion of the gradient of~\eqref{eqn:MCEM_Q}, this time around $M_m(\vec{\theta})$. Neglecting higher-order terms, 
\begin{align}
	\nabla_{\vec{\theta}'}Q_{m}(\vec{\theta}' \mid \vec{\theta}) &\approx  \nabla_{\tilde{\vec{\theta}}} Q_{m}(\tilde{\vec{\theta}} \mid \vec{\theta})\big|_{\tilde{\vec{\theta}} = M_m(\vec{\theta})} + \nabla^2_{\tilde{\vec{\theta}}}Q_{m}(\tilde{\vec{\theta}} \mid \vec{\theta})\big|_{\tilde{\vec{\theta}} = M_m(\vec{\theta})} \left\{ \vec{\theta}' - M_m(\vec{\theta}) \right\}\nonumber\\ 
	& =  \nabla^2_{\tilde{\vec{\theta}}}Q_{m}(\tilde{\vec{\theta}} \mid \vec{\theta})\big|_{\tilde{\vec{\theta}} = M_m(\vec{\theta})} \left\{ \vec{\theta}' - M_m(\vec{\theta}) \right\},\label{eq:approx_Q2}
\end{align}
since  $\nabla_{\tilde{\vec{\theta}}} Q_{m}(\tilde{\vec{\theta}} \mid \vec{\theta})\big|_{\tilde{\vec{\theta}} = M_m(\vec{\theta})} = \zerob$ by definition. Evaluating~\eqref{eq:approx_Q2} at $\vec{\theta}' = \tilde{M}_m(\vec{\theta})$ and substituting the resulting expression into~\eqref{eqn:approx_MCEM_gradient}, we have
\begin{align*}
\vec{\xi}(\vec{\theta}) 
&\approx \nabla^2_{\tilde{\vec{\theta}}}Q_{m}(\tilde{\vec{\theta}} \mid \vec{\theta})\big|_{\tilde{\vec{\theta}} = M_m(\vec{\theta})} \left\{\tilde{M}_m(\vec{\theta}) - M_m(\vec{\theta}) \right\}\\
& = \nabla^2_{\tilde{\vec{\theta}}}Q_{m}(\tilde{\vec{\theta}} \mid \vec{\theta})\big|_{\tilde{\vec{\theta}} = M_m(\vec{\theta})} \vec{\delta}_{m}(\vec{\theta}),
\end{align*}
 where $\vec{\delta}_{m}(\vec{\theta}) \equiv \tilde{M}_m(\vec{\theta}) - M_m(\vec{\theta})$ is the approximation error that appears in~\eqref{eqn_supp:approxMCEMupdate}. 
 Now, as $m \rightarrow \infty$, assuming the law of large numbers holds, and suitable regularity conditions on $\log p(\vec{\theta} \mid \vec{Z}_1, \vec{Z}_2)$, we have that $M_m(\vec{\theta}) \to M(\vec{\theta})$, and 
\begin{equation}\label{eq:Vmat}
-\nabla^2_{\tilde{\vec{\theta}}}Q_{m}(\tilde{\vec{\theta}} \mid \vec{\theta})\big|_{\tilde{\vec{\theta}} = M_m(\vec{\theta})} \longrightarrow -\E_{\vec{Z}_2 \mid \vec{Z}_1, \vec{\theta}}\left[\nabla^2_{\tilde{\vec{\theta}}} \log p(\tilde{\vec{\theta}} \mid \vec{Z}_1, \vec{Z}_2)\big|_{\tilde{\vec{\theta}} = M(\vec{\theta})}\right] \equiv \vec{V}(\vec{\theta}).
\end{equation}
Therefore, for large $m$, we have
\begin{align}
\E\{\vec{\xi}(\vec{\theta})\} &\approx \E\{\vec{V}(\vec{\theta})\vec{\delta}_{m}(\vec{\theta})\} \approx -\vec{V}(\vec{\vec{\theta}})\vec{\mu}(\vec{\theta}), \label{eq:Egammab}\\
\Var\{\vec{\xi}(\vec{\theta})\}  
&\approx  \V\{\vec{V}(\vec{\theta})\vec{\delta}_{m}(\vec{\theta})\} \approx \vec{V}(\vec{\theta})\vec{\Sigma}(\vec{\theta})\vec{V}(\vec{\theta}).\label{eq:Vargammab}
\end{align}

We now derive the asymptotic expectation and asymptotic variance of the innovations $\epsilonb_{m}(\vec{\theta})$ defined in~\eqref{eq:innovations}. 
 Suppose it holds that
 \begin{align*}
\E_{\{\vec{Z}_2^{(j)} \mid \vec{Z}_1, \vec{\theta}\}_{j=1}^m}\left[\nabla_{\tilde{\vec{\theta}}} Q_{m}(\tilde{\vec{\theta}} \mid \vec{\theta})\big|_{\tilde{\vec{\theta}} =  M(\vec{\theta})}\right] &= \nabla_{\tilde{\vec{\theta}}} \E_{\{\vec{Z}_2^{(j)} \mid \vec{Z}_1, \vec{\theta}\}_{j=1}^m}\left[Q_{m}(\tilde{\vec{\theta}} \mid \vec{\theta})\big|_{\tilde{\vec{\theta}} =  M(\vec{\theta})}\right].
\end{align*}
Then,
\begin{equation}\label{eq:zero_exp}
\E_{\{\vec{Z}_2^{(j)} \mid \vec{Z}_1, \vec{\theta}\}_{j=1}^m}\left[\nabla_{\tilde{\vec{\theta}}} Q_{m}(\tilde{\vec{\theta}} \mid \vec{\theta})\big|_{\tilde{\vec{\theta}} =  M(\vec{\theta})}\right] = \nabla_{\tilde{\vec{\theta}}} Q(\tilde{\vec{\theta}} \mid \vec{\theta})\big|_{\tilde{\vec{\theta}} =  M(\vec{\theta})} = \zerob, 
\end{equation}
since the gradient at the true maximizer is zero by definition. Therefore, for large $m$, $\epsilonb_{m}(\vec{\theta}) \approx -\vec{V}^{-1}(\vec{\theta})\vec{\xi}(\vec{\theta})$, so that from~\eqref{eq:Egammab} we have $\E\{\epsilonb_{m}(\vec{\theta})\} \approx \vec{\mu}(\vec{\theta})$. Further, for an independent Monte Carlo sample, which we assume throughout, we have that
\begin{align*}
  \Var_{\{\vec{Z}_2^{(j)} \mid \vec{Z}_1, \vec{\theta}\}_{j=1}^m}\left[\nabla_{\tilde{\vec{\theta}}} Q_{m}(\tilde{\vec{\theta}} \mid \vec{\theta})\big|_{\tilde{\vec{\theta}} =  M(\vec{\theta})}\right] &=  \Var_{\{\vec{Z}_2^{(j)} \mid \vec{Z}_1, \vec{\theta}\}_{j=1}^m}\left[\frac{1}{m}\sum_{j=1}^m \nabla_{\tilde{\vec{\theta}}}\log p(\tilde{\vec{\theta}} \mid \vec{Z}_1, \vec{Z}_2^{(j)})\big|_{\tilde{\vec{\theta}} = M(\vec{\theta})}\right] \\
    &= \frac{1}{m} \Gammamat(\vec{\theta}),
\end{align*}
where 
$$
\Gammamat(\vec{\theta}) \equiv \frac{1}{m}\sum_{j=1}^m \Var_{\{\vec{Z}_2^{(j)} \mid \vec{Z}_1, \vec{\theta}\}_{j=1}^m}\left[\nabla_{\tilde{\vec{\theta}}}\log p(\tilde{\vec{\theta}} \mid \vec{Z}_1, \vec{Z}_2^{(j)})\big|_{\tilde{\vec{\theta}} = M(\vec{\theta})}\right].
$$
Now, as $m \to \infty$, $\Gammamat(\vec{\theta}) \to \vec{V}(\vec{\theta})$, given by~\eqref{eq:Vmat}. Then, using~\eqref{eq:Vargammab}, for large $m$ we have:
\begin{align*}
\Var\{\epsilonb_{m}(\vec{\theta})\}
&\approx \vec{V}(\vec{\theta})^{-1}\Var\{\vec{\xi}(\vec{\theta})\}\vec{V}(\vec{\theta})^{-1} 
+ 
\frac{1}{m}\vec{V}(\vec{\theta})^{-1}\Gammamat(\vec{\theta})\vec{V}(\vec{\theta})^{-1}\\
&\approx \vec{\Sigma}(\vec{\theta}) 
+ 
\frac{1}{m}\vec{V}(\vec{\theta})^{-1}. 
\end{align*}

Having obtained the first-order autoregressive formulation~\eqref{eq:AR1_MCEM} and a formal derivation of its asymptotic properties, we now consider the behavior near an isolated local maximizer $\vec{\theta}^*$ of the incomplete-data posterior density $p(\vec{\theta} \mid \vec{Z}_1)$. Since $\vec{\theta}^*$ is an isolated local maximizer of $p(\vec{\theta} \mid \vec{Z}_1)$, it is also an asymptotically stable fixed point of $M(\cdot)$ \citep[][Lemma~1]{Chan_1995}. In particular, $M(\vec{\theta}^*) = \vec{\theta}^*$, and hence a first-order Taylor expansion of $M(\vec{\theta})$ around $\vec{\theta}^*$ yields:
\begin{align}
M(\vec{\theta}) 
&\approx M(\vec{\theta}^*) + \vec{A}(\vec{\theta}^*) \left( \vec{\theta} - \vec{\theta}^* \right)\nonumber\\
&=\vec{\theta}^*  + \vec{A}(\vec{\theta}^*) \left( \vec{\theta} - \vec{\theta}^* \right),\label{eq:approx_M}
\end{align}
where $\vec{A}(\vec{\theta}^*) \equiv \nabla_{\vec{\theta}}M(\vec{\theta})\big|_{\vec{\theta} = \vec{\theta}^*}$. Substituting the approximation~\eqref{eq:approx_M} into \eqref{eq:AR1_MCEM} yields
\begin{equation}\label{eq:AR1_MCEM_approx}
\tilde{\vec{\theta}}_{m}^{(l)}  \approx  \vec{\theta}^*  +  \vec{A}(\vec{\theta}^*)\Big\{ \tilde{\vec{\theta}}_{m}^{(l-1)} - \vec{\theta}^* \Big\}+ \epsilonb_{m}(\tilde{\vec{\theta}}_{m}^{(l-1)}),
\end{equation}
and therefore $\{\tilde{\vec{\theta}}_{m}^{(l)}\}$ is approximately an AR(1) process with autoregressive parameter $\vec{A}(\vec{\theta}^*)$. Further, substituting the approximation~\eqref{eq:approx_M} into~\eqref{eqnsupp:EM} yields the linearized system (i.e., linear approximation to~\eqref{eqnsupp:EM}),
 \begin{equation}\label{eq:AR1_EM}
\vec{\theta}^{(l)} \approx \vec{\theta}^* + \vec{A}(\vec{\theta}^*)\Big\{\vec{\theta}^{(l-1)} - \vec{\theta}^* \Big\}.
\end{equation}
Under the assumption that the local maximizer $\vec{\theta}^*$ is also an asymptotically stable fixed point of~\eqref{eq:AR1_EM}, the spectral radius of $\vec{A}(\vec{\theta}^*)$ is strictly less than one \citep[][Thm.~4.13]{Elaydi_2005}. 
  Therefore, the AR(1) process \eqref{eq:AR1_MCEM_approx} that approximates $\{\tilde{\vec{\theta}}_{m}^{(l)}\}$ is stationary, and the innovations can be treated as independent and identically distributed with mean and variance evaluated at $\vec{\theta}^*$. This gives the following stationary AR(1) representation: 
\begin{equation}\label{eq:AR1_MCEM_approx2}
  \tilde{\vec{\theta}}_{m}^{(l)}  \approx  \vec{\theta}^*  + \vec{A}(\vec{\theta}^*) \Big\{  \tilde{\vec{\theta}}_{m}^{(l-1)} - \vec{\theta}^* \Big\} + \epsilonb_{m}(\vec{\theta}^*),
\end{equation}
where $\E\{\vec{\epsilon}_{m}(\vec{\theta}^{*})\} \approx \vec{\mu}(\vec{\theta}^*)$ and $\V\{\vec{\epsilon}_{m}(\vec{\theta}^{*})\} \approx \vec{\Sigma}(\vec{\theta}^*) + \frac{1}{m}\vec{V}(\vec{\theta}^{*})^{-1}$. The process~\eqref{eq:AR1_MCEM_approx2} has mean 
\begin{equation}\label{eqn_supp:stationary_mean}
\E\{\tilde{\vec{\theta}}_{m}^{(l)}\} \approx \vec{\theta}^* + \{\vec{I} - \vec{A}(\vec{\theta}^*)\}^{-1}\vec{\mu}(\vec{\theta}^*), 
\end{equation}
where $\vec{I}$ denotes the identity matrix; and variance $\V\{\tilde{\vec{\theta}}_{m}^{(l)}\} \approx \vec{C}(\vec{\theta}^*)$, where $\vec{C}(\vec{\theta}^*)$ is the solution to 
 $\vec{C}(\vec{\theta}^*) = \vec{A}(\vec{\theta}^*) \vec{C}(\vec{\theta}^*) \vec{A}(\vec{\theta}^*)^\tp + \vec{\Sigma}(\vec{\theta}^*) + \frac{1}{m}\vec{V}(\vec{\theta}^{*})^{-1}$. 

Importantly, in the ideal special case of mean-zero approximation error, $\vec{\mu}(\vec{\theta}) = \vec{0}$, and the mean 
 \eqref{eqn_supp:stationary_mean} 
reduces to $\vec{\theta}^*$, allowing one to recover $\vec{\theta}^*$ through the averaging procedure described in Section~\reffmain{sec:EMalgorithm:background}. 

\section{Continuous approximations of the 0--1 loss function}\label{sec:approximations01loss} 

In Section~\reffmain{sec:neuralMAPestimation}, we describe how an NBE can be constructed to approximate the MAP estimator. The approach hinges on the use of a continuous approximation of the 0--1 loss function. In this section, we consider several candidate loss functions. 

In the main text, we adopt the loss function, 
\begin{equation}\label{eqn:supp:tanhloss}
 L_{\rm{TANH}}(\vec{\theta}, \hat{\vec{\theta}}; \kappa) 
 = \rm{tanh}(\|\hat{\vec{\theta}} - \vec{\theta}\|/\kappa),
 \quad \kappa > 0.
\end{equation}
where $\|\cdot\|$ denotes the Euclidean norm (although any norm in $\mathbb{R}^d$ could be used) and $d$ denotes the dimension of $\vec{\theta}$. For $d=1$, Figure~\ref{fig:supp:lossfunctions}, panel A, shows \eqref{eqn:supp:tanhloss} and its gradient for $\kappa \in \{1, 0.5, 0.1, 0.05\}$. 
 For fixed $\kappa$, the gradient of \eqref{eqn:supp:tanhloss} is bounded as $\|\hat{\vec{\theta}} - \vec{\theta}\| \to 0$, and it yields the 0--1 loss function in the limit as $\kappa \to 0$, which we now prove.  
\begin{proof} From the definition of the hyperbolic tangent function, it follows that
\begin{align*}
 L_{\rm{TANH}}(\vec{\theta}, \hat{\vec{\theta}}; \kappa) 
 &= \rm{tanh}(\|\hat{\vec{\theta}} - \vec{\theta}\|/\kappa) \\
  &= \frac{\exp(\|\hat{\vec{\theta}} - \vec{\theta}\|/\kappa) - \exp(-\|\hat{\vec{\theta}} - \vec{\theta}\|/\kappa)}{\exp(\|\hat{\vec{\theta}} - \vec{\theta}\|/\kappa) + \exp(-\|\hat{\vec{\theta}} - \vec{\theta}\|/\kappa)},\\ 
 &= \frac{1 - \exp(-2\|\hat{\vec{\theta}} - \vec{\theta}\|/\kappa)}{1 + \exp(-2\|\hat{\vec{\theta}} - \vec{\theta}\|/\kappa)}. 
\end{align*}
First, suppose that $\hat{\vec{\theta}} = \vec{\theta}$, so that $\|\hat{\vec{\theta}} - \vec{\theta}\| = 0$. In this case, we have that
$$
 L_{\rm{TANH}}(\vec{\theta}, \hat{\vec{\theta}}; \kappa)\rvert_{\hat{\vec{\theta}} = \vec{\theta}} 
 = \frac{1 - \exp(0)}{1 + \exp(0)}
 = 0. 
$$
Second, suppose that $\hat{\vec{\theta}} \neq \vec{\theta}$, so that $\|\hat{\vec{\theta}} - \vec{\theta}\| > 0$. In this case, we have that
\begin{align*}
\lim_{\kappa \to 0} 
 L_{\rm{TANH}}(\vec{\theta}, \hat{\vec{\theta}}; \kappa) 
 &= \lim_{\kappa \to 0} \frac{1 - \exp(-2\|\hat{\vec{\theta}} - \vec{\theta}\|/\kappa)}{1 + \exp(-2\|\hat{\vec{\theta}} - \vec{\theta}\|/\kappa)} = 1, 
\end{align*}
since $\lim_{x\to\infty} e^{-ax} = 0$ for all $a > 0$.  Therefore, we obtain 
\begin{equation*}
 \lim_{\kappa \to 0} L_{\rm{TANH}}(\vec{\theta}, \hat{\vec{\theta}}; \kappa) = 
\begin{cases}
0 & \text{if }\hat{\vec{\theta}} = \vec{\theta},\\
1 & {\text{otherwise}},
\end{cases} 
\end{equation*}
which is the 0--1 loss function. 
\end{proof}

 \begin{figure}[!t]
	\centering
	\includegraphics[width = \linewidth]{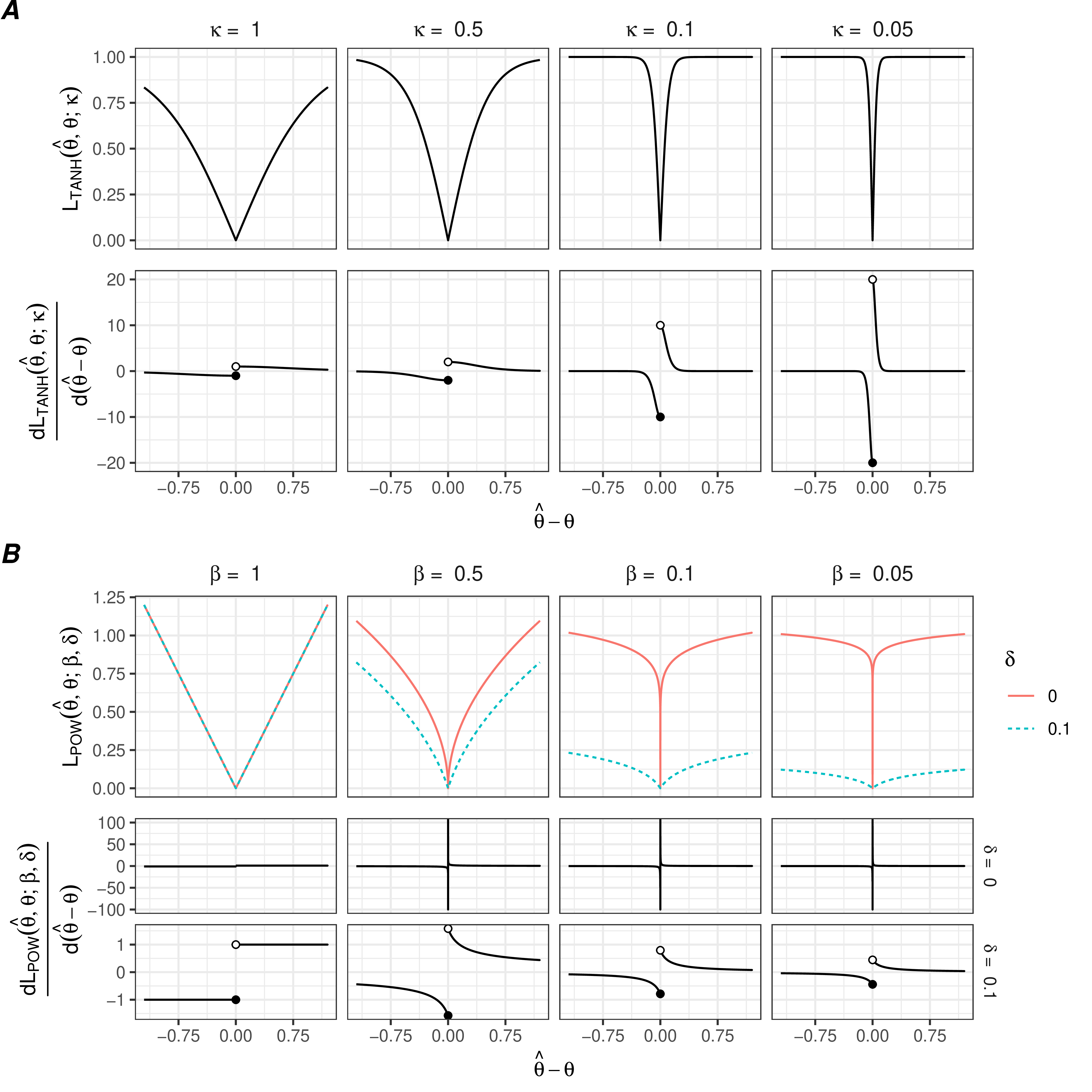}  
	\caption{(A) The loss function \eqref{eqn:supp:tanhloss} and its gradient for $\kappa \in \{1, 0.5, 0.1, 0.05\}$. (B) The loss function \eqref{eqn:supp:powerloss_modified} and its gradient for $\beta \in \{1, 0.5, 0.1, 0.05\}$ and $\delta \in \{0, 0.1\}$. For $\delta = 0$ and $\beta<1$, the gradient diverges at the origin; for $\delta = 0.1$, the gradient at the origin is finite and decreases as $\beta$ decreases.}\label{fig:supp:lossfunctions} 
\end{figure}

Other continuous approximations are available. For example, the loss function 
\begin{equation}\label{eqn:supp:powerloss}
 L(\vec{\theta}, \hat{\vec{\theta}}; \beta) 
 = \big(\|\hat{\vec{\theta}} - \vec{\theta}\|)^\beta, 
 \quad \beta > 0,
 \end{equation} 
generalizes the loss function given in \citet[][Eqn.~6]{Cressie_2022_loss_functions} to the multiparameter setting, and it also yields the 0--1 loss in the limit as $\beta \to 0$. A possible surrogate for the 0--1 loss function is therefore given by \eqref{eqn:supp:powerloss} with $\beta$ close to zero. However, with $\beta < 1$, the gradient of \eqref{eqn:supp:powerloss} diverges as $\|\hat{\vec{\theta}} - \vec{\theta}\| \to 0$, which can cause numerical instability during training. There are several ways to alleviate this issue. For instance, one may add a small positive constant $\delta$ to $\|\hat{\vec{\theta}} - \vec{\theta}\|$ in \eqref{eqn:supp:powerloss}, yielding the loss function, 
\begin{equation}\label{eqn:supp:powerloss_modified}
 L_{\rm{POW}}(\vec{\theta}, \hat{\vec{\theta}}; \beta, \delta) 
 = \big(\|\hat{\vec{\theta}} - \vec{\theta}\| + \delta)^\beta - \delta^\beta, 
 \quad \beta>0, \delta > 0,
 \end{equation} 
 Figure~\ref{fig:supp:lossfunctions}, panel B, shows \eqref{eqn:supp:powerloss_modified} and its gradient for $\beta \in \{1, 0.5, 0.1, 0.05\}$ and $\delta \in \{0, 0.1\}$. 

Another possible surrogate loss function is given by 
\begin{equation}\label{eqn:supp:corrloss}
 L_{\rm{CORR}}(\vec{\theta}, \hat{\vec{\theta}}; \rho, \alpha) = 1 - \{1 + (\|\hat{\vec{\theta}} - \vec{\theta}\|/\rho)^\alpha\}^{-1},  \quad \rho > 0,\, \alpha \geq 1,
\end{equation} 
 which yields the 0--1 loss in the limit as $\rho \to 0$, where the right-hand side of \eqref{eqn:supp:corrloss} is a variogram constructed from the Cauchy correlation function \citep{Gneiting_2004}. Similar to \eqref{eqn:supp:powerloss_modified} and \eqref{eqn:supp:tanhloss}, the gradient of \eqref{eqn:supp:corrloss} is bounded. In general, similarly constructed variogram models provide a broad family of functions that may serve as continuous approximations of the 0--1 loss function, with various degrees of differentiability at the origin that may be controlled by the practitioner. These classes are mentioned here for completeness; in this paper we use the loss function \eqref{eqn:supp:tanhloss}, which we find to work well in practice.

Finally, we make practical remarks relating to the training of NBEs under continuous approximations of the 0--1 loss function. 
First, for most conventional norms, $\|\hat{\vec{\theta}} - \vec{\theta}\|$ tends to increase as the dimension $d$ of $\vec{\theta}$ increases. This represents a potential pitfall since, for most approximations of the 0--1 loss function, like \eqref{eqn:supp:tanhloss}, the gradient vanishes as $\|\hat{\vec{\theta}} - \vec{\theta}\|$ becomes large. 
 A simple remedy for this is to replace the Euclidean norm with a scaled norm, for instance, $\|\cdot\|_p/d^{1/p}$, where $\|\cdot\|_p$ denotes the $L_p$ norm, to reduce the growth of distances with increasing dimension. 
 Second, to prevent the NBE from becoming ``stuck'' in regions of the parameter space where gradients are extremely small at the start of training, it can be beneficial to pretrain the network using conventional, well behaved loss functions, such as the mean-absolute-error or the mean-squared-error loss functions. Importantly, once the NBE has been pretrained, it can be trained under an approximation of the 0--1 loss function at almost no additional computational cost. This is achieved by fixing the initial portion of the network (often referred to as the ``summary network'', when the architecture is viewed in a modular fashion), which typically contains relatively expensive operations, such as convolution. Then updating occurs only in the latter portion of the network (often referred to as the ``inference network''), which typically consists of a simple, fast-to-evaluate multilayer perceptron. In our experiments, the pretraining stage using the mean-absolute-error loss function, ranges from several minutes to a few hours, whereas training under the approximate 0--1 loss function requires only a matter of seconds.

\section{Additional simulation study}\label{sec:additionalsimulationstudies}

In Section~\reffmain{sec:simulationstudies}, we conducted simulation studies involving spatial versions of the Gaussian process model and the hidden Potts model. Here, we conduct an additional simulation experiment, using a spatial version of a popular normal mean-variance mixture known as the generalized hyperbolic (GH) distribution. The likelihood is available for this case, permitting comparison 
 with a gold-standard likelihood-based estimator, namely the MAP estimator obtained by directly maximizing the posterior density. Here, we also consider a conventional simulation-based approach, approximate Bayesian computation \citep[ABC;][]{Tavare_1997_ABC, Pritchard_1999_ABC, Lintusaari_2017_ABC_review, Sisson_2018_ABC_handbook}. Before proceeding with the analysis, we first review the GH distribution, including its density function and the formulas required for conditional simulation.

The $n$-dimensional random vector $\vec{Z}$ is called a NMVM if it can be represented as, 
\begin{equation}\label{eqn:NMVM}
	\vec{Z} = \vec{\mu} + M\vec{\alpha} + \sqrt{M} \vec{V}, 
\end{equation}
where $\vec{\mu} \in \mathbb{R}^n$, $\vec{\alpha} \in \mathbb{R}^n$, $M$ is a positive latent random variable that is independent of the small-scale variation $\vec{V} \sim \Gau(\vec{0}, \vec{\Sigma})$, and where $\vec{\Sigma}$ is a covariance matrix. This family of distributions is closed under conditioning: for any partitioning of $\vec{Z}$ into components $\vec{Z}_1$ and $\vec{Z}_2$, and with $\vec{\mu}$, $\vec{\alpha}$, and $\vec{\Sigma}$ partitioned accordingly, $\vec{Z}_2 \mid \vec{Z}_1$ is also a NMVM, with parameters 
\begin{align*}
\vec{\mu}_{2\mid 1} &= \vec{\mu}_2 + \vec{\Sigma}_{21}\vec{\Sigma}_{11}^{-1}(\vec{Z}_1 - \vec{\mu}_1),\\
\vec{\alpha}_{2\mid 1} &= \vec{\alpha}_2 - \vec{\Sigma}_{21}\vec{\Sigma}_{11}^{-1}\vec{\alpha}_1,\\
\vec{\Sigma}_{22\mid 1} &= \vec{\Sigma}_{22} - \vec{\Sigma}_{21} \vec{\Sigma}_{11}^{-1} \vec{\Sigma}_{12},
\end{align*}
and with latent variable $M_{2|1}$ that is distributed according to $M \mid \vec{Z}_1$  \citep[][Theorem~1]{Jamalizadeh_Balakrishnan_2019_conditional_distribution_normal_mean_variance_mixture}. 

The generalized hyperbolic (GH) distribution is obtained when $M$ in \eqref{eqn:NMVM} follows a generalized inverse Gaussian (GIG) distribution. The GH distribution is prominent in financial modeling due to its flexible marginal distributions and infinite divisibility \citep{Barndorff-Nielsen_1997, Prause_1999}. 
We consider the parameterization of the GIG density employed by \citet{Browne_2015_GH_mixtures}, 
\begin{equation}\label{eqn:GIG:param2}
f_{\rm{GIG}}(m; \omega, \GHscaleparam, \lambda) 
= 
\frac{(m/\GHscaleparam)^{\lambda - 1}}{2\GHscaleparam K_{\lambda}(\omega)} \exp{\bigg\{-\frac{\omega}{2}\bigg(\frac{\GHscaleparam}{m} + \frac{m}{\GHscaleparam}\bigg)\bigg\}},  \quad m > 0, 
\end{equation}
with concentration parameter $\omega > 0$, shape parameter $\lambda \in \mathbb{R}$, scale parameter $\GHscaleparam > 0$, and where $K_{\lambda}(\cdot)$ denotes the modified Bessel function of the second kind of order $\lambda$. Then, $M_{2|1}$ also follows a GIG distribution with parameters  
 \begin{align*}
 \omega_{2\mid 1} &= \sqrt{\{\omega\GHscaleparam + (\vec{Z}_1 - \vec{\mu}_1)'\vec{\Sigma}_{11}^{-1}(\vec{Z}_1 - \vec{\mu}_1)\}\{\omega\GHscaleparam^{-1} + \vec{\alpha}_1'\vec{\Sigma}_{11}^{-1} \vec{\alpha}_1\}},\\
 \GHscaleparam_{2\mid 1} &= \sqrt{\{\omega\GHscaleparam + (\vec{Z}_1 - \vec{\mu}_1)'\vec{\Sigma}_{11}^{-1}(\vec{Z}_1 - \vec{\mu}_1)\}/\{\omega\GHscaleparam^{-1} + \vec{\alpha}_1'\vec{\Sigma}_{11}^{-1} \vec{\alpha}_1\}},\\
 \lambda_{2\mid 1} &= \lambda - n_1/2,
\end{align*}
where $n_1$ is the dimension of $\vec{Z}_1$ \citep[][Cor.~2]{Jamalizadeh_Balakrishnan_2019_conditional_distribution_normal_mean_variance_mixture}. Throughout, we fix $\GHscaleparam = 1$ for identifiability reasons. We then write $\vec{Z} \sim {\textrm{GH}}(\vec{\mu}, \vec{\alpha}, \vec{\Sigma},  \omega,  \GHscaleparam, \lambda)$, which has density function, 
$$
f_{\rm{GH}}(\vec{z};\vec{\mu}, \vec{\alpha}, \vec{\Sigma}, \omega, \GHscaleparam, \lambda) 
\propto
\frac{
K_{\lambda - d/2}\!\left(\GHsqrtterm\right)e^{(\vec{z} - \vec{\mu})^\tp \vec{\Sigma}^{-1}\vec{\alpha}}
}{
\left[\GHsqrtterm\right]^{n/2-\lambda}
}, 
$$
with normalizing constant,
$
(\omega\GHscaleparam^{-1} + \vec{\alpha}^\tp \vec{\Sigma}^{-1} \vec{\alpha})^{n/2-\lambda}(2\pi)^{-n/2}|\vec{\Sigma}|^{-1/2}\GHscaleparam^{-\lambda}/K_\lambda(\omega) 
$
 (\citeauthor{McNeil_2015}, \citeyear{McNeil_2015}, Eqn.~6.29; \citeauthor{Zhang_Huser_2022_normal_mean_variance_mixtures}, \citeyear{Zhang_Huser_2022_normal_mean_variance_mixtures}). In addition to being closed under conditioning, the GH family of distributions is also closed under marginalization (\citeauthor{Wei_2019}, \citeyear{Wei_2019}, Prop.~2). That is, if $\vec{Z} \equiv (\vec{Z}_1', \vec{Z}_2')' \sim {\textrm{GH}}(\vec{\mu}, \vec{\alpha}, \vec{\Sigma}, \omega, \GHscaleparam, \lambda)$, then $\vec{Z}_1 \sim {\textrm{GH}}(\vec{\mu}_1, \vec{\alpha}_1, \vec{\Sigma}_{11}, \omega, \GHscaleparam, \lambda)$.  However, although the likelihood function is available, likelihood-based estimators for the GH distribution require repeated evaluation of the Bessel function, which can be computationally burdensome. Likelihood-free methods may therefore still be useful for improving computational efficiency. 

Following \citet{Zhang_Huser_2022_normal_mean_variance_mixtures}, we consider a spatial version of the GH distribution. In this example, we take the spatial domain to be $\mathcal{D} \equiv [0, 1] \times [0, 1]$, we simulate complete data on a regular square grid of size $n = 16^2 = 256$, and we consider inference from $150$ incompletely observed spatial fields. In~\eqref{eqn:NMVM}, we set $\vec{\alpha} = \alpha \vec{1}$ for $\alpha \in \mathbb{R}$; we take $\vec{\Sigma}$ to be a correlation matrix determined by \eqreffmain{eqn:Matern_covariance_function} with range parameter $\rho > 0$, shape parameter $\nu > 0$, and $\tau = 0$; and we assume that the mean is modeled separately so that $\vec{\mu}$ can be set to zero. The parameters to be estimated are thus $\vec{\theta} \equiv (\alpha, \omega, \lambda, \rho, \nu)^\tp$. We assume that the parameters are independent \textit{a priori} with marginal priors \mbox{$\alpha\sim \Unif{-0.3}{0.3}$}, \mbox{$\omega\sim \Unif{0.1}{1}$}, \mbox{$\lambda\sim \Unif{-1}{1}$},  \mbox{$\rho\sim \Unif{0.05}{0.35}$}, and \mbox{$\nu \sim \Unif{0.5}{2}$}. 

We consider inference using NBEs based on either the masking approach (Algorithm~\reffmain{alg:one-hot}) or the EM approach (Algorithm~\reffmain{alg:neuralEM}) for handling missing data, under the same general settings given in Section~\reffmain{sec:generalsetting}. We compare the competing NBEs to the MAP estimator that numerically maximizes the posterior density, and a MAP estimator based on ABC. 

To facilitate a comparison with the NBEs, for ABC inference we use the same $25000$ simulated parameter--data pairs that were used to train the NBEs \citep{Mestdagh_2019_prepaid_ABC}. As the initial set of user-defined summary statistics, we use the empirical variogram, madogram, and rodogram \citep[corresponding to the variogram of order 2, 1, and 0.5, respectively;][]{Matheron_1987}, obtained by binning pairwise distances into 10 intervals with centers equally spaced between 0.05 and 0.5. We also include two measures of extremal dependence, the pairwise tail correlation coefficient \citep{Joe_1997} and the residual-tail-dependence coefficient \citep{Ledford_Tawn_1996}, both computed for all site pairs, evaluated at the high-probability threshold $u=0.95$, and averaged using the same distance bins as the variograms. Low-dimensional summary statistics offer numerous advantages for ABC \citep{Blum_2013_dimension_reduction_in_ABC}. To reduce the dimensionality of our user-defined summary statistics, we employ the semi-automatic procedure of \citet{Fearnhead_Prangle_2012}, implemented via the \pkg{abctools} package \citep{Nunes_Prangle_2016}, which yields a $d$-dimensional vector of summary statistics, where recall that $d$ denotes the dimension of $\vec{\theta}$ (see Figure~\ref{fig:ABC_summaries:GH}). ABC sampling is then performed by comparing the Euclidean distance between the summary statistics of the observed data and simulated data, using the package \pkg{abc} \citep{abc_package}. We retain only the parameter vectors whose corresponding simulated summary statistics fall within the smallest 1\% of distances from the observed summary statistics, 
 yielding $25000 \times 0.01 = 250$ accepted parameter vectors. 
 To refine the ABC posterior, we apply regression-adjustment methods \citep{Blum_2018_regression_adjustment_approaches} using the default neural-network regression model \citep{Blum_Francois_2010_ABC} employed by \pkg{abc}. We then define the ``ABC MAP'' by computing the mode of each parameter via kernel density estimation. Although the mode of the joint posterior generally does not coincide with the modes of the marginal posteriors, we adopt this approximation for simplicity of implementation.

\begin{figure}[t!]
	\centering 
	\includegraphics[width = \linewidth]{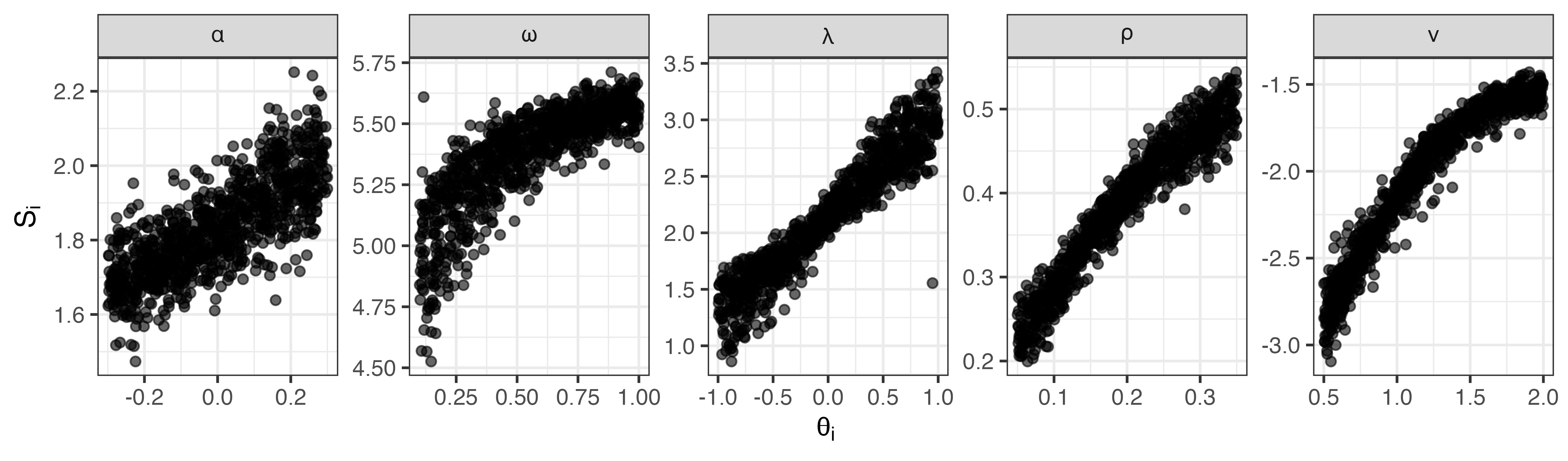}  
	\caption{Empirical plot of the semi-automatic summary statistics, $S_i$, $i = 1, \dots, 5$, obtained by the method of \citet{Fearnhead_Prangle_2012} against the corresponding true parameter values $\theta_i$,  $i = 1, \dots, 5$, shown separately for each of the five components of $\vec{\theta} \equiv (\alpha, \omega, \lambda, \rho, \nu)'$ in the GH distribution (Section~\ref{sec:additionalsimulationstudies}). Headings above each panel give the model parameter.}\label{fig:ABC_summaries:GH}
\end{figure}

 We compare the estimators using the time taken for their ``set-up'' and for estimation for a single data set post set-up (computational efficiency); and the empirical root-mean-squared errors (RMSEs) based on simulated data using 1000 parameter vectors sampled from the prior (statistical efficiency). 
 (Figure~\reffsupp{fig:risk_profile} shows the value of the objective function in Equation~\eqreff{eqn:optimization_task_encoding} or~\eqreff{eqn:optimization_task:adjustedprior_bayes} of the main text evaluated at the end of each epoch during training.)
 We test the estimators under two missingness mechanisms, MCAR and MICB, as defined in Section~\reffmain{sec:simulationstudies}, and we use RMSE$_{\text{MCAR}}$ and RMSE$_{\text{MICB}}$ to denote the RMSE of an estimator based on incomplete data simulated under the MCAR and MICB mechanisms, respectively. Table~\ref{tab:GH} summarizes the results, while Figure~\ref{fig:GH} shows simulated data and corresponding box plots of the empirical distributions of the estimates for one parameter setting. 

\begin{table}[t!]
\centering
\caption{
The ``set-up'' time, estimation time for a single test data set, and empirical RMSE under two missingness models for four estimators  of the parameters of the GH distribution 
(Section~\ref{sec:additionalsimulationstudies}). ``Set-up'' time refers to the total time required to sample $25000$ parameter vectors from the prior, simulate data from the model conditional on these parameters, and either train the neural networks (EM NBE and Masking NBE) or compute summary statistics (ABC MAP).
}\label{tab:GH}
\begin{tabular}{lccccc}
\hline
Estimator  & Set-up time (hrs) & Estimation time (s)    & RMSE$_{\text{MCAR}}$ & RMSE$_{\text{MICB}}$  \\ 
\hline
MAP              &  --             & 89.7               &  \textbf{0.062}      & \textbf{0.064} \\ 
EM NBE           &  1.67           & 0.812              &  \textbf{0.062}      & \textbf{0.064} \\  
Masking NBE      &  2.01           & \textbf{0.005}     &  0.067               &  0.191 \\
ABC MAP          &  \textbf{1.07}  & 2.03              &  0.097              & 0.107 \\ 
\hline
\end{tabular}
\end{table}

\begin{figure}[t!]
	\centering 
	\includegraphics[width = \linewidth]{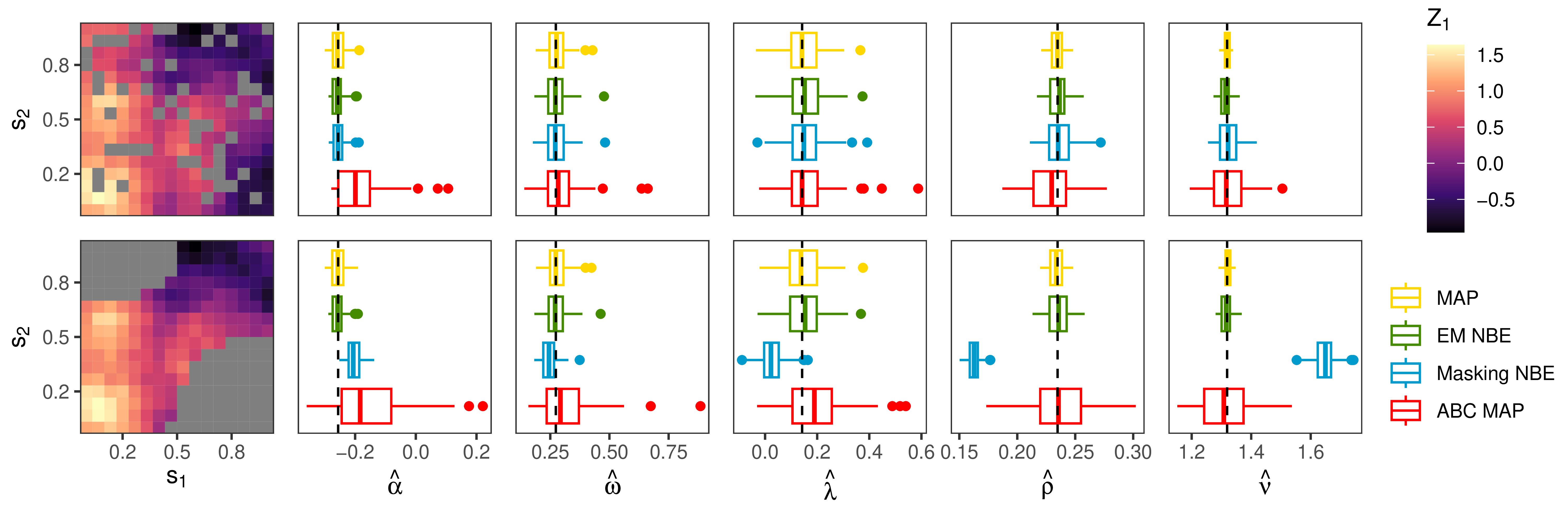}  
	\caption{Spatial data (first column) where the missingness is of type MCAR (first row) or MICB (second row) with missingness shown in gray, and corresponding empirical  distributions (second and third columns) for four estimators (MAP, EM NBE, Masking NBE, ABC MAP) of the parameters of the GH distribution  (Section~\ref{sec:additionalsimulationstudies}). True parameter values are shown as a dashed vertical line.
}\label{fig:GH}
\end{figure}

These results corroborate the findings in the main text. First, NBE approaches can be almost as statistically efficient as a gold-standard likelihood-based estimator, while being substantially more computationally efficient post-training. Second, the masking and EM approaches to neural Bayes estimation represent a trade-off between computational efficiency and robustness to the missingness mechanism. Third, the almost-as-good performance of the EM NBE relative to the MAP estimator indicates that Algorithm~\reff{alg:neuralEM} in the Estimation stage converges to a suitable point estimate across most, if not all, data sets; convergence for a single data set is illustrated in Figure~\ref{fig:convergence:GH}. 

Finally, like the EM NBE, the ABC MAP is agnostic to the missingness pattern, but its reliance on user-defined summary statistics reduces statistical efficiency. This reflects a broader limitation of ABC which is addressed by NBEs: Neural inference methods generally are well suited to settings in which sufficient statistics are unavailable, which is the case for the majority of models used in practice, including the models considered in 
 Sections~\reff{sec:simulationstudies} and~\reffmain{sec:application}. 
 
\section{Ensemble of NBEs}\label{sec:ensemble}

An ensemble of neural networks \citep{Hansen_Salamon_1990, Breiman_1996} consists of neural networks trained to solve the same task but with variations introduced during the training process. These variations could include different initial values for the neural-network parameters, different architectures, or different learning rates. The inference for an ensemble can be obtained by averaging the outputs of its individual networks, and it is often more accurate than that of any individual network.

 In our case, we obtain an ensemble of $J$ NBEs $\{\hat{\vec{\theta}}_{\vec{\gamma}_j^*}(\cdot) : j = 1, \dots, J\}$ by using $J$ different initial values for the neural-network parameters. We may posit the following working model to characterize the error of each NBE:
 \begin{equation}\label{eqn:single_NBE_estimate}
\hat{\vec{\theta}}_{\vec{\gamma}^*_j}(\vec{Z}) = \hat{\vec{\theta}}_{\text{Bayes}}(\vec{Z}) + \epsilon_j(\vec{Z}),\quad \vec{Z} \in \mathcal{Z}, \quad j = 1,\dots,J,
\end{equation} 
 where $\hat{\vec{\theta}}_{\text{Bayes}}(\vec{Z})$ is the Bayes estimate of $\vec{\theta}$ for the loss function being used, and $\epsilon_j(\vec{Z})$ is a mean-zero error term with variance $\sigma^2_\epsilon$. Then, the ensemble estimate is obtained as:  
\begin{equation}\label{eqn:ensemble_estimate}
\hat{\vec{\theta}}_{\text{ensemble}}(\vec{Z}) \equiv \frac{1}{J}\sum_{j=1}^J\hat{\vec{\theta}}_{\vec{\gamma}_j^*}(\vec{Z}), \quad \vec{Z} \in \mathcal{Z}.
\end{equation} 
Under our working model, for fixed $\vec{Z} \in \mathcal{Z}$,
 $$\E\{\hat{\vec{\theta}}_{\text{ensemble}}(\vec{Z})\} = \hat{\vec{\theta}}_{\text{Bayes}}(\vec{Z}), \quad \var\{\hat{\vec{\theta}}_{\text{ensemble}}(\vec{Z})\} = \sigma^2_\epsilon/J,$$ 
 where the expectation and variance are taken over the ensemble estimates for a fixed $\vec{Z}$. In practice, the errors may not have mean zero, nor be independent; however, provided that they are not perfectly correlated, the ensemble estimate \eqref{eqn:ensemble_estimate} will still have reduced variance compared with estimates from individual ensemble members. 

To illustrate the efficacy of an ensemble of NBEs, we consider a classical spatial Gaussian process model, where $\vec{Z} \equiv (Z_{1}, \dots, Z_{n})'$ are data obtained at locations $\{\vec{s}_{1}, \dots, \vec{s}_{n}\}$ in a spatial domain $\mathcal{D}$. Assume the data are Gaussian random variables with mean zero, variance one, and exponential spatial covariance function, 
$$
\textrm{cov}(Z_i, Z_j) = \textrm{exp}(-\|\vec{s}_i - \vec{s}_j\|/\theta), \quad i, j = 1, \dots, n,
$$
for unknown range parameter $\theta > 0$. Here, we take $\mathcal{D}$ to be the unit square, we simulate data on a grid with $n = 16^2 = 256$ observation locations, and we adopt the prior $\theta \sim \text{Unif}(0, 0.5)$. Note that the data are completely observed in this experiment, where our aim is to show the utility of an ensemble of NBEs. 
 
Since our data are gridded and complete, we construct the NBEs using convolutional neural networks \citep[CNNs; see, e.g.,][Ch.~9]{Dumoulin_2016, Goodfellow_2016_Deep_learning}. To demonstrate that the ensemble approach improves estimation across different neural-network architectures, we consider three architectures, each with a different number of trainable parameters~$\vec{\gamma}$. The first architecture (Architecture 1) was used by \citet{Zammit_2024_ARSIA}; it contains two convolutional layers and 150913 trainable parameters. The second architecture (Architecture 2) was proposed by \citet{Gerber_Nychka_2021_NN_param_estimation} and subsequently used by \citet{Sainsbury-Dale_2022_neural_Bayes_estimators} and \citet{Richards_2023_censoring}; it contains three convolutional layers and 638657 trainable parameters. The third architecture (Architecture 3), summarized in Table~\ref{tab:architecture:spatial}, is inspired by the well known ResNet architecture \citep{He_2016}. It contains a total of nine convolutional layers couched within so-called residual blocks \citep{He_2016}, and 390321 trainable parameters. The residual blocks mitigate the issue of vanishing gradients during training, thereby allowing for the construction of deeper networks that often outperform their shallower counterparts. We consider ensembles containing up to $10$ NBEs of the range parameter $\theta$, where each NBE is initialised with different, randomly generated values for the neural-network parameters $\vec{\gamma}$. 

\begin{table}[!t]
	\centering
	\caption{
	Summary of the CNN architecture used in Sections~\reff{sec:simulationstudies}~and~\reffmain{sec:application}, and in 
    Sections~\ref{sec:additionalsimulationstudies}~and~\ref{sec:ensemble}, 
    with $d$ the number of parameters in the given statistical model. The architecture can be used with grids of arbitrary size and shape; however, for simplicity, here we show the input and output dimensions of each layer given a square input grid of dimension $64\times64$. The table is divided into layers used for the summary network $\vec{\psi}(\cdot)$ and inference network $\vec{\phi}(\cdot)$ of the DeepSets representation given in Equation~(\reff{eqn:DeepSets}) of the main text. Each residual block consists of two sequential convolutional layers and batch normalization layers, along with a skip (shortcut) connection that directly connects the input of the block to its output \citep{He_2016}. The batch normalization layers compute the mean and variance for each input slice, normalize the input accordingly, and then apply a learnable affine transformation \citep{Ioffe_Szegedy_2015}. A padding of size 1 is used in each convolutional layer, and a stride of 2 is used in layers that reduce the input resolution (a stride of 1 is used otherwise) \citep[][Ch.~9]{Goodfellow_2016_Deep_learning}. For all but the final layer, we use rectified linear unit (ReLU) activation functions, $\text{ReLU}(x) \equiv \text{max}(0, x)$, while the final layer employs a softplus activation function, $\text{softplus}(x) \equiv \log(1 + e^x)$,  to ensure positive parameter estimates. When employing the masking approach of \citet{Wang_2022_neural_missing_data}, an extra input channel is needed to encode the missingness pattern, which doubles the number of parameters in the first layer. 
	}\label{tab:architecture:spatial}
	\begin{tabular}{ll*{4}{r}}
		\hline
		\hline
		Network & Layer type & Input dim. & Output dim.  & Kernel size & Parameters\\
		\hline
		$\vec{\psi}(\cdot)$ & Convolutional & [64, 64, 1] & [64, 64, 16] & $3\times 3$ & $\text{144}$ \\
		$\vec{\psi}(\cdot)$ & Batch normalization & [64, 64, 16] & [64, 64, 16] & - & 32 \\
		$\vec{\psi}(\cdot)$ & Residual block & [64, 64, 16] & [64, 64, 16] & $3\times 3$ & 4672 \\
		$\vec{\psi}(\cdot)$ & Residual block  & [64, 64, 16] & [32, 32, 32] & $3\times 3$ & 14528 \\
		$\vec{\psi}(\cdot)$ & Residual block & [32, 32, 32] & [16, 16, 64] & $3\times 3$ & 57728 \\
		$\vec{\psi}(\cdot)$ & Residual block & [16, 16, 64] & [8, 8, 128] & $3\times 3$ & 230144 \\
		$\vec{\psi}(\cdot)$ & Global mean pooling & [8, 8, 128] & [1, 1, 128] & - & 0 \\
		$\vec{\psi}(\cdot)$ & Flatten & [1, 1, 128] & [128] & - & 0 \\
		$\vec{\phi}(\cdot)$ & Dense & [128] & [128] & - & 16512 \\
		$\vec{\phi}(\cdot)$ & Dense & [128] & [512] & - & 66048 \\
		$\vec{\phi}(\cdot)$ & Dense & [512] & [$d$] & - & $513d$ \\
		\hline 
		\multicolumn{5}{l}{Total trainable parameters:} & $389808 + 513d$ \\
		\hline
	\end{tabular}
\end{table}

Figure~\ref{fig:ensemble} shows box plots of empirical RMSEs for 10 individual NBEs and the corresponding ensemble of $J=10$ NBEs, plotted for each architecture (left panel), and empirical RMSEs plotted as a function of the number of NBEs in the ensemble (right panel). The empirical RMSEs are based on a test set of 1000 parameter-data pairs sampled from the joint distribution of the parameters and the data, that is, 
\begin{equation}\label{eqn:RMSE_supp}
\text{RMSE}(\hat{\theta}(\cdot)) \equiv \sqrt{\frac{1}{1000} \sum_{k=1}^{1000} \{\hat{\theta}(\vec{Z}^{(k)}) - \theta^{(k)}\}^2},
\end{equation}
where $\theta^{(k)} \sim \text{Unif}(0, 0.5)$, $\vec{Z}^{(k)} \sim p(\vec{Z} \mid \theta^{(k)})$, and the estimator $\hat{\theta}(\cdot)$ corresponds either to a single NBE or an ensemble of NBEs as defined in \eqref{eqn:ensemble_estimate}. From Figure~\ref{fig:ensemble}, we see that the use of an ensemble of NBEs substantially reduces the RMSE when compared to a single NBE, irrespective of the architecture. 

\begin{figure}[!t]
	\centering
	\includegraphics[width = \linewidth]{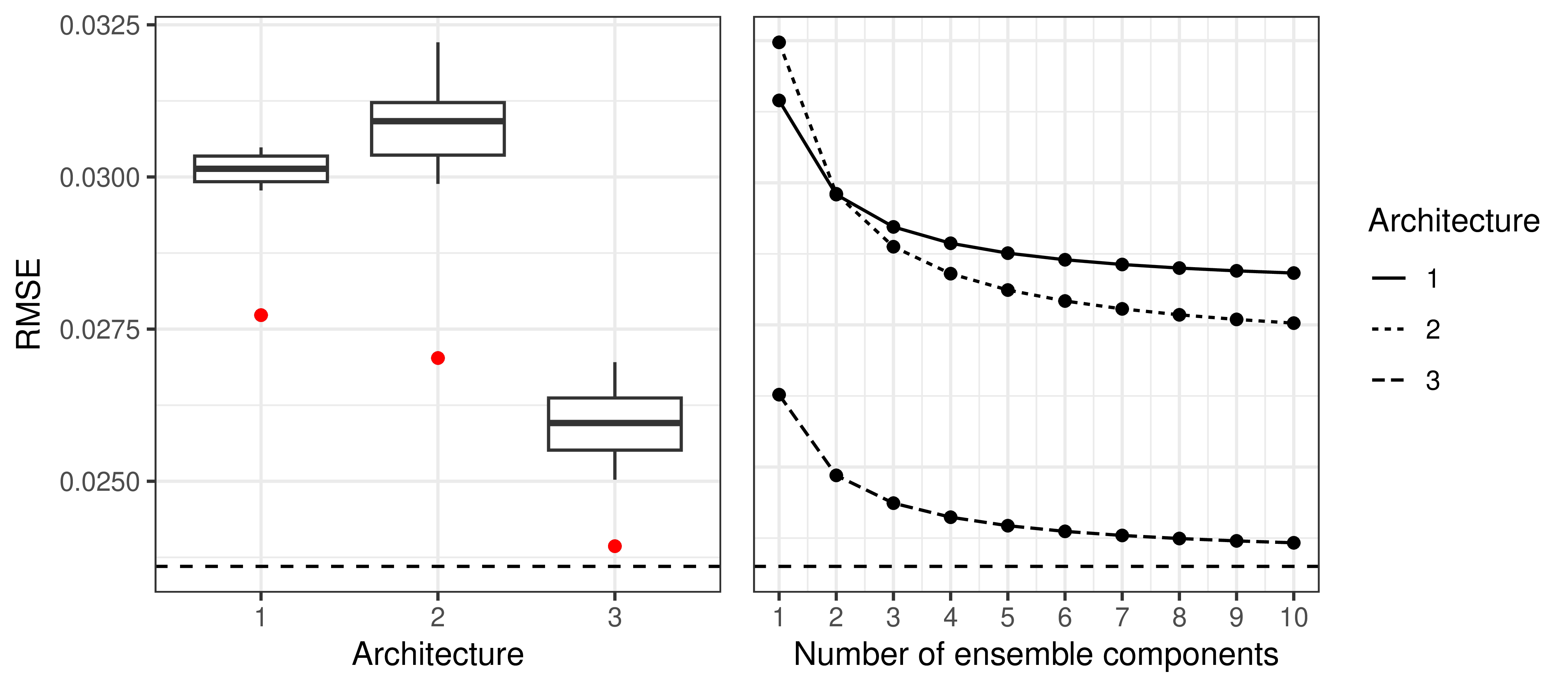}  
	\caption{
	(Left panel) Empirical RMSEs for the three architectures detailed in Section~\ref{sec:ensemble}, with box plots summarizing the RMSEs for 10 individual NBEs, and red points showing the RMSEs of the corresponding ensemble of $J=10$ NBEs. 
  (Right panel) Empirical RMSEs as a function of the number of NBEs in the ensemble, for the three different architectures used in the experiment. For each ensemble size, RMSEs are averaged over all possible combinations of the NBEs, so that performance is not tied to a specific ordering of the components. In both panels, the horizontal dashed line corresponds to the RMSE of the MAP estimator that numerically maximizes the unnormalized analytic posterior density. 
	}\label{fig:ensemble} 
\end{figure}

\clearpage
\section{MCMC sampling in hidden Potts models}\label{sec:MCMCHiddenPotts}

Here we describe two MCMC algorithms for sampling from the joint distribution of the hidden Potts model, which we consider in Sections~\reff{sec:Potts}~and~\reffmain{sec:application}, and which consists of a latent \citet{Potts_1952} model and a data model. In Section~\ref{sec:MCMC:Gibbsupdates}, we describe a simple Gibbs sampler, while in Section~\ref{sec:MCMC:MH} we describe a joint Metropolis--Hastings (MH) scheme that allows for point-mass distributions of the observations, given the hidden process, which we proposed for the analysis of the Arctic sea-ice data in Section~\reffmain{sec:application}. 

Throughout, we use the notation for the hidden Potts model introduced in the main text: Let $\vec{Y} \equiv (Y_1, \dots, Y_n)^\tp$ and $\vec{Z} \equiv (Z_1, \dots, Z_n)^\tp$ denote the hidden labels and observed data, respectively; let $\vec{Y}_{\backslash i}$ and $\vec{Z}_{\backslash i}$ denote all pixel labels and data, respectively, excluding the $i$th pixel; let $\mathcal{N}_i$ denote the indices of the ``neighbors'' of pixel $i$, $\beta$ denote the Potts model's parameter, $\vec{\lambda}$ denote the parameters associated with the conditional distributions of the data, and $\vec{\theta} \equiv (\beta, \vec{\lambda}^\tp)^\tp$. 

\subsection{Simple Gibbs updates}\label{sec:MCMC:Gibbsupdates}

In the standard Gibbs sampler, for each $i = 1, \dots, n,$ we alternate between sampling from the following full conditionals:
\begin{align}
 &p(Y_i \mid \vec{Y}_{\backslash i}, \vec{Z}, \vec{\theta}), \label{eqn:Yi}\\
 &p(Z_i \mid \vec{Y}, \vec{Z}_{\backslash i}, \vec{\theta}). \label{eqn:Zi}
\end{align}
By conditional independence, \eqref{eqn:Zi} simplifies to
\begin{equation}\label{eqn:gibbsZi:conditionalindep}
p(Z_i \mid \vec{Y}, \vec{Z}_{\backslash i}, \vec{\theta}) 
= p(Z_i \mid Y_i, \vec{\lambda}),
\end{equation}
which is straightforward to sample from by construction.  
Next, since $Y_i$ is discrete, we can sample \eqref{eqn:Yi} by enumerating all possible labels $y = 1, \dots, Q$ and normalizing their unnormalized probabilities. Specifically,
\begin{align*}
p(Y_i = y \mid \vec{Y}_{\backslash i}, \vec{Z}, \vec{\theta}) 
&\propto p(Y_i = y, \vec{Y}_{\backslash i}, \vec{Z} \mid \vec{\theta}) \\
&= p(\vec{Y} \mid \beta) \prod_{j=1}^n p(Z_j \mid Y_j, \vec{\lambda}) \,\bigg|_{Y_i = y} \\
&\propto \exp\!\big\{\beta T(\vec{Y})\big\} p(Z_i \mid Y_i, \vec{\lambda}) \,\bigg|_{Y_i = y} \\
&\propto \exp\!\left\{\beta S(y, i)\right\} p(Z_i \mid Y_i = y, \vec{\lambda}),
\end{align*}
where 
$T(\vec{Y}) \equiv \sum_{\{i', j'\} \in \mathcal{E}} \mathbb{I}(Y_{j'} = Y_{i'})$ 
is the Potts model's sufficient statistic, $\mathcal{E}$ denotes the set of undirected edges between neighbors, and 
$S(y, i) \equiv \sum_{j \in \mathcal{N}_i} \mathbb{I}(Y_j = y)$ 
is the number of neighbors of site $i$ with label $y$.  
Normalizing over all possible labels gives
\begin{equation}\label{eqn:gibbsYi}
p(Y_i = y \mid \vec{Y}_{\backslash i}, \vec{Z}, \vec{\theta})
= \frac{
\exp\!\left\{\beta S(y, i) \right\} p(Z_i \mid Y_i = y, \vec{\lambda})
}{
\sum\limits_{y'=1}^Q 
\exp\!\left\{\beta S(y', i) \right\} p(Z_i \mid Y_i = y', \vec{\lambda})
}.
\end{equation}
When the conditional distribution of the data, $p(Z_i \mid Y_i = y, \vec{\lambda})$, is a point mass for some label $y$, the corresponding pair $\{Y_i, Z_i\}$ becomes an absorbing state in the Markov chain, violating irreducibility (i.e., the chain cannot move between all states with positive probability).  
 Next, we discuss how this problem can be circumvented. 

\subsection{Joint MH updates}\label{sec:MCMC:MH}

To avoid absorbing states when sampling from a hidden Potts model using MCMC, each pair $\{Y_i, Z_i\}$, $i = 1, \dots, n$, is treated as a single unit and updated jointly via a MH update step, with carefully chosen proposals that ensure the acceptance probabilities do not depend on the conditional distributions of the data. Specifically, at each iteration and for each pair $\{Y_i, Z_i\}$, the target distribution is:
\begin{align*}
p(Y_i, Z_i \mid \vec{Y}_{\backslash i}, \vec{Z}_{\backslash i}, \vec{\theta})
&= p(Y_i \mid \vec{Y}_{\backslash i}, \vec{Z}_{\backslash i}, \vec{\theta}) \, p(Z_i \mid Y_i, \vec{Y}_{\backslash i}, \vec{Z}_{\backslash i}, \vec{\theta}) \\
&= p(Y_i \mid \vec{Y}_{\backslash i}, \beta) \, p(Z_i \mid Y_i, \vec{\lambda}) \\
&\propto \exp\!\left\{\beta S(Y_i, i)\right\} p(Z_i \mid Y_i, \vec{\lambda}).
\end{align*}
Given that the current value of the pair $\{Y_i, Z_i\}$ is $\{y, z\}$, we propose a new value $\{y', z'\}$ from a proposal distribution $q\big(y', z' \mid y, z\big)$, and we accept this proposal with probability
\begin{align*}
\alpha
&\equiv \min\!\left\{ 1,\,
\frac{p(Y_i = y', Z_i = z' \mid \vec{Y}_{\backslash i},\vec{Z}_{\backslash i},\vec{\theta})}
{p(Y_i = y, Z_i = z \mid \vec{Y}_{\backslash i},\vec{Z}_{\backslash i},\vec{\theta})}
\cdot
\frac{q(y, z \mid y', z')}
{q(y', z' \mid y, z)}
\right\}\\
&= \min\!\left\{ 1,\,
\frac{ \exp\!\left\{\beta S(y', i)\right\}
p(Z_i = z' \mid Y_i = y', \vec{\lambda})}
{\exp\!\left\{\beta S(y, i)\right\}
p(Z_i = z \mid Y_i = y, \vec{\lambda})}
\cdot
\frac{q(y, z \mid y', z')}
{q(y', z' \mid y, z)}
\right\}.
\end{align*}
Suppose that we take
$$
q\big(y', z' \mid y, z\big) = g(y' \mid y)\, r(z' \mid y'),
$$
where $g(y' \mid y)$ is a proposal distribution for the new label given the current label, and $r(z' \mid y')$ is a proposal distribution for the new observation given the proposed label. 
Then the MH acceptance probability becomes
\begin{align*}
\alpha
&= \min\!\left\{ 1,\,
\frac{ \exp\!\left\{\beta S(y', i)\right\}
p(Z_i = z' \mid Y_i = y', \vec{\lambda})}
{\exp\!\left\{\beta S(y, i)\right\}
p(Z_i = z \mid Y_i = y, \vec{\lambda})}
\cdot
\frac{g(y \mid y')\, r(z \mid y)}
{g(y' \mid y)\, r(z' \mid y')}
\right\}.
\end{align*}
If we choose $r(z' \mid y') = p(Z_i = z' \mid Y_i = y', \vec{\lambda})$ (i.e., propose the new observation directly from its conditional distribution given the proposed label), the observation-distribution terms 
cancel. 
If, in addition, $g(y' \mid y)$ is symmetric, 
 the label-proposal terms cancel as well, leaving
\begin{equation*}\label{eqn:acceptance_probability}
\alpha = \min\!\left\{ 1,\,\exp\!\left[\beta \{S(y', i) - S(y, i)\}\right]\right\}.
\end{equation*}
Thus, under these choices, the MH acceptance probability depends only on the parameter $\beta$ and the difference between the local summary statistics of the proposed and current labels, $S(y', i) - S(y, i)$. In particular, it does not depend on the conditional distribution of the data and can therefore be used to sample from a hidden Potts model that has conditional point-mass distributions in the data model. 

\section{Additional figures}\label{sec:additionalfigures}

\begin{figure}[!htb]
	\centering 
	\includegraphics[width = 0.95\linewidth]{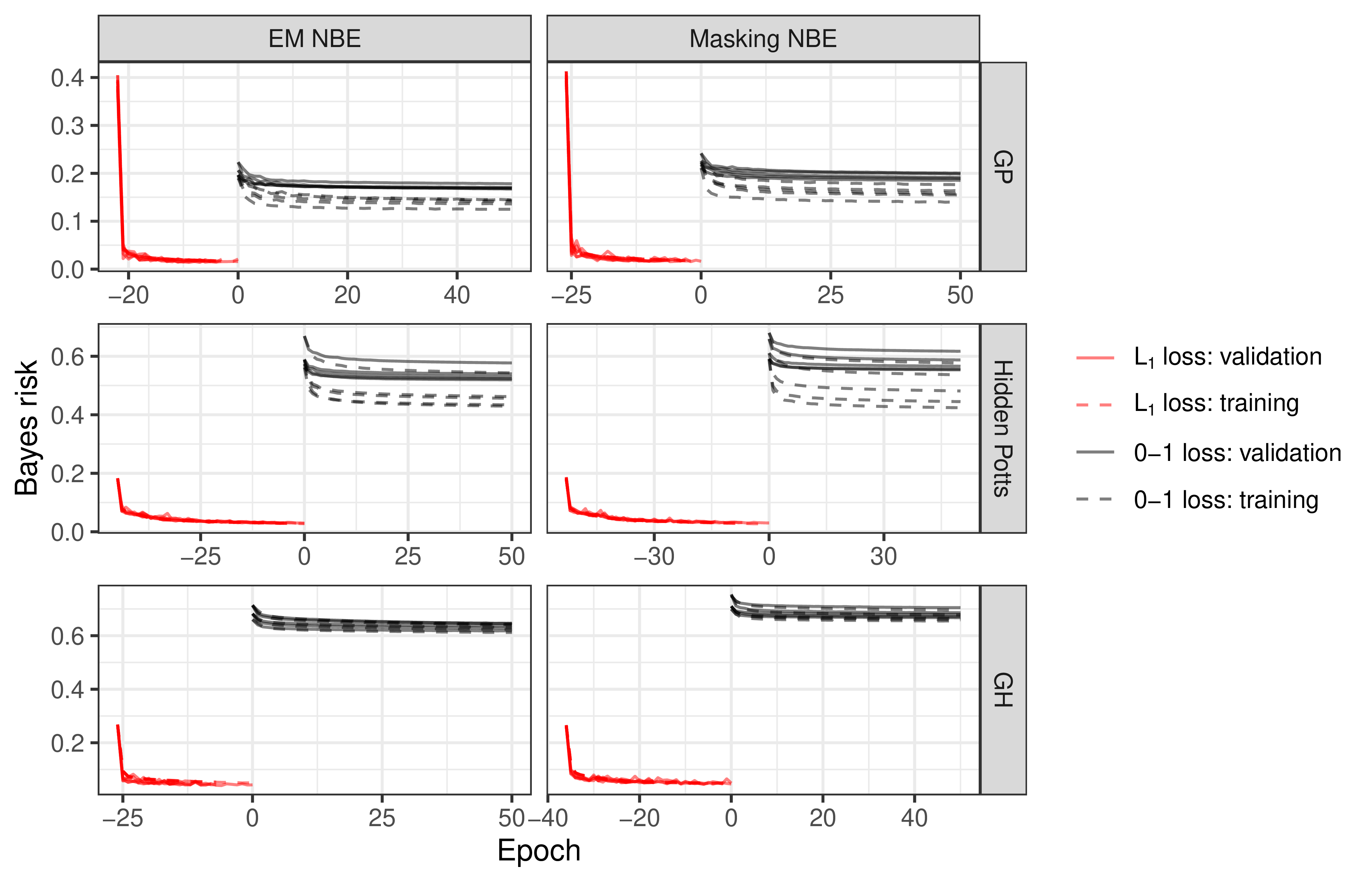}  
	\caption{Empirical risk function (objective function in Equation~\eqreff{eqn:optimization_task_encoding} or~\eqreff{eqn:optimization_task:adjustedprior_bayes} of the main text) evaluated on the training data (dashed lines) and withheld validation data (solid lines) versus epoch, for an ensemble of five NBEs trained for the EM (first column) and masking (second column) approaches, across all models considered in our simulation experiments: the Gaussian process model (Section~\reffmain{sec:GP}), the hidden Potts model (Section~\reffmain{sec:Potts}), and the spatial GH distribution (Section~\ref{sec:additionalsimulationstudies}). Red lines correspond to the Bayes risk under the $L_1$ (mean-absolute-error) loss function, used to pretrain the networks, while gray lines correspond to the Bayes risk under the continuous approximation of the 0--1 loss function in \eqreffmain{eqn:surrogateloss}, with tuning parameter $\kappa = 0.1$. Negative epoch values denote the pretraining stage, with epoch 0 corresponding to the transition to the training stage under the continuous approximation of the 0--1 loss function.}\label{fig:risk_profile}
\end{figure}

\begin{figure}[t!]
	\centering 
	\includegraphics[width = 0.9\linewidth]{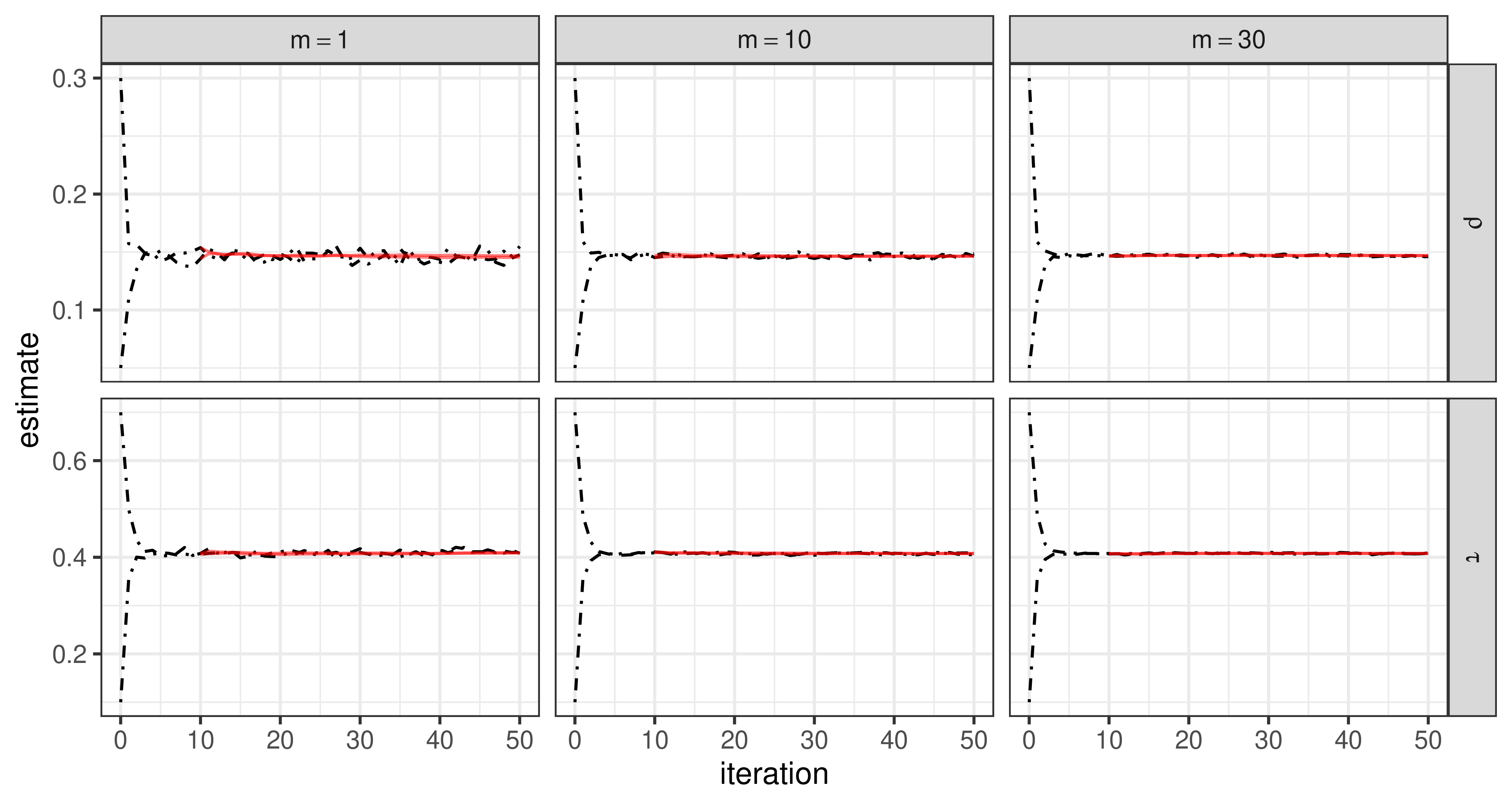}  
	\caption{\convergencecaption{Gaussian process model (Section~\reffmain{sec:GP})} The data are MCAR with 20\% missingness. The true parameter values are $\rho = 0.15$ and $\tau = 0.4$.}\label{fig:convergence:GP}
\end{figure}

\begin{figure}[t!]
	\centering 
	\includegraphics[width = 0.9\linewidth]{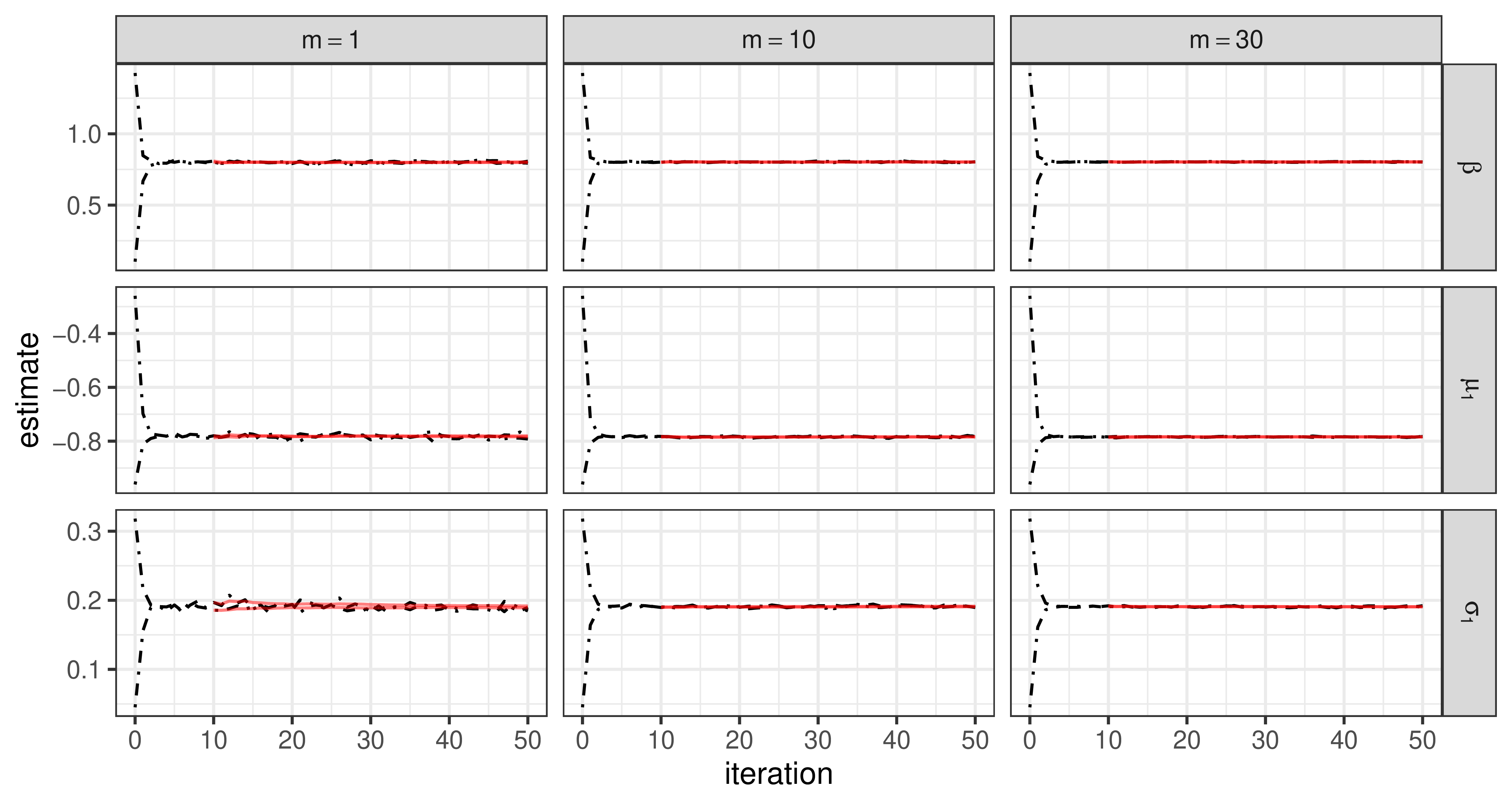}  
	\caption{\convergencecaption{hidden Potts model (Section~\reffmain{sec:Potts})} For simplicity, just three of the seven parameters are shown, where the true parameter values are $\beta = 0.79$, $\mu_1 = -0.79$, and $\sigma_1 = 0.19$.}\label{fig:convergence:Potts}
\end{figure}

\begin{figure}[t!]
	\centering 
	\includegraphics[width = \linewidth]{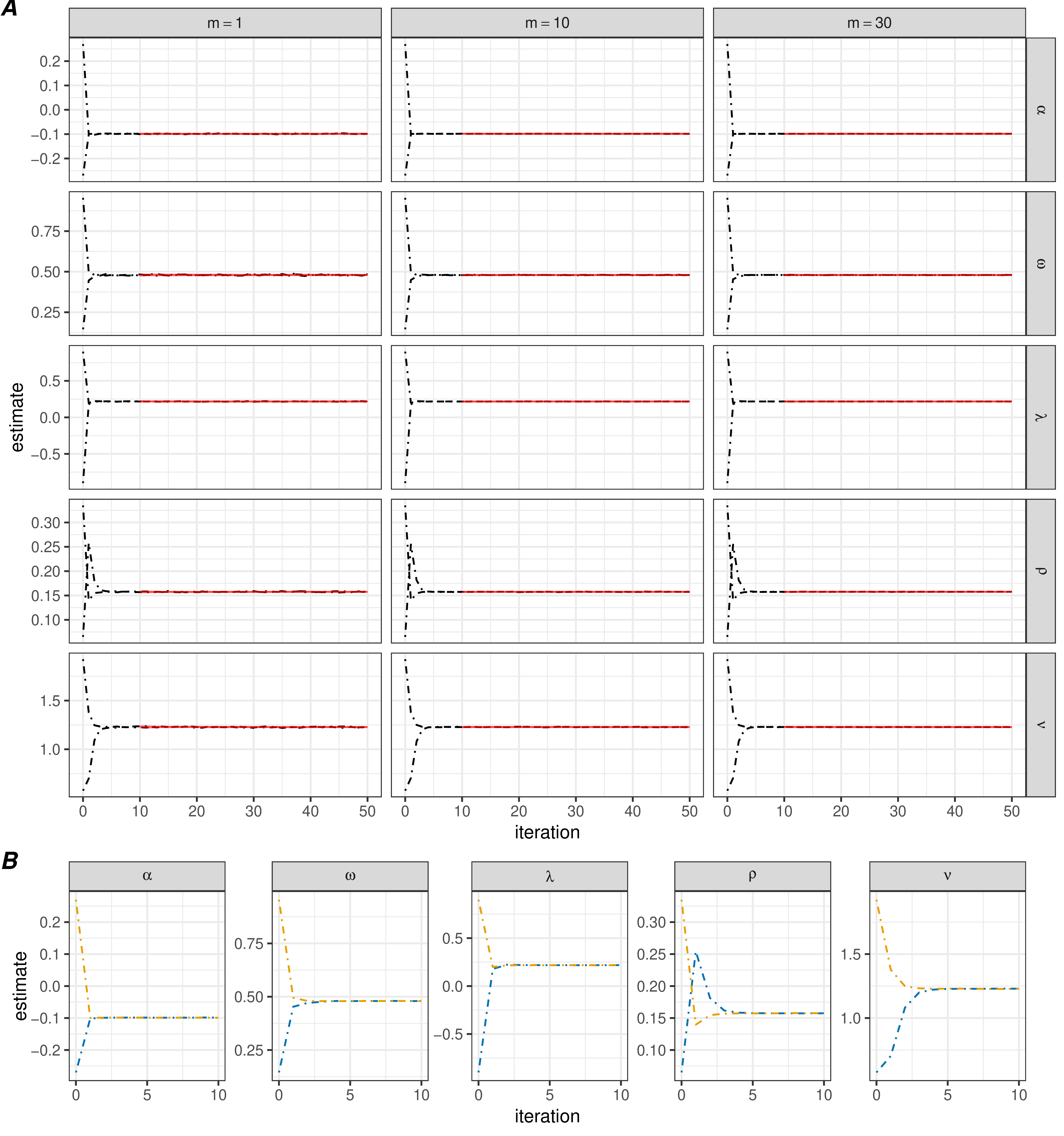}  
	\caption{(A) \convergencecaption{GH distribution (Section~\ref{sec:additionalsimulationstudies})} The data are MCAR with 20\% missingness. (B) EM NBE sequences computed with $m=30$ and colored by their initial parameter values. 
  The true parameter values are $\alpha = -0.122$, $\omega = 0.373$, $\lambda = 0.195$, $\rho = 0.160$, and $\nu = 1.24$.}\label{fig:convergence:GH}
\end{figure}

\begin{figure}[t!]
\centering
\includegraphics[width = \linewidth]{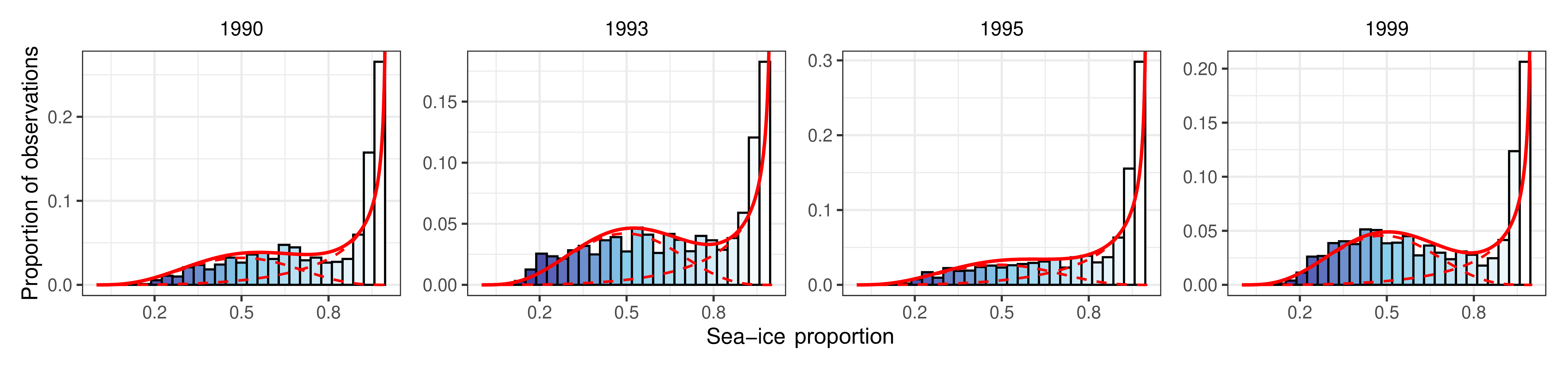}  
\caption{Histograms of observed Arctic sea‐ice proportions, restricted to the open interval \mbox{(0, 1)}, together with fitted marginal Beta mixture densities for selected years. Red solid lines show the fitted marginal mixture densities, and red dashed lines show the individual Beta components. 
}\label{fig:sea_ice:histograms} 
\end{figure}

\begin{figure}[t!]
\centering
\includegraphics[width = \linewidth]{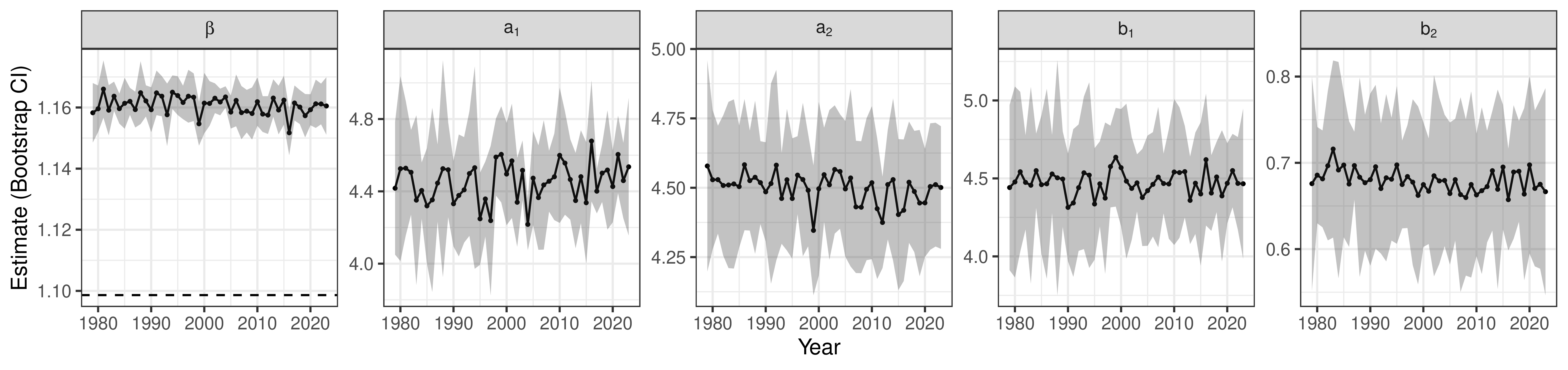}  
\caption{Estimated parameters of the hidden Potts model (described in Section~\reffmain{sec:application}) for Arctic sea‐ice proportion over time. Lines show point estimates with shaded regions indicating bootstrap confidence intervals. Headings above each panel give the model parameter. The dashed horizontal line in the first panel marks the critical value of the spatial-dependence parameter $\beta$, separating disordered and ordered spatial regimes.}\label{fig:sea_ice:all_parameters} 
\end{figure}

\begin{figure}[t!]
	\centering
	\includegraphics[width = \linewidth]{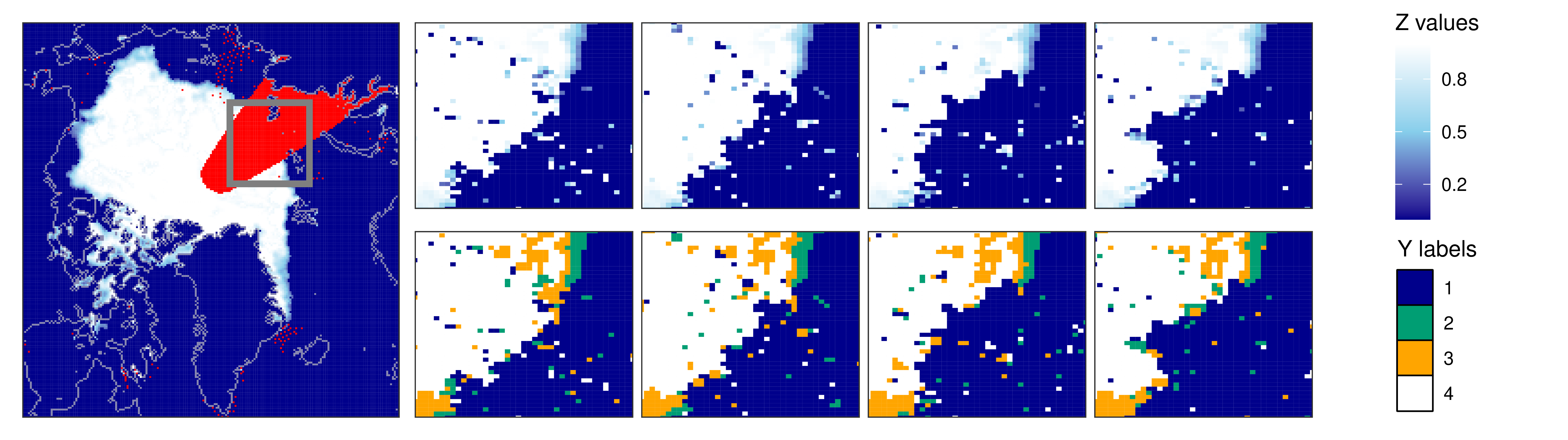}
	\caption{(Left) Arctic sea-ice data from September 1, 1995, with missing pixels colored red and faint gray lines denoting coastlines, with Greenland appearing at the bottom. (Remaining panels) conditional simulations for all grid cells within the gray box of the left panel. The conditional simulations are obtained from the fitted hidden Potts model described in Section~\reffmain{sec:application}, where the hidden Potts model has $Q=4$ labels. Conditional on the label $Y_i$ (bottom four panels of conditional simulations), the observation $Z_i$ (top four panels of conditional simulations) follows one of two Beta distributions for proportions in the open interval $(0, 1)$ ($Y_i = 2, 3$), or a point mass at $0$ or $1$ corresponding to no sea ice ($Y_i = 1$) and full sea ice ($Y_i = 4$), respectively. 
	}\label{fig:sea_ice:sims} 
\end{figure} 

\begin{figure}[t!]
	\centering
	\includegraphics[width = \linewidth]{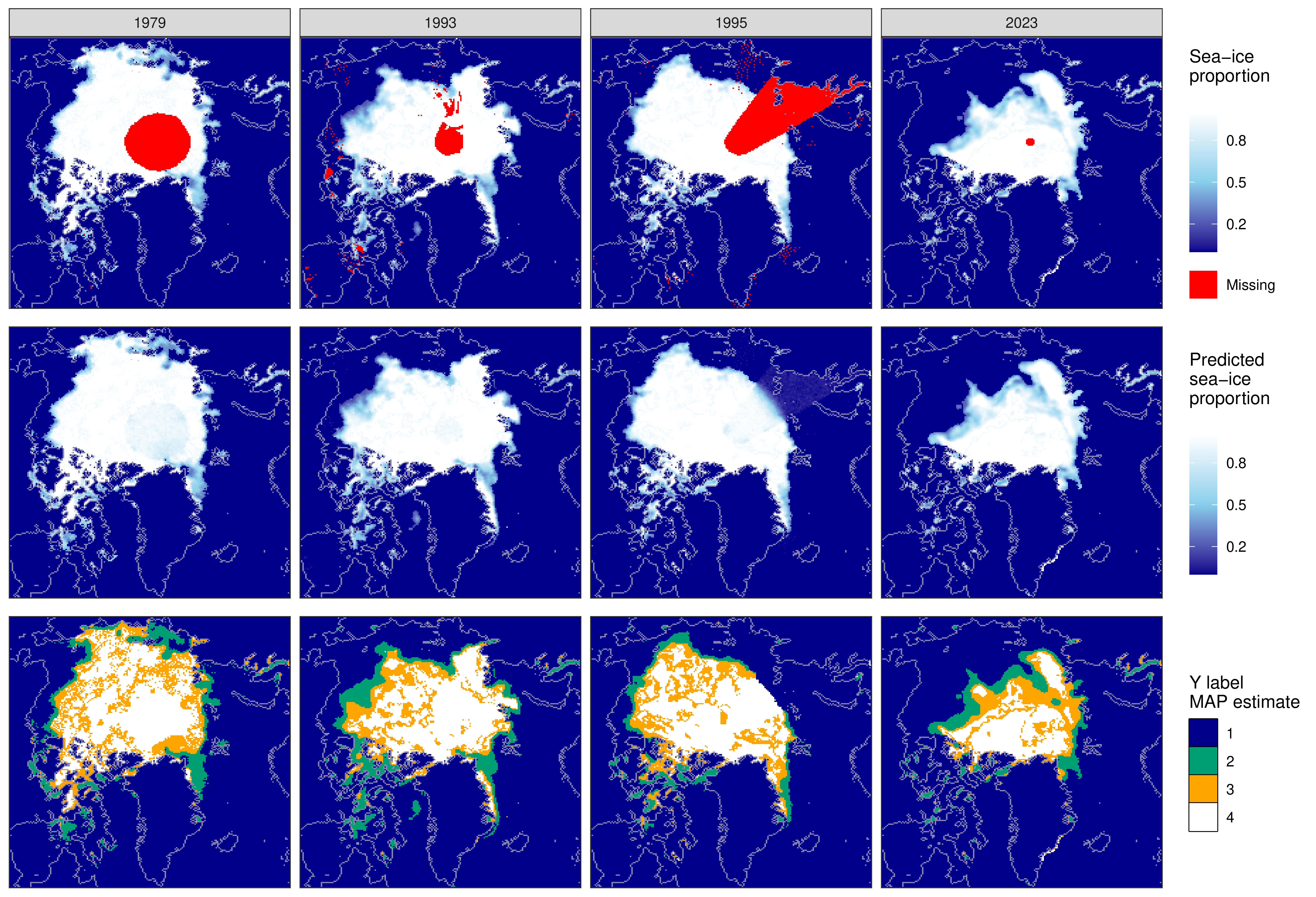}
	\caption{
		Arctic sea-ice data (first row) from September 1 for the years 1979, 1993, 1995, and 2023, together with corresponding predictions of sea-ice proportion and hidden states (second and third rows) from the fitted hidden Potts model described in Section~\reffmain{sec:application}. Faint gray lines denote coastlines, with Greenland appearing at the bottom.
	}\label{fig:sea_ice:predictions_and_Y} 
\end{figure} 

\begin{figure}[htbp]
    \centering
    
    \begin{subfigure}[b]{\linewidth}
        \centering
        \includegraphics[width=0.95\linewidth]{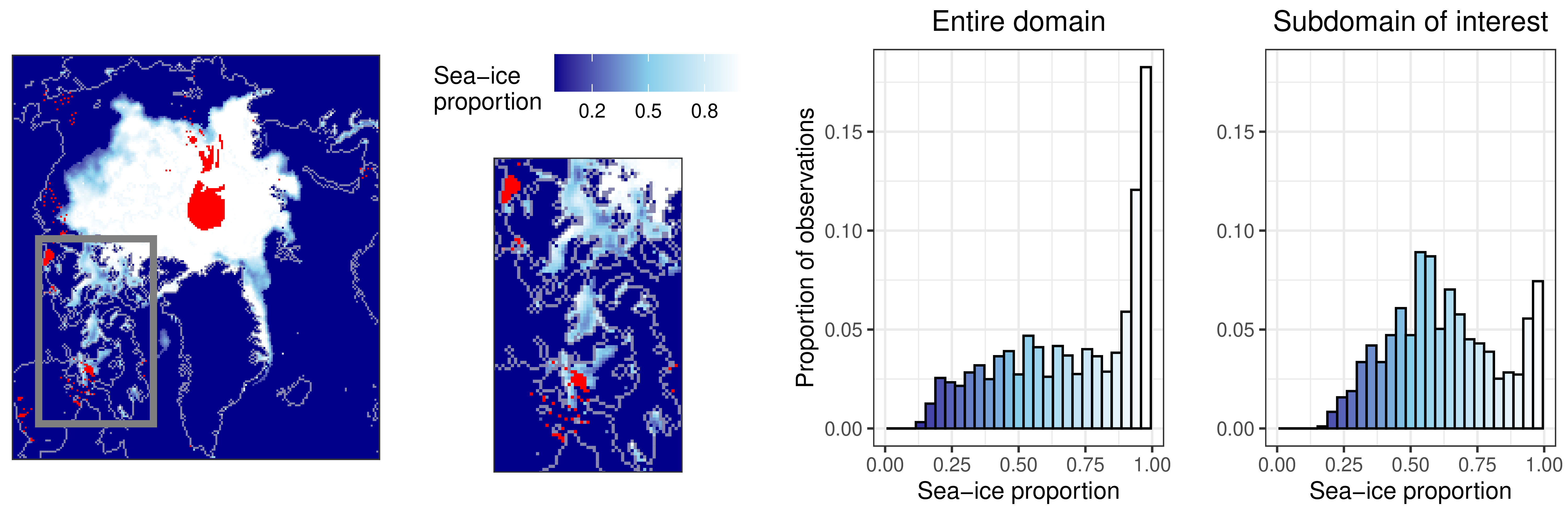}
        \caption{Arctic sea‐ice proportions from September 1, 1993. The left panel shows the full Arctic domain, with the gray box indicating the subdomain of interest shown in the middle panel, corresponding to the Canadian Arctic Archipelago west of Greenland. Faint gray lines denote coastlines, with Greenland appearing at the bottom. The right panels display histograms of sea‐ice proportions restricted to the open interval \mbox{(0, 1)}, computed over the entire domain and over the subdomain of interest, respectively.}
        \label{fig:plot1}
    \end{subfigure}
    
	\vspace{0.5cm} 
    \begin{subfigure}[b]{\linewidth}
        \centering
        \includegraphics[width=\linewidth]{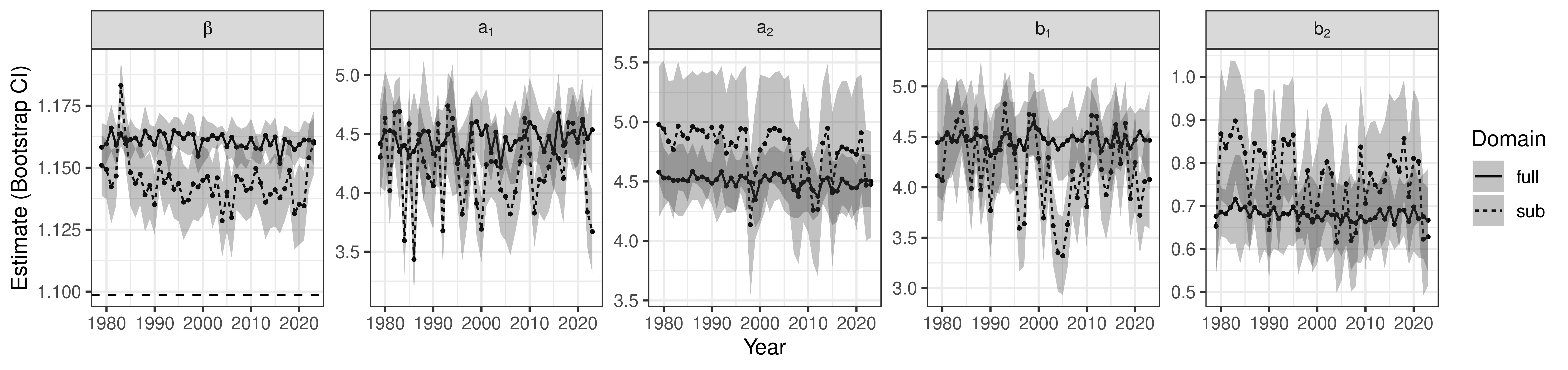}
        \caption{Estimated parameters of the hidden Potts model for Arctic sea‐ice proportion over time, and over both the full domain (solid lines) and the subdomain of interest (dashed lines). Headings above each panel give the model parameter.}
        \label{fig:plot2}
    \end{subfigure}
    
	\vspace{0.5cm} 
    \begin{subfigure}[b]{\linewidth}
        \centering
        \includegraphics[width=\linewidth]{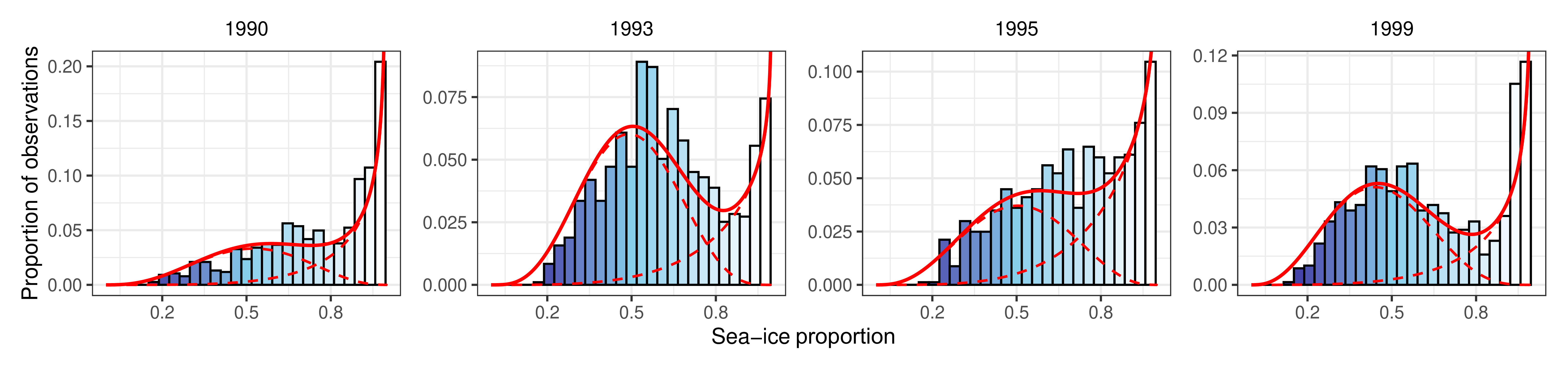}
        \caption{
		Histograms of observed Arctic sea‐ice proportions in the subdomain of interest, restricted to the open interval \mbox{(0, 1)}, together with fitted Beta mixture densities for selected years (see Figure~\ref{fig:sea_ice:histograms} for the corresponding plots over the entire domain). Red solid lines show the overall fitted mixture densities, and red dashed lines show the individual Beta components. 
		}
        \label{fig:plot3}
    \end{subfigure}
    
    \caption{Evidence of spatial nonstationarity in the Arctic sea-ice proportions analyzed in Section~\reffmain{sec:application}.}
    \label{fig:sea_ice_nonstationarity}
\end{figure}

\clearpage
{\small \putbib[bibliography]}
\end{bibunit}

}{}

\end{document}